\documentclass[journal]{IEEEtran}


\usepackage[english]{babel}
\usepackage{booktabs}

\usepackage{ifpdf}

\usepackage{cite} 
\usepackage{url}
\usepackage{hyperref}
\usepackage[framemethod=default]{mdframed}

\ifCLASSINFOpdf
	\usepackage[pdftex]{graphicx}
	\graphicspath{{./figures/}}
\else
	\usepackage[dvips]{graphicx}
	\graphicspath{./figures/}
\fi
\usepackage{color}

\usepackage{pgf, tikz, pgfplots}
\usetikzlibrary{shapes, arrows, automata}
\usetikzlibrary{calc,hobby,decorations}

\usepackage[cmex10]{amsmath}
\usepackage{amsfonts, amssymb, amsthm}
\usepackage{mathrsfs}
\usepackage{tcolorbox}


\usepackage{algorithm} 
\usepackage{algorithmic}
\usepackage{enumerate}
\usepackage{multirow}
\usepackage{rotating}
\usepackage{subcaption}
	\captionsetup[sub]{font=footnotesize}
	\captionsetup[figure]{font=small,labelsep=period,subrefformat=parens}
	\captionsetup{labelfont=bf}

\hyphenation{op-tical net-works semi-conduc-tor}

\usepackage{needspace}





\input{mySymbol.sty}

\definecolor{penndarkestblue}{cmyk}{1,0.74,0,0.77}
\definecolor{penndarkerblue}{cmyk}{1,0.74,0,0.70}
\definecolor{pennblue}{cmyk}{0.99,0.66,0,0.57} 
\definecolor{pennlighterblue}{cmyk}{0.98,0.44,0,0.35}
\definecolor{pennlightestblue}{cmyk}{0.38,0.17,0,0.17} 

\definecolor{penndarkestred}{cmyk}{0,1,0.89,0.66}
\definecolor{penndarkerred}{cmyk}{0,1,0.88,0.55}
\definecolor{pennred}{cmyk}{0,1,0.83,0.42} 
\definecolor{pennlighterred}{cmyk}{0,1,0.6,0.24}
\definecolor{pennlightestred}{cmyk}{0,0.43,0.26,0.12} 

\definecolor{penndarkestgreen}{cmyk}{1,0,1,0.68}
\definecolor{penndarkergreen}{cmyk}{1,0,1,0.57}
\definecolor{penngreen}{cmyk}{1,0,1,0.44} 
\definecolor{pennlightergreen}{cmyk}{1,0,1,0.25}
\definecolor{pennlightestgreen}{cmyk}{0.43,0,0.43,0.13}

\definecolor{penndarkestorange}{cmyk}{0,0.65,1,0.49}
\definecolor{penndarkerorange}{cmyk}{0,0.65,1,0.33}
\definecolor{pennorange}{cmyk}{0,0.54,1,0.24} 
\definecolor{pennlighterorange}{cmyk}{0,0.32,1,0.13}
\definecolor{pennlightestorange}{cmyk}{0,0.15,0.46,0.06}
	
\definecolor{penndarkestpurple}{cmyk}{0,1,0.11,0.86}
\definecolor{penndarkerpurple}{cmyk}{0,1,0.13,0.82}
\definecolor{pennpurple}{cmyk}{0,1,0.11,0.71} 
\definecolor{pennlighterpurple}{cmyk}{0,1,0.05,0.46}
\definecolor{pennlightestpurple}{cmyk}{0,0.35,0.02,0.23}
	
\definecolor{pennyellow}{cmyk}{0,0.20,1,0.05} 
\definecolor{pennlightgray1}{cmyk}{0,0,0,0.05}
\definecolor{pennlightgray3}{cmyk}{0.01,0.01,0,0.18}
\definecolor{pennmediumgray1}{cmyk}{0.04,0.03,0,0.31}
\definecolor{pennmediumgray4}{cmyk}{0.08,0.06,0,0.54}
\definecolor{penndarkgray2}{cmyk}{0.09,0.07,0,0.71}
\definecolor{penndarkgray4}{cmyk}{0.1,0.1,0,0.92}

\def\SO3{\mathrm{SO(3)}}

\newtheorem{assumption}{\hspace{0pt}\bf Assumption \hspace{-0.15cm}}

\newtheorem{theorem}{\hspace{0pt}\bf Theorem}

\newtheorem{remark}{\hspace{0pt}\bf Remark}

\newtheorem{definition}{\hspace{0pt}\bf Definition}

\newenvironment{defBox}[1]
{
	\begin{mdframed}[hidealllines=false,backgroundcolor=gray!10]\textbf{#1.}
	}
	{
	\end{mdframed}
}

\begin{document}

\title{On the Trade-Off between Stability and Representational\! Capacity\! in\! Graph\! Neural\! Networks}

\author{\IEEEauthorblockN{Zhan Gao$^{ \dagger}$, Amanda Prorok$^{ \dagger}$ and Elvin Isufi$^{\ddagger }$}\\

%
\thanks{$^{\dagger}$Department of Computer Science and Technology, University of Cambridge, Cambridge, UK (Email: $\{$zg292, asp45$\}$@cam.ac.uk). $^{\ddagger }$Department of Intelligent Systems, Delft University of Technology, Delft, The Netherlands (Email: e.isufi-1@tudelft.nl). The work of Z. Gao and A. Prorok is supported by European Research Council (ERC) Project 949940 (gAIa). The work of E. Isufi is supported by the TU Delft AI Labs programme.}
}

\maketitle

\begin{abstract}
Analyzing the stability of graph neural networks (GNNs) under topological perturbations is key to understanding their transferability and the role of each architecture component. However, stability has been investigated only for particular architectures, questioning whether it holds for a broader spectrum of GNNs or only for a few instances. To answer this question, we study the stability of EdgeNet: a general GNN framework that unifies more than twenty solutions including the convolutional and attention-based classes, as well as graph isomorphism networks and hybrid architectures. We prove that all GNNs within the EdgeNet framework are stable to topological perturbations. By studying the effect of different EdgeNet categories on the stability, we show that GNNs with fewer degrees of freedom in their parameter space, linked to a lower representational capacity, are more stable. The key factor yielding this trade-off is the eigenvector misalignment between the EdgeNet parameter matrices and the graph shift operator. For example, graph convolutional neural networks that assign a single scalar per signal shift (hence, with a perfect alignment) are more stable than the more involved node or edge-varying counterparts. Extensive numerical results corroborate our theoretical findings and highlight the role of different architecture components in the trade-off. 
\end{abstract}

\begin{IEEEkeywords}
Graph neural networks, graph filters, stability property, topological perturbations
\end{IEEEkeywords}

\IEEEpeerreviewmaketitle

\section{Introduction}

Graph neural networks (GNNs) leverage the underlying data topology as an inductive bias to learn task-relevant representations in networked systems \cite{henaff2015deep, defferrard2016convolutional, gama2018convolutional, xupowerful, gao2022wide, wu2022graph, 10239277} and have found wide applications in multi-agent coordination \cite{gao2023Online, tolstaya2020learning, gao2023environment, 10251921}, recommendation systems \cite{fan2019graph, wu2019session} and wireless communications \cite{gao2020resource, chowdhury2021unfolding, gao2023decentralized1}. However, the topology on which GNNs are trained often changes during inference because of adversarial attacks \cite{jin2021adversarial}, estimation errors \cite{mei2015signal}, communication losses \cite{xu2004wireless} and mobility \cite{li2020graph}. This creates a mismatch between training and testing setups, ultimately, leading to degraded and unforeseeable behavior. Characterizing the stability of GNNs to topological perturbations is paramount as it sheds light on the transferability of the learned model, identifies the effects of architectural factors, and provides a handle to control the degradation \cite{gao2021training}. 

Stability of GNNs has attracted an increasing attention in the community. The works in \cite{gama2019diffusion, zou2020graph} studied the stability of non-trainable GNNs with graph wavelet filters to eigenvalue / eigenvector perturbations of the underlying graph under the diffusion distance \cite{coifman2006diffusion}. Subsequently, \cite{levie2021transferability} characterized the stability of graph convolutional neural networks (GCNNs) and established that GCNNs model similar representations on graphs that describe the same phenomenon. The work in \cite{gama2020stability} focused on relative perturbations, and showed that GCNNs are stable and discriminative at high graph frequencies, while \cite{kenlay2021stability, kenlay2021interpretable, wang2022graph} investigated the stability of GCNNs to structural perturbations with edge deletion or addition. The works in \cite{gao2021stochastic, gao2021stability} extended the stability results to stochastic perturbations using random edge sampling, while \cite{keriven2020convergence, ruiz2021graphon, maskey2023transferability} analyzed the stability with large random graphs based on a graphon analysis. Such findings have been used to improve the robustness of GNNs via spectral regularization \cite{9054341, 10051821} and constrained learning \cite{arghal2022robust, cervino2022training, gao2023learningSto}.

Stability of GNNs has also been studied under an adversarial attack perspective. The work in \cite{zugner2018adversarial, blumenkamp2021emergence} designed adversarial attacks on simple GCNNs with first-order filters \cite{kipfsemi}, to change edge and signal values for misclassifying the target node label, whereas \cite{dai2018adversarial} used reinforcement learning to generate adversarial attacks in node and graph classification tasks. The works in \cite{bojchevski2019certifiable, zugner2019certifiable} analyzed the stability of GCNNs under adversarial attacks in node classification problems, which focused on a subset of graph edges and showed that nodes will not change the learned label under these attacks. The work in \cite{ma2021graph} proposed an edge rewiring operation to perform adversarial attacks, and \cite{chang2021not} studied the vulnerability against adversarial attacks in the graph spectrum; the latter identified a certain low-frequency interval that is more robust to structural perturbations. 

The above works study the stability for particular GNN architectures, which leaves unanswered two questions: 

\smallskip
(i) \emph{Are general GNN families stable as a whole or is stability a property only of particular architectures?} 

\smallskip
(ii) \emph{What are the implications of achieving stability across the different GNN architectures?} 

\smallskip
\noindent Answering the first question is key to showing whether we can achieve stability across a broad spectrum of GNN architectures without restrictions to specific tasks, as well as to identifying the advantages and limitations of each particular architecture. Instead, answering the second question allows comparing the different GNN architectures not only in terms of representational capacity but also in terms of stability, ultimately, providing insights about the architecture selection for real-world applications.

To answer these questions, we consider the family of edge varying GNNs (EdgeNets), which is a general framework that unifies several GNN solutions, including GCNNs, ChebNets, GATs, GINs, and NGNs, among others \cite{isufi2021edgenets}. EdgeNet is a multi-layer architecture, where each layer considers multi-hop neighborhood information, and the architecture parameters are specific to each node, neighbor, and signal shift. Different subclasses of EdgeNets impose different restrictions on the architecture parameters to satisfy particular properties such as permutation equivariance and inductive transference, leading to different GNNs. Hence, proving the stability of EdgeNets guarantees the stability of all its subclasses, for most of which such results do not exist. Moreover, characterizing the stability of a general framework allows facilitating comparisons among the different GNNs and identifying the consequence of increasing the architecture parameters. Targeting this setting, we make the following contributions:
\begin{enumerate}
	\item We study the stability of a general GNN framework, i.e., EdgeNet, and identify the effects of different architecture factors. We prove that the stability holds for all GNNs in the EdgeNet framework, and the established results apply uniformly for all graphs and signals. 
	\item We show that subclasses of EdgeNets with more restrictions on the architecture parameters, i.e., lower degrees of freedom in the parameter space, are more stable to topological perturbations (e.g., GCNNs), while subclasses with higher degrees of freedom are less stable (e.g., GATs). This analysis yields an explicit trade-off between representational capacity and stability among the different GNNs, thereby identifying the penalty we pay for enhancing the expressive power of GNNs. 
	\item We show that the key factor affecting the stability-representational capacity trade-off across the different GNN architectures is the eigenvector misalignment between the edge varying parameter matrices in EdgeNets and the underlying graph shift operator. The more these eigenvectors align, the stronger the stability. This reveals the inherent reason behind the stability changes among the different GNNs. 
\end{enumerate}

The rest of this paper is organized as follows. Section \ref{sec:EdgeNet} introduces the EdgeNet and its subclasses, and illustrates how the general framework represents a variety of GNN architectures. Section \ref{sec:StabilitySub} studies the stability of the EdgeNet subclasses and analyzes the effects of architectural factors. Section \ref{sec:stabilityEdgeGF} extends the stability analysis to the general EdgeNets and puts forth an explicit trade-off between the representational capacity and stability. Numerical experiments corroborate the theoretical findings on both synthetic and real data in Section \ref{sec:experiments}, and the conclusions are drawn in Section \ref{sec:conclusion}. All proofs are collected in the supplementary material.

\section{Edge Varying Graph Neural Networks}\label{sec:EdgeNet}

Consider a graph $\ccalG=(\ccalV, \ccalE)$ of $n$ nodes $\ccalV = \{ 1, \ldots, n \}$ and $m$ edges $\ccalE = \{ (i,j) \}$ with the graph shift operator $\bbS$ (e.g., Laplacian, adjacency matrix) such that $[\bbS]_{ij} \neq 0$ iff $(i,j)\in \ccalE$ or $i=j$. Data supported on $\ccalG$ is a vector $\bbx \in \mathbb{R}^n$, referred to as the graph signal, where the $i$th entry $[\bbx]_i$ is the datum of node $i$ \cite{ortega2018graph}. The graph neural network (GNN) is an information processing architecture that learns task-relevant representations from the tuple $(\bbS, \bbx)$ \cite{isufi2022graph}. 

The edge varying graph neural network (EdgeNet) is a general framework that unifies different GNN architectures. It comprises cascaded layers, where each layer is composed of edge varying graph filters (EdgeGFs) followed by a pointwise nonlinearity \cite{isufi2021edgenets}. More specifically, the EdgeGF is a linear mapping of graph signals $\bbH_{{\rm Edge}}(\bbx, \bbS): \mathbb{R}^n \to \mathbb{R}^n$ of the form 
\begin{align}\label{eq:EdgeNet0}
	\bbH_{{\rm Edge}}(\bbx, \bbS) 
     := \sum_{k=0}^K \bbPhi^{(k)}\bbS^k\bbx, 
\end{align}
where $\{\bbPhi^{(k)} \in \mathbb{R}^{n \times n}\}_{k=0}^K$ are the filter parameter matrices sharing the support of $\bbS + \bbI$. This filter diffuses $\bbx$ over $\bbS$ to collect multi-hop neighborhood information up to a radius $K$ and leverages edge-dependent weights $\bbPhi^{(k)}$ that give different importance to the information coming from the different neighbors.

In an EdgeNet, each layer has $F$ input features $\{\bbx_{\ell-1}^f\}^{F}_{f=1}$ for $\ell=1,...,L$. These inputs are processed by $F^2$ EdgeGFs $\{\bbH_{{\rm Edge},\ell}^{fg}(\cdot, \bbS)\}_{f,g=1}^F$, aggregated over the input index $g$, and passed through the nonlinearity $\sigma(\cdot)$ to generate the $F$ output features 
\begin{align}\label{eq:EdgeNet2}
	\bbx_\ell^f = \sigma\Big(\sum_{g=1}^{F} &\bbH_{{\rm Edge}, \ell}^{fg}(\bbx_{\ell-1}^g, \bbS) \Big)~\for~f=1,\ldots,F,\\
	\label{eq:EdgeNet3}\bbx_0 &= \bbx,~ \bbx_L = \bbPsi_{\rm Edge}\big(\bbx,\bbS,\ccalH\big),
\end{align}
where $\bbPsi_{\rm Edge}\big(\bbx,\bbS,\ccalH\big): \mathbb{R}^n \to \mathbb{R}^n$ represents the nonlinear mapping of the EdgeNet and the set $\ccalH$ groups all filter parameter matrices. 

The EdgeNet assigns different weights to different edges at different hops. It unifies several widely used GNN solutions by imposing different 
restrictions on the edge varying parameter matrices $\bbPhi^{(k)}$ [cf. \eqref{eq:EdgeNet0}]. This allows establishing a theoretical analysis for a broad spectrum of GNN architectures, i.e., results that hold for general EdgeNets apply to any GNNs in the framework. To categorize the different types of GNNs in the EdgeNet framework, we propose two subclasses of EdgeNets: (i) Shift-Invariant EdgeNets (SI-EdgeNets) and (ii) Eigenvector-Sharing EdgeNets (ES-EdgeNets). 

\subsection{Shift-Invariant EdgeNets}

Shift-Invariant EdgeNets (SI-EdgeNets) consist of EdgeGFs that are invariant to the graph shift operation, i.e., $\bbH_{\rm Edge}(\bbS\bbx, \bbS) = \bbS \bbH_{\rm Edge}(\bbx, \bbS)$. This property requires the eigenvectors of the edge weight matrices $\{\bbPhi^{(k)}\}_{k=0}^K$ to coincide with those of the underlying graph $\bbS$. Specifically, let $\bbS = \bbV \bbLambda \bbV^\top$ be the eigendecomposition with eigenvectors $\bbV = [\bbv_1,...,\bbv_n]$ and eigenvalues $\bbLambda = \text{diag} (\lambda_1,...,\lambda_n)$. The shift invariance is equivalent to requiring the parameter matrices $\{\bbPhi^{(k)}\}_{k=0}^K$ to be fixed-support matrices that are diagonalizable w.r.t. 
$\bbV$ \cite{Coutino2019}, i.e., 
\begin{align}\label{eq:fixedSupportSIEdgeGF}
	\bbPhi^{(k)} \!\!\in\! \Omega_\bbV^{\ccalA} \!=\! \{ \bbPhi\!: \bbPhi \!=\! \bbV \bbphi \bbV^{-1}\!, [{\rm vec}(\bbPhi)]_i \!=\! 0,\!~\text{for}~\!i \!\in\! \ccalA \}
\end{align}
for $k=0,1,...,K$. Here, $\ccalA$ is the index set defining the zero entries of $\bbS + \bbI$, ${\rm vec}(\bbPhi)$ vectorizes the matrix $\bbPhi$ to an $n^2$-dimensional vector, and $\bbphi = \text{diag} (\phi_1,...,\phi_n)$ is any diagonal matrix. While \eqref{eq:fixedSupportSIEdgeGF} does not hold in general, there exists a subset of EdgeGFs that satisfies this condition -- see Appendix \ref{appendix:shiftInvariance}. By defining the Shift-Invariant EdgeGF (SI-EdgeGF) as 
\begin{align}\label{eq:SIEVGF}
	\bbH_{\rm SI}(\bbx, \bbS) \!:=\! \sum_{k=0}^{K}\! \bbPhi^{(k)}\bbS^k\bbx,~\bbPhi^{(k)} \!\in\! \Omega_\bbV^{\ccalA}~\for~k =0,...,K,
\end{align}
we construct the SI-EdgeNet following \eqref{eq:EdgeNet2}-\eqref{eq:EdgeNet3}.

Graph convolutional neural networks (GCNNs) are in the subclass of SI-EdgeNets, which particularize the edge weight matrices to the scaled identity matrices $\{\bbPhi^{(k)} = h^{(k)}\bbI\}_{k=0}^{K}$ \cite{gama2018convolutional}. This yields the graph convolutional filters (GCFs) 
\begin{align}\label{eq:GCF}
	\bbH_{\rm SI, Conv}(\bbx,\bbS) = \sum_{k=0}^{K} h^{(k)} \bbS^{k} \bbx. 
\end{align}
GCNNs assign the same weight to graph edges and thus, restrict the space of the architecture parameters. On the other hand, GCNNs are transferable to unseen graphs because scalar parameters are independent of the graph size, and are permutation equivariant because the filter aggregation of GCFs are pointwise [cf. \eqref{eq:GCF}] \cite{gama2020stability}. The former property facilitates generalization, which allows applying the trained model on different graphs during testing. The latter property facilitates data efficiency during training, since learning how to process one instance of the signal is equivalent to learning how to process all permutated versions of the same signal. Other GNNs in the SI-EdgeNet subclass include GCNs \cite{kipfsemi}, ChebNets \cite{defferrard2016convolutional}, JKNets \cite{xu2018representation}, and GINs \cite{xupowerful}, among others \cite{isufi2021edgenets}. 

\begin{figure}[t]
	\centering
	\includegraphics[width=0.85\linewidth , height=0.45\linewidth, trim=20 0 20 20]{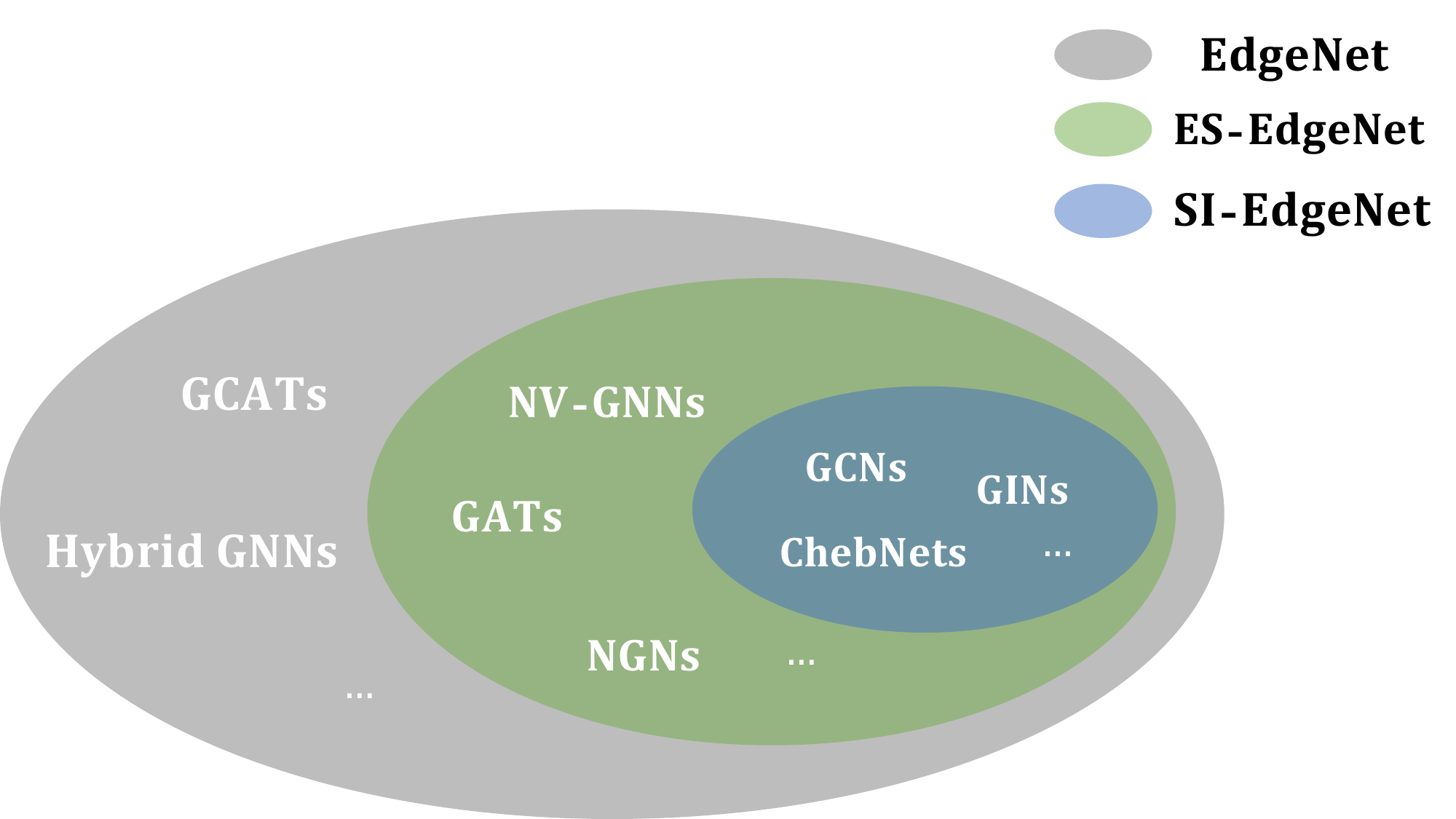}
	\caption{Relationship between EdgeNet [cf. \eqref{eq:EdgeNet0}], ES-EdgeNet [cf. \eqref{eq:EIEVGF}] and SI-EdgeNet [cf. \eqref{eq:SIEVGF}]. EdgeNet contains the subclass of ES-EdgeNet and ES-EdgeNet contains the subclass of SI-EdgeNet. The DoFs of the parameter space and the representational capacity increase from SI-EdeNet, ES-EdgeNet to the general EdgeNet given the same number of layers $L$ and per-layer features $F$.}
	\label{fig:set}
\end{figure}

\subsection{Eigenvector-Sharing EdgeNets}
Eigenvector-Sharing EdgeNets (ES-EdgeNets) require the edge weight matrices $\{\bbPhi^{(k)}\}_{k=0}^K$ to share the eigenvectors among them but they do not necessarily coincide with the eigenvectors of the graph shift operator $\bbS$. Specifically, let $\bbU$ be the eigenvectors of $\{\bbPhi^{(k)}\}_{k=0}^K$ that are different from the eigenvectors $\bbV$ of $\bbS$. Define also the set of fixed-support matrices that are diagonalizable w.r.t. $\bbU$ as [cf. \eqref{eq:fixedSupportSIEdgeGF}] 
\begin{align}\label{eq:fixedSupportESEdgeGFW}
	\Omega_\bbU^{\ccalA} \!=\! \{ \bbPhi: \bbPhi \!=\! \bbU \bbphi \bbU^{-1}, [{\rm vec}(\bbPhi)]_i \!=\! 0,~\text{for}~i \!\in\! \ccalA \},
\end{align}
and the union of the sets w.r.t. all possible eigenvectors $\bbU$ as
\begin{align}\label{eq:fixedSupportESEdgeGFW1}
	\Omega^{\ccalA} = \big\{ \cup_{\bbU} \Omega_\bbU^{\ccalA},~\forall~\bbU \big\}.
\end{align}
The Eigenvector-Sharing EdgeGF (ES-EdgeGF) is defined with the edge-weight matrices $\bbPhi^{(k)}$ selected from one of the fixed-support matrix set $\Omega^{\ccalA}_{\bbU}$ in $\Omega^{\ccalA}$, i.e.,
\begin{align}\label{eq:EIEVGF}
	\bbH_{\rm ES}(\bbx, \bbS) \!:=\!\! \sum_{k=0}^K\! \bbPhi^{(k)}\bbS^k\bbx,~\bbPhi^{(k)} \!\in\! \Omega_{\bbU}^{\ccalA}~\for~k\!=\!0,...,K,
\end{align}
which constructs the ES-EdgeNet following \eqref{eq:EdgeNet2}-\eqref{eq:EdgeNet3}. ES-EdgeNets extend the parameter space of SI-EdgeNets by allowing the eigenvectors of $\bbPhi^{(k)}$ to differ from those of $\bbS$. 

Node varying graph neural networks (NV-GNNs) are one instance of ES-EdgeNets, which particularize the edge weight matrices to the diagonal matrices $\{\bbPhi^{(k)} = \bbD^{(k)}\}_{k=0}^{K}$ \cite{gama2022node}. This yields the node varying graph filters (NV-GFs) 
\begin{align}\label{eq:NVGF}
	\bbH_{\rm ES, NV}(\bbx,\bbS) = \sum_{k=0}^{K} \bbD^{(k)} \bbS^{k} \bbx.
\end{align}
NV-GNNs aggregate neighborhood information in a way that they treat node features differently, which increases the number of architecture parameters. However, NV-GNNs are not permutation equivariant and are only transferable to graphs with the same number of nodes because parameters are node-wise. Other GNNs in the ES-EdgeNet subclass include GATs \cite{velivckovicgraph}, and NGNs \cite{de2020natural}, among others \cite{isufi2021edgenets}. 

In this work, we consider the degrees of freedom (DoFs) of the parameter space as a proxy for the representational capacity of GNNs. Thus, this representational capacity increases from SI-EdgeNets, ES-EdgeNets to the general EdgeNets -- see Fig. \ref{fig:set}. The goal is to investigate the stability of different GNNs in the framework of EdgeNets to graph perturbations, and establish connections between these GNN solutions in terms of stability and representational capacity.


\section{Stability for Subclasses of EdgeNets}\label{sec:StabilitySub}

The underlying graph topology $\bbS$ is often perturbed due to estimation errors, link losses, or adversarial attacks. GNNs will generate representations on the perturbed graph $\widetilde{\bbS}$ during testing, which deviate from the representations learned on the underlying graph $\bbS$ during training. The stability analysis characterizes the effect of graph perturbations on these representations. It provides insights to understanding the robustness of GNNs and sheds light on our handle to reduce the perturbation effects. We aim to establish stability results that hold for all GNNs in the EdgeNet framework, rather than for particular instances such as GCNNs in \cite{gama2020stability}. This also allows to show how stability changes across different GNNs, and to identify its inherent trade-off with the representational capacity. 

We study the effect of graph perturbations on GNNs w.r.t. the \textit{relative} perturbation model.
\begin{definition}[Relative Perturbation \cite{gama2020stability}]\label{def:relativePerturbation}
    Given the underlying graph $\bbS$, the set of relative perturbations is defined as
    \begin{align}\label{eq:relativePerturbation}
        \ccalE(\bbS) = \{ \bbE \in \mathbb{R}^{n \times n}: \widetilde{\bbS} = \bbS + \bbE \bbS + \bbS \bbE, \bbE = \bbE^\top \},
    \end{align}
    where $\widetilde{\bbS}$ is the perturbed graph. 
\end{definition}
\noindent The perturbation matrix $\bbE$ is symmetric and the perturbation size $\|\bbE\|_2$ measures the dissimilarity between the underlying graph $\bbS$ and the perturbed graph $\widetilde{\bbS}$. Such a model ties the perturbation size to the graph topology through the multiplication of $\bbE$ with $\bbS$ and thus, captures the structural information. For example, a small $\bbE$ may result in a large perturbation $\bbE \bbS + \bbS \bbE$ if the underlying graph $\bbS$ is dense with large edge weights. The latter is not possible if considering the absolute perturbation $\widetilde{\bbS} = \bbS + \bbE$ -- see \cite{gama2020stability}.\footnote{Characterizing stability for the relative perturbation model is also more challenging than the absolute one as discussed in \cite{gama2020stability}. Our results could be proven also for the absolute perturbation but we do not detail them here.}

\subsection{Stability of Shift-Invariant EdgeGFs}\label{subsec:stabilitySIGF}

We conduct the stability analysis in the graph spectral domain \cite{ortega2018graph} to establish results that hold uniformly for any graphs. In particular, the graph Fourier transform (GFT) expands the graph signal $\bbx$ over the eigenvectors $\bbV$ of $\bbS$ as $\hat{\bbx} = \sum_{i=1}^n \hat{x}_i \bbv_i$. By substituting the GFT into the SI-EdgeGF [cf. \eqref{eq:SIEVGF}] and using the fact that $\{\bbPhi^{(k)}\}_{k=0}^K$ share the same eigenvectors $\bbV$, the filter output is 
\begin{equation}\label{eq:FilterGFT}
	\bbH_{\rm SI}(\bbx, \bbS) = \bby = \sum_{i=1}^n \hat{x}_i \sum_{k=0}^K \phi_i^{(k)} \lambda_i^k \bbv_i, 
\end{equation}
where $\{\phi_i^{(k)}\}_{i=1}^n$ and $\{\lambda_i\}_{i=1}^n$ are the eigenvalues of $\bbPhi^{(k)}$ and $\bbS$. By applying the GFT on the output $\bby = \sum_{i=1}^n\hat{y}_i \bbv_i$, we have the spectral input-output relation $\hat{y}_i = \sum_{k=0}^K \phi_i^{(k)} \lambda_i^k \hat{x}_i$ for $i=1,\ldots,n$. That is, we have $n$ different spectral responses, each of which operates on a different graph frequency $\lambda_i$ with parameters determined by the eigenvalues 
$\{\phi_i^{(k)}\}_{k=0}^K$ of the edge weight matrices. This motivates to define the shift-invariant filter frequency response for the SI-EdgeGF. 
\begin{definition}[Shift-invariant filter frequency response]
	Consider an SI-EdgeGF \eqref{eq:SIEVGF} with the edge weight matrices $\{\bbPhi^{(k)}\}_{k=0}^K$. Let $\{\phi_i^{(k)}\}_{i=1}^n$ be the eigenvalues of $\bbPhi^{(k)}$ for $k=0,\ldots,K$. The shift-invariant filter frequency response is a collection of $n$ index-specific analytic functions of the form 
	\begin{align}\label{eq:SIfrequencyResponse}
		h_i(\lambda) := \sum_{k=0}^K \phi_i^{(k)} \lambda^k,~\for~i=1,\ldots,n,
	\end{align}
    where $\{\phi_i^{(k)}\}_{k=0}^K$ are function parameters that determine the function shape, and $\lambda$ is the function variable. 
\end{definition}
\noindent The edge weight matrices $\{\bbPhi^{(k)}\}_{k=0}^K$ instantiate the function parameters of $h_i(\lambda)$ with the eigenvalues $\{\phi_i^{(k)}\}_{k=0}^K$, while the underlying graph $\bbS$ specifies the frequency variable $\lambda$ of $h_i(\lambda)$ on the eigenvalues $\lambda_i$ for $i=1,...,n$ -- see Fig. \ref{subfigfb}. 

The perturbed graph $\widetilde{\bbS}$ will instantiate the frequency variable $\lambda$ on the perturbed eigenvalues $\{\widetilde{\lambda}_i\}_{i=1}^n$, which indicates that the variability of the frequency responses $\{h_i(\lambda)\}_{i=1}^n$ plays an important role in the filter's stability. We introduce the integral Lipschitz SI-EdgeGF to characterize this spectrum variability.  
\begin{definition}[Integral Lipschitz SI-EdgeGF]\label{def:LipschitzFilterFrequencyResponse}
	Consider an SI-EdgeGF with the shift-invariant filter frequency response $\{h_i(\lambda)\}_{i=1}^n$ [cf. \eqref{eq:SIfrequencyResponse}]. The filter is integral Lipschitz if there exists a constant $C_{L}>0$ such that for any $\lambda_1, \lambda_2$, it holds that
	\begin{equation}\label{eq:LipschitzFrequencyResponse}
		\Big|\frac{\lambda_1+\lambda_2}{2} \cdot \frac{h_i(\lambda_1) - h_i(\lambda_2)}{\lambda_1 - \lambda_2}\Big| \le C_{L},~\text{for}~i=1,...,n. 
	\end{equation}
\end{definition}
\noindent The integral Lipschitz SI-EdgeGF is equivalent to requiring that the derivative of the shift-invariant filter frequency response satisfies 
\begin{align}
	|\lambda h_i'(\lambda)| \le C_{L},~\text{for~all}~i=1,...,n.
\end{align}
It restricts the variability of the frequency responses in a way that $h_i(\lambda)$ may change rapidly in low frequencies, 
i.e., small $\lambda$, but tends to be flat in high frequencies, i.e., large $\lambda$. This condition is reminiscent of the scale invariance of wavelet transforms \cite{hammond2011wavelets, shuman2015spectrum, gama2020stability}. We now characterize the stability of SI-EdgeGFs.
\begin{theorem}[Stability of SI-EdgeGF]\label{theorem:SIEVfilterStability}
    Consider the SI-EdgeGF $\bbH_{\rm SI}(\bbx, \bbS)$ [cf. \eqref{eq:FilterGFT}]. Let $h_i(\lambda)$ be the shift-invariant filter frequency response that is integral Lipschitz w.r.t. $C_{L}$ [cf. \eqref{eq:LipschitzFrequencyResponse}] and satisfies $|h_i(\lambda)| \le 1$ for $i=1,\ldots,n$, and $\widetilde{\bbS}$ be the perturbed graph with perturbation size $\epsilon$ [Def. \ref{def:relativePerturbation}]. Then, for any signal $\bbx$, it holds that
	\begin{equation}\label{eq:SIEVFilterStability}
		\| \bbH_{\rm SI}(\bbx, \bbS) \!-\! \bbH_{\rm SI}(\bbx, \widetilde{\bbS}) \|_2 \!\le\! C_{\rm SI} \|\bbx\|_2 \eps \!+\! \ccalO\big(\eps^2\big),
	\end{equation}
	where $C_{\rm SI} = 2 \sqrt{n} C_L$ is the stability constant. 
\end{theorem}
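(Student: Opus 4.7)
The plan is to linearise the difference $\bbH_{\rm SI}(\bbx,\bbS)-\bbH_{\rm SI}(\bbx,\widetilde{\bbS})$ in the perturbation size $\eps$, diagonalise everything in the common eigenbasis $\bbV$ shared by $\bbS$ and all the parameter matrices $\{\bbPhi^{(k)}\}\subset\Omega_\bbV^{\ccalA}$ (this is the whole point of the definition \eqref{eq:fixedSupportSIEdgeGF}), and then coax the resulting spectral matrix into a divided-difference form to which the integral Lipschitz hypothesis \eqref{eq:LipschitzFrequencyResponse} can be applied entrywise.

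Concretely, I would first expand the $k$th power under the relative model \eqref{eq:relativePerturbation} as
\begin{equation*}
\widetilde{\bbS}^{k}=\bbS^{k}+\sum_{r=0}^{k-1}\bbS^{r}(\bbE\bbS+\bbS\bbE)\bbS^{k-1-r}+\ccalO(\eps^{2}),
\end{equation*}
where the remainder groups products of at least two copies of $\bbE$ and is uniformly controlled using $\|\bbE\|_2=\eps$, $\|\bbS\|_2$, and the hypothesis $|h_i(\lambda)|\le 1$ (which bounds $\|\bbH_{\rm SI}\|_2$ in the eigenbasis). Subtracting the zeroth-order terms leaves
\begin{equation*}
\bbH_{\rm SI}(\bbx,\bbS)-\bbH_{\rm SI}(\bbx,\widetilde{\bbS})=-\sum_{k=0}^{K}\bbPhi^{(k)}\sum_{r=0}^{k-1}\bbS^{r}(\bbE\bbS+\bbS\bbE)\bbS^{k-1-r}\bbx+\ccalO(\eps^{2}).
\end{equation*}
Setting $\tbE:=\bbV^{\top}\bbE\bbV$, $\hbx:=\bbV^{\top}\bbx$, and inserting $\bbS=\bbV\bbLambda\bbV^{\top}$ together with $\bbPhi^{(k)}=\bbV\bbphi^{(k)}\bbV^{\top}$ and $\bbphi^{(k)}=\diag(\phi_1^{(k)},\ldots,\phi_n^{(k)})$, the first-order term equals $-\bbV\bbM\hbx$. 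An entrywise computation using the geometric identity $\sum_{r=0}^{k-1}a^{r}b^{k-1-r}=(a^{k}-b^{k})/(a-b)$ then collapses the double sum and, after summing over $k$ and recognising $\sum_{k}\phi_{i}^{(k)}\lambda^{k}=h_{i}(\lambda)$, produces
\begin{equation*}
[\bbM]_{ij}=(\lambda_{i}+\lambda_{j})\,\frac{h_{i}(\lambda_{i})-h_{i}(\lambda_{j})}{\lambda_{i}-\lambda_{j}}\,\tilde{E}_{ij},
\end{equation*}
with the degenerate case $\lambda_{i}=\lambda_{j}$ handled by the limit $2\lambda_{i}h_{i}'(\lambda_{i})\tilde{E}_{ii}$.

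At this point the integral Lipschitz hypothesis is tailor-made: it forces $|[\bbM]_{ij}|\le 2C_{L}|\tilde{E}_{ij}|$. The main obstacle, and the very reason the $\sqrt{n}$ appears in $C_{\rm SI}$, is converting this pointwise control into an operator-norm bound on $\bbM$. Because each $h_{i}$ is allowed to differ, $\bbM$ is not diagonal in any basis, so I would be forced to route through Frobenius: $\|\bbM\|_{2}\le\|\bbM\|_{F}\le 2C_{L}\|\tbE\|_{F}\le 2C_{L}\sqrt{n}\,\eps$, using unitary invariance of $\|\cdot\|_{F}$ together with $\|\tbE\|_{F}\le\sqrt{n}\|\bbE\|_{2}$. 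Since $\bbV$ is orthogonal, $\|\bbV\bbM\hbx\|_{2}\le\|\bbM\|_{2}\|\bbx\|_{2}$, which assembles into the stated bound with $C_{\rm SI}=2\sqrt{n}\,C_{L}$. Notably this $\sqrt{n}$ is exactly the quantitative footprint of permitting index-specific frequency responses: for the graph convolutional filter all $h_{i}$ coincide, $\bbM$ is diagonal in $\bbV$, and the Frobenius detour disappears, which is precisely the trade-off the paper goes on to formalise.
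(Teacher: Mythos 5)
Your proposal is correct and reaches the stated constant, but the final norm-accounting step is genuinely different from the paper's. The core of the argument is the same in both: expand $\widetilde{\bbS}^k$ to first order in $\bbE$ under the relative model, reduce the spectral coefficient to the divided difference $(\lambda_i+\lambda_j)\,\tfrac{h_i(\lambda_i)-h_i(\lambda_j)}{\lambda_i-\lambda_j}$ (with the limit $2\lambda_i h_i'(\lambda_i)$ for coinciding eigenvalues), and invoke the integral Lipschitz hypothesis to bound it by $2C_L$. Where you diverge is in how the $\sqrt{n}$ is extracted. The paper keeps the expansion of the signal, $\bbx=\sum_i \hat{x}_i\bbv_i$, bounds each operator $\sum_k\bbPhi^{(k)}\sum_r(\lambda_i^{k-r}\bbS^r+\lambda_i^{k-r-1}\bbS^{r+1})$ in spectral norm by $2C_L$ (it is diagonal in $\bbV$, so the entrywise bound is an operator bound), and then pays $\sqrt{n}$ through $\|\hat{\bbx}\|_1\le\sqrt{n}\|\hat{\bbx}\|_2$ when summing over the $n$ signal components. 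You instead assemble the entire first-order term into a single matrix $\bbM$ — a Hadamard product of the divided-difference kernel with $\tbE$ — and pay $\sqrt{n}$ through the Frobenius detour $\|\bbM\|_2\le\|\bbM\|_F\le 2C_L\|\bbE\|_F\le 2C_L\sqrt{n}\,\eps$. Both yield exactly $C_{\rm SI}=2\sqrt{n}\,C_L$. Your route is tidier for this theorem and locates the looseness in the error matrix rather than in the signal; the paper's per-frequency route has the advantage of extending directly to Theorems 2 and 3, where the $\bbPhi^{(k)}$ are no longer diagonal in $\bbV$ and no clean Hadamard-product representation of the first-order term exists. Two minor points: in your degenerate case the limiting entry should carry $\tilde{E}_{ij}$ (repeated eigenvalues with $i\ne j$), not only $\tilde{E}_{ii}$, and your remainder control should also cite $\|\bbPhi^{(k)}\|_2$ and $K$, as the paper implicitly does; neither affects the bound.
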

\begin{proof}
    See Appendix \ref{proof:thm1}.
\end{proof}

\begin{figure*}%
\begin{tcolorbox}[width=\textwidth, colback = gray!10, colframe = gray, arc = 0pt, outer arc = 0pt, boxrule = 1pt]
\centering
\begin{subfigure}{0.25\columnwidth}
\includegraphics[width=1\linewidth, height = 0.66\linewidth, trim = {0cm 0cm 0cm 0cm}, clip]{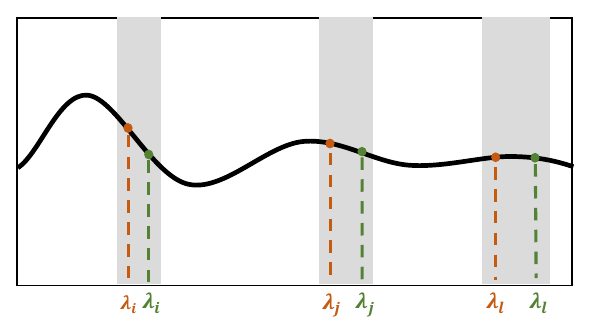}%
\caption{}%
\label{subfigfa}%
\end{subfigure}\hfill%
\begin{subfigure}{0.25\columnwidth}
\includegraphics[width=1\linewidth,height = 0.66\linewidth, trim = {0cm 0cm 0cm 0cm}, clip]{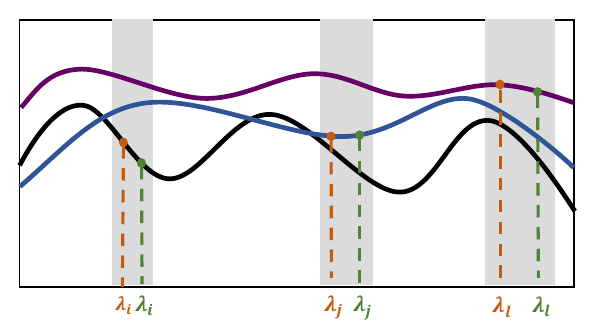}%
\caption{}%
\label{subfigfb}%
\end{subfigure}\hfill%
\begin{subfigure}{0.25\columnwidth}
\includegraphics[width=1\linewidth,height = 0.66\linewidth, trim = {0cm 0cm 0cm 0cm}, clip]{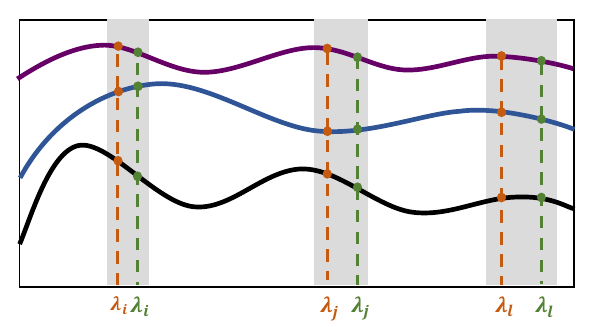}%
\caption{}%
\label{subfigfc}%
\end{subfigure}\hfill%
\begin{subfigure}{0.25\columnwidth}
\includegraphics[width=1\linewidth,height = 0.66\linewidth, trim = {2.5cm 1cm 1.5cm 2cm}, clip]{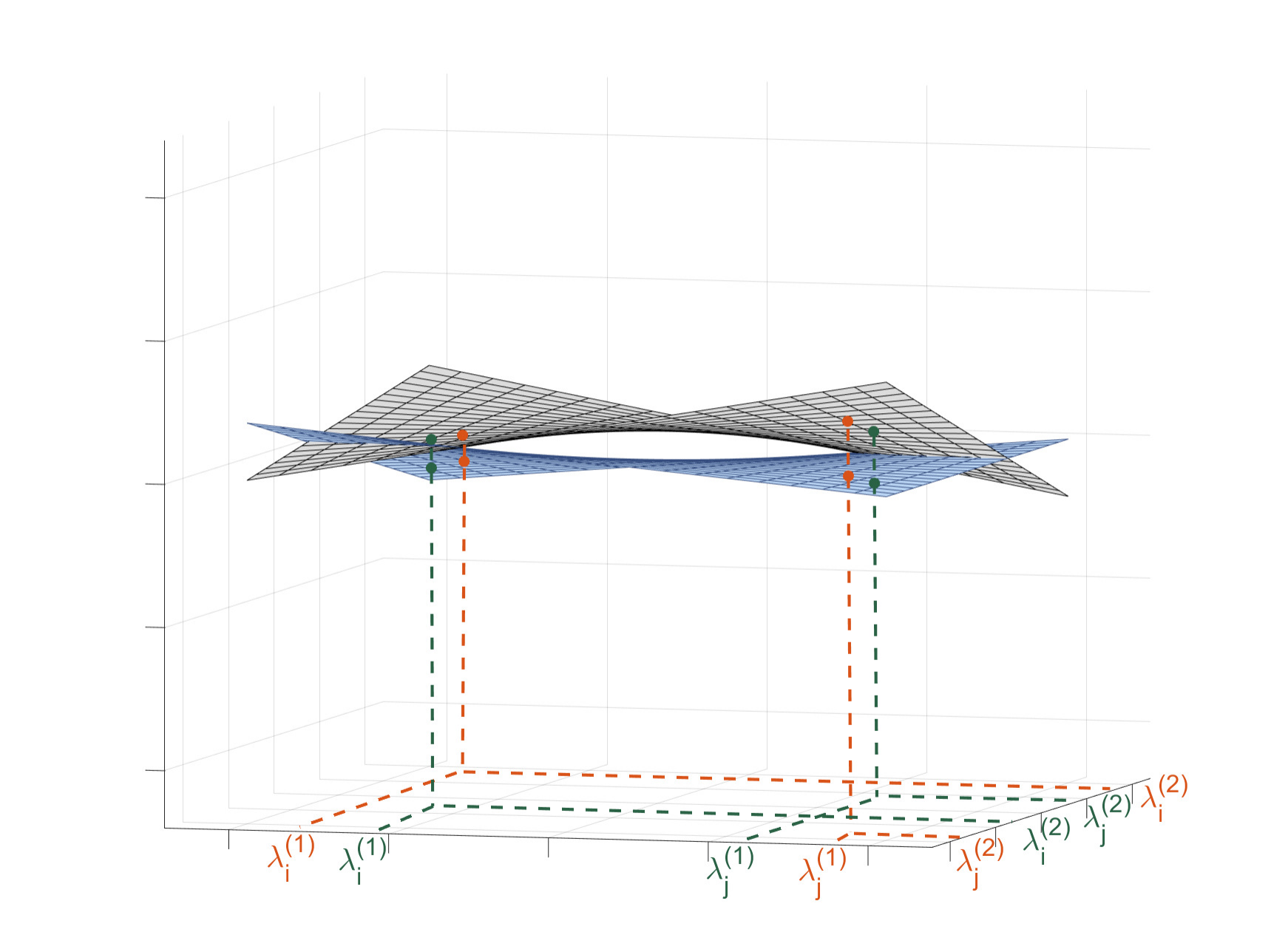}%
\caption{}%
\label{subfigfd}%
\end{subfigure}
\caption{Graph perturbations on the filter frequency responses of different EdgeNet categories. In all panels, we highlight two different graph instantiations in red (e.g., true graph eigenvalues) and green (e.g., perturbed graph eigenvalues). \textbf{(a)} Filter frequency response of integral Lipschitz graph convolutional filter (GCF) $h(\lambda) = \sum_{k=0}^K h_k \lambda^k$ \cite{gama2018convolutional}. The function $h(\lambda)$ is determined by the filter coefficients $\{h_k\}_{k=0}^K$ and the function variable $\lambda$ is specified by the eigenvalues of the graph $\{\lambda_i\}_{i=1}^n$. For a given graph, the integral Lipschitz property requires $h(\lambda)$ to satisfy \eqref{eq:LipschitzFrequencyResponse} w.r.t. all $n$ frequencies $\{\lambda_i\}_{i=1}^n$ -- see Remark \ref{remark:conditionFlexibility}. \textbf{(b)} Three example filter frequency responses of integral Lipschitz SI-EdgeGF $h_i(\lambda)$ (black line), $h_j(\lambda)$ (blue line) and $h_\ell(\lambda)$ (purple line) [cf. \eqref{eq:SIfrequencyResponse}]. The function $h_i(\lambda) \setminus h_j(\lambda) \setminus h_\ell(\lambda)$ is determined by the eigenvalues of the edge weight matrices $\{\phi_i^{(k)}\}_{k=0}^K \setminus \{\phi_j^{(k)}\}_{k=0}^K \setminus \{\phi_\ell^{(k)}\}_{k=0}^K$. The function variable $\lambda$ is specified by the eigenvalues of the graph $\{\lambda_i\}_{i=1}^n$ and each of them evaluates one response. For a given graph, the integral Lipschitz property only requires each $h_i(\lambda) \setminus h_j(\lambda) \setminus h_\ell(\lambda)$ to satisfy \eqref{eq:LipschitzFrequencyResponse} w.r.t. a single frequency $\lambda_i \setminus \lambda_j \setminus \lambda_\ell$, which provides more flexibility for the design of the filter frequency responses -- see Remark \ref{remark:conditionFlexibility}. \textbf{(c)} Three example filter frequency responses of integral Lipschitz ES-EdgeGF $h_i(\lambda)$ (black line), $h_j(\lambda)$ (blue line) and $h_\ell(\lambda)$ (purple line) [cf. \eqref{eq:ESFrequencyResponse}]. The function $h_i(\lambda) \setminus h_j(\lambda) \setminus h_\ell(\lambda)$ is determined by the eigenvalues of the edge weight matrices $\{\phi_i^{(k)}\}_{k=0}^K \setminus \{\phi_j^{(k)}\}_{k=0}^K \setminus \{\phi_\ell^{(k)}\}_{k=0}^K$ and the function variable $\lambda$ is specified by the eigenvalues of the graph $\{\lambda_i\}_{i=1}^n$. For a given graph, the integral Lipschitz property requires each $h_i(\lambda) \setminus h_j(\lambda) \setminus h_\ell(\lambda)$ to satisfy \eqref{eq:LipschitzFrequencyResponse} w.r.t. the $n$ frequencies $\{\lambda_i\}_{i=1}^n$ -- see Remark \ref{remark:ESEdgeGFFrequency}. \textbf{(d)} Two example filter frequency responses of integral Lipschitz EdgeGFs $h_i(\bblambda)$ (black surface) and $h_j(\bblambda)$ (blue surface) with $K=2$. The function $h_i(\bblambda) \setminus h_j(\bblambda)$ is determined by the eigenvalues of the edge weight matrices $\{\phi^{(k)}_i\}_{k=0}^K \setminus \{\phi^{(k)}_j\}_{k=0}^K$, and the multivariate function variable $\bblambda$ is specified by the eigenvalues of the graph $\{\lambda_i\}_{i=1}^n$ and the eigenvector misalignment between the graph $\bbS$ and the edge weight matrices  $\{\bbPhi^{(k)}_i\}_{k=0}^K$ [cf. \eqref{eq:GeneralizedFrequencyResponse45}-\eqref{eq:GeneralizedFrequencyResponse5}]. For a given graph, the integral Lipschitz property requires each $h_i(\lambda) \setminus h_j(\lambda)$ satisfying \eqref{eq:LipschitzGeneralizedResponse} w.r.t. $n$ multivariate frequencies $\{\bblambda_i\}_{i=1}^n$ -- see Remark \ref{remark:EdgeGFFrequency}.}\label{fig:filterFrequencyResponses}
\end{tcolorbox}
\end{figure*} 

Theorem \ref{theorem:SIEVfilterStability} states that the output difference of the SI-EdgeGF over the underlying graph $\bbS$ and the perturbed graph $\widetilde{\bbS}$ is bounded linearly by the perturbation size $\eps$ w.r.t. a stability constant $C_{\rm SI}$. When perturbations are small (i.e., $\eps \to 0$), the stability bound approaches zero, and the SI-EdgeGF maintains the performance. The stability constant $C_{\rm SI}$ embeds the role of the filter and graph: $C_{L}$ is the Lipschitz constant of filter frequency responses $\{h_i(\lambda)\}_{i=1}^n$ and $n$ is the graph size. The Lipschitz property of $\{h_i(\lambda)\}_{i=1}^n$ is determined by the eigenvalues of the edge weight matrices $\bbPhi^{(k)}$ but not by the specific edge weights. The stability can be improved by designing the edge weight matrices $\bbPhi^{(k)}$ to reduce the Lipschitz constant $C_{L}$, which allows less variability of $\{h_i(\lambda)\}_{i=1}^n$ between nearby frequencies and is more stable to frequency deviations caused by the graph perturbations. The bound in \eqref{eq:SIEVFilterStability} is however the worst-case analysis of the output deviation, i.e., the Lipschitz constant $C_L$ is the maximum upper bound of all frequency responses $\{h_i(\lambda)\}_{i=1}^n$ and the Lipschitz property holds uniformly for all graph signals $\bbx$. 

By particularizing the edge weight matrices $\{\bbPhi^{(k)}\}_{k=0}^K$ to scaled identity matrices $\{h^{(k)}\bbI\}_{k=0}^K$, all frequency responses $\{h_i(\lambda)\}_{i=1}^n$ collapse to a single frequency response as that of the GCF [cf. \eqref{eq:GCF}], and the integral Lipschitz SI-EdgeGF reduces to the integral Lipschitz GCF. Theorem \ref{theorem:SIEVfilterStability} recovers the stability of GCFs, i.e., the stability bound in \eqref{eq:SIEVFilterStability} specifies that of GCFs. 

\begin{remark}\label{remark:conditionFlexibility}
	Theorem \ref{theorem:SIEVfilterStability} holds for any graph $\bbS$. However, if $\bbS$ is fixed or given by the specific problem, we can 
	improve the stability by designing graph-specific frequency responses $\{h_i(\lambda)\}_{i=1}^n$ to reduce the Lipschitz constant $C_L$. In particular, each frequency response $h_i(\lambda)$ only instantiates on a single frequency $\lambda_i$ of $\bbS$ for $i=1,\ldots,n$, such that the condition \eqref{eq:LipschitzFrequencyResponse} only needs to be satisfied w.r.t. a single value $\lambda_i$,\footnote{The perturbed graph $\widetilde{\bbS}$ may have different eigenvalues $\{\widetilde{\lambda}_i\}_{i=1}^n$ from those of the underlying graph $\bbS$. However, our stability result does not require the integral Lipschitz condition \eqref{eq:LipschitzFrequencyResponse} to be satisfied w.r.t. perturbed eigenvalues $\{\widetilde{\lambda}_i\}_{i=1}^n$ but only w.r.t. original eigenvalues $\{\lambda_i\}_{i=1}^n$, which demonstrates the strength of the stability analysis developed in this paper.} i.e.,
	\begin{equation}\label{eq:LipschitzFrequencyResponseSpecific}
	\Big|\frac{\lambda+\lambda_i}{2} \cdot \frac{h_i(\lambda) - h_i(\lambda_i)}{\lambda - \lambda_i}\Big| \le C_{L}, 
	\end{equation}
    where $\lambda \in \{\lambda_1,\ldots,\lambda_n\}$ is another frequency of $\bbS$. This provides more flexibility for the design of $h_i(\lambda)$, relaxes the integral Lipschitz condition \eqref{eq:LipschitzFrequencyResponse}, and allows a lower 
	$C_L$ to improve stability. It is worth noting that since $\{h_i(\lambda)\}_{i=1}^n$ collapse to a single $h(\lambda)$ for GCFs [cf. \eqref{eq:GCF}], this $h(\lambda)$ needs to satisfy the Lipschitz condition \eqref{eq:LipschitzFrequencyResponse} w.r.t. all $n$ frequencies $\{\lambda_i\}_{i=1}^n$, which decreases the flexibility of the filter design compared with SI-EdgeGFs. 	
\end{remark}

\subsection{Stability of Eigenvector-Sharing EdgeGFs}\label{subsec:stabilityESGF}

In ES-EdgeGFs the edge weight matrices $\{\bbPhi^{(k)}\}_{k=0}^K$ share the eigenvectors, but they may be different from the eigenvectors $\bbV$ of $\bbS$. Specifically, let $\{\bbPhi^{(k)} = \bbU \bbphi^{(k)} \bbU^\top\}_{k=0}^K$ be the eigendecompositions where $\{\bbphi^{(k)}\}_{k=0}^K = \{\text{diag}(\phi^{(k)}_1,\ldots,\phi^{(k)}_n)\}_{k=0}^{K}$ are the eigenvalues and $\bbU$ are the eigenvectors. We characterize the alignment between eigenvectors $\bbV$ and $\bbU$ as follows. 
\begin{definition}[$\varepsilon$-misalignment of eigenvectors]\label{def:misalignment}
	Let $\bbV = [\bbv_1,\ldots,\bbv_n]$ and $\bbU = [\bbu_1,\ldots,\bbu_n]$ be two sets of eigenvectors, $\bbv_i = \sum_{j=1}^n \hat{v}_{ij} \bbu_j$ be the expansion of $\bbv_i$ over $\bbU$, and $\bbu_j = \sum_{i=1}^n \hat{u}_{ji} \bbv_i$ be the expansion of $\bbu_j$ over $\bbV$ for $i,j=1,\ldots, n$. Then, $\bbV$ and $\bbU$ are $\varepsilon$-misaligned if 
	\begin{align}
		&|\hat{u}_{ij}| \le \varepsilon~for~i=1,\ldots,n~\text{and}~j \ne i,\\
		&|\hat{v}_{ji}| \le \varepsilon~for~j=1,\ldots,n~\text{and}~i \ne j.
	\end{align} 
\end{definition} \vspace{-1mm}
\noindent When $\bbV$ is close to $\bbU$, $\bbv_i$ is close to $\bbu_i$, the diagonal coefficients $\{\hat{u}_{ii}\}_{i=1}^n$ and $\{\hat{v}_{jj}\}_{j=1}^n$ are large, and the off-diagonal coefficients $\{\hat{u}_{ij}\}_{j\ne i}$ and $\{\hat{v}_{ji}\}_{i\ne j}$ are small. This indicates that $\bbV$ is more aligned with $\bbU$ if $\varepsilon$ is smaller. For the exact alignment $\bbU = \bbV$, we have $\{\hat{u}_{ii} = 1\}_{i=1}^n$, $\{\hat{v}_{jj} = 1\}_{j=1}^n$ and $\{\hat{u}_{ij}=0\}_{j\ne i}$, $\{\hat{v}_{ji} = 0\}_{i\ne j}$; hence, $\bbV$ and $\bbU$ are $\varepsilon$-misaligned w.r.t. $\varepsilon=0$. 

We define the filter frequency responses of the ES-EdgeGF in a similar manner. By expanding the signal $\bbx$ over the graph shift operator's eigenvectors $\bbV$ as $\bbx = \sum_{i=1}^n \hat{x}_i \bbv_i$, the eigenvector $\bbv_i$ over the edge weight matrices' eigenvectors $\bbU$ as $\bbv_i = \sum_{j=1}^n \hat{v}_{ij} \bbu_j$, and $\bbu_j$ over $\bbV$ as $\bbu_j = \sum_{\ell=1}^n \hat{u}_{j\ell} \bbv_\ell$, the filter output reads as 
\begin{align}\label{eq:spectralESEdgeGF0}
	\bbH_{\rm ES}(\bbx,\bbS) = \sum_{i=1}^n \sum_{j=1}^n \sum_{\ell=1}^n \hat{x}_i \sum_{k=0}^K\phi_j^{(k)} \hat{v}_{ij} \hat{u}_{j \ell} \lambda_{i}^k \bbv_\ell.
\end{align} 
The expression in \eqref{eq:spectralESEdgeGF0} is akin to \eqref{eq:FilterGFT}, while the misalignment between eigenvectors $\bbV$ and $\bbU$ scales all graph frequency terms by the same factor $\hat{v}_{ij} \hat{u}_{j \ell}$. It allows to extract this common factor and re-write \eqref{eq:spectralESEdgeGF0} as
\begin{align}\label{eq:spectralESEdgeGF}
	\bbH_{\rm ES}(\bbx,\bbS) = \sum_{i=1}^n \sum_{j=1}^n \sum_{\ell=1}^n \hat{x}_i \hat{v}_{ij} \hat{u}_{j \ell} \sum_{k=0}^K\phi_j^{(k)} \lambda_{i}^k \bbv_\ell.
\end{align} 
In turn, this motivates to define the eigenvector-sharing filter frequency response of the ES-EdgeGF as
\begin{align}\label{eq:ESFrequencyResponse}
	h_j(\lambda) = \sum_{j=1}^n \phi_j^{(k)} \lambda,~\text{for}~j=1,\ldots,n.
\end{align}
While \eqref{eq:ESFrequencyResponse} takes the same form as that of the SI-EdgeGF [cf. \eqref{eq:SIfrequencyResponse}], each eigenvector-sharing filter frequency response $h_j(\lambda)$ is evaluated at all $n$ frequencies $\{\lambda_i\}_{i=1}^n$ because of the eigenvector misalignment [cf. \eqref{eq:spectralESEdgeGF}], instead of a single frequency $\lambda_j$ [cf. \eqref{eq:FilterGFT}]. We then define the integral Lipschitz ES-EdgeGF as in Definition \ref{def:LipschitzFilterFrequencyResponse} -- see Fig. \ref{subfigfc}, and claim the following theorem. 
\begin{theorem}[Stability of ES-EdgeGF]\label{theorem:EI-EVfilterStability}
	Consider the ES-EdgeGF $\bbH_{\rm ES}(\bbx, \bbS)$ with the edge weight matrices $\{\bbPhi^{(k)}\}_{k=0}^{K}$ [cf. \eqref{eq:EIEVGF}]. Let $h_i(\lambda)$ be the eigenvector-sharing filter frequency response that is integral Lipschitz w.r.t. $C_L$ and satisfies $|h_i(\lambda)| \le 1$ for $i=1,...,n$, $\bbV$ and $\bbU$ be the eigenvectors of $\bbS$ and $\{\bbPhi^{(k)}\}_{k=0}^{K}$ that are $\varepsilon$-misaligned [Def. \ref{def:misalignment}], and $\widetilde{\bbS}$ be the perturbed graph with perturbation size $\eps$ [Def. \ref{def:relativePerturbation}]. Then, for any graph signal $\bbx$, it holds that 
	\begin{equation}\label{eq:EI-EVFilterStability}
		\| \bbH_{\rm ES}(\bbx, \bbS)\!-\! \bbH_{\rm ES}(\bbx, \widetilde{\bbS}) \|_2 \!\le\! C_{\rm ES} \| \bbx \|_2 \eps \!+\! \ccalO\big(\eps^2\big),
	\end{equation}
	where $C_{\rm ES} = 2\sqrt{n}(1+n\varepsilon) C_L$ is the stability constant.
\end{theorem}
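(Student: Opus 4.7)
The plan is to reduce the stability analysis of ES-EdgeGFs to the SI-EdgeGF case established in Theorem~\ref{theorem:SIEVfilterStability}, while carefully accounting for the extra contributions introduced by the eigenvector misalignment between $\bbU$ and $\bbV$. Because the edge weight matrices satisfy $\bbPhi^{(k)} = \bbU\bbphi^{(k)}\bbU^\top$ with $\bbU \ne \bbV$ in general, the spectral output already worked out in \eqref{eq:spectralESEdgeGF} splits naturally into a ``diagonal'' contribution where the indices $i,j,\ell$ coincide and an ``off-diagonal'' contribution in which at least one pair of indices differs. By Definition~\ref{def:misalignment}, every off-diagonal coefficient $\hat u_{j\ell}$ (for $j\neq\ell$) and $\hat v_{ij}$ (for $i\neq j$) is bounded in magnitude by $\varepsilon$, while the diagonal coefficients are close to one. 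The structure of the bound $C_{\rm ES}=2\sqrt n(1+n\varepsilon)C_L$ suggests exactly this split: the ``$1$'' comes from the diagonal part (which behaves like an SI-EdgeGF) and the ``$n\varepsilon$'' comes from the $n$ off-diagonal terms scaled by $\varepsilon$.

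Concretely, I would first rewrite $\bbH_{\rm ES}(\bbx,\bbS)-\bbH_{\rm ES}(\bbx,\widetilde{\bbS})$ in the spectral form of \eqref{eq:spectralESEdgeGF} using the shared eigenvector expansion, and then decompose
\begin{equation*}
\bbH_{\rm ES}(\bbx,\bbS)=\bbH_{\rm ES}^{\rm diag}(\bbx,\bbS)+\bbH_{\rm ES}^{\rm off}(\bbx,\bbS),
\end{equation*}
where $\bbH_{\rm ES}^{\rm diag}$ collects terms with $i=j=\ell$ and $\bbH_{\rm ES}^{\rm off}$ the rest. For the diagonal piece, the factor $\hat v_{ii}\hat u_{ii}$ can be written as $1-\delta_i$ with $|\delta_i|=\mathcal{O}(\varepsilon)$ (since eigenvectors are unit-norm), so up to $\mathcal{O}(\varepsilon)$ corrections that are absorbed into the higher-order $\mathcal{O}(\eps^2)$ term, the diagonal piece matches the SI-EdgeGF with eigenvector-sharing frequency responses $h_i(\lambda)$ evaluated at $\lambda_i$. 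Applying Theorem~\ref{theorem:SIEVfilterStability} to this piece yields the leading $2\sqrt n C_L\|\bbx\|_2\eps$ contribution.

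For the off-diagonal piece, the idea is to use the integral Lipschitz property evaluated along perturbed eigenvalues and the bound $|h_j(\lambda)|\le 1$ together with the coefficient bound of order $\varepsilon$. Because each off-diagonal summand carries at least one factor of magnitude $\le\varepsilon$, and the triple sum produces $\mathcal{O}(n)$ dominating terms after Parseval-type bounding (the orthogonality of $\bbV$ and the unitarity-like properties of $\bbU$), this piece contributes the extra $2n\sqrt n\varepsilon C_L\|\bbx\|_2\eps$. Combining the two gives the stability constant $C_{\rm ES}=2\sqrt n(1+n\varepsilon)C_L$.

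The main obstacle will be handling the off-diagonal piece rigorously: one must show that when $\bbS$ is replaced by $\widetilde\bbS$, the difference of the $\lambda_i^k$ factors (which, via the integral Lipschitz hypothesis on $h_j(\cdot)$, scales like $C_L\eps$) is then multiplied by two small coefficients whose combined contribution across the triple sum does not blow up with $n$ beyond the stated $n\varepsilon$ factor. This requires carefully using both $\sum_j|\hat v_{ij}|^2=1$ and $\sum_\ell|\hat u_{j\ell}|^2=1$, together with the $\varepsilon$-misalignment bound applied selectively to off-diagonal indices, so that at most one $\varepsilon$ factor appears per summand and the remaining sums collapse via Cauchy--Schwarz. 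Once this counting is in place, the higher-order terms coming from interactions between the perturbation $\bbE$ and the misalignment are gathered into the $\mathcal{O}(\eps^2)$ remainder, completing the proof.
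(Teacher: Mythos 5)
Your high-level accounting of where the constant $2\sqrt{n}(1+n\varepsilon)C_L$ comes from is right in spirit --- the ``$1$'' from the aligned part and the ``$n\varepsilon$'' from $\ccalO(n)$ off-diagonal terms each carrying one misalignment coefficient --- and this matches the split the paper actually performs (there, over the index $j_1$ of the expansion $\bbv_j=\sum_{j_1}\hat v_{jj_1}\bbu_{j_1}$, with $j_1=j$ giving the main term and $j_1\ne j$ the $\varepsilon$-scaled ones). However, there is a genuine gap in how you propose to make the perturbation size $\eps$ appear. You want to write both $\bbH_{\rm ES}(\bbx,\bbS)$ and $\bbH_{\rm ES}(\bbx,\widetilde{\bbS})$ in the spectral form \eqref{eq:spectralESEdgeGF} and compare term by term, with the integral Lipschitz property ``evaluated along perturbed eigenvalues.'' The spectral form for $\widetilde{\bbS}$ involves the eigenvectors $\widetilde{\bbV}$ of the perturbed graph, and relating $\widetilde{\bbV}$ to $\bbV$ requires eigenvector perturbation bounds that do not hold without eigengap assumptions; moreover your diagonal/off-diagonal decomposition is defined relative to the eigenbasis, so it does not commute with the perturbation, and ``applying Theorem~\ref{theorem:SIEVfilterStability} to the diagonal piece'' is not licensed because the diagonal truncations of the two filters are not the same operator family evaluated on two graphs. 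The paper's proof never diagonalizes $\widetilde{\bbS}$: it expands $\widetilde{\bbS}^k=\bbS^k+\sum_{r=0}^{k-1}(\bbS^r\bbE\bbS^{k-r}+\bbS^{r+1}\bbE\bbS^{k-r-1})+\bbD$, works entirely in the fixed basis $\bbV$, and the integral Lipschitz hypothesis enters through the algebraic identity
\begin{equation*}
\sum_{k=1}^K \phi_{j_1}^{(k)} \sum_{r=0}^{k-1}\big(\lambda_i^{k-r}\lambda_j^{r}+\lambda_i^{k-r-1}\lambda_j^{r+1}\big)=(\lambda_i+\lambda_j)\,\frac{h_{j_1}(\lambda_i)-h_{j_1}(\lambda_j)}{\lambda_i-\lambda_j}
\end{equation*}
(and $2\lambda_j h_{j_1}'(\lambda_j)$ when $i=j$), evaluated only at \emph{unperturbed} eigenvalues. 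Without this first-order expansion in $\bbE$, your off-diagonal bookkeeping has nothing of size $\eps$ to attach the $\varepsilon$ factors to.

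A secondary but real error: you claim the correction $\delta_i$ in $\hat v_{ii}\hat u_{ii}=1-\delta_i$ can be ``absorbed into the higher-order $\ccalO(\eps^2)$ term.'' The misalignment $\varepsilon$ and the perturbation size $\eps$ are independent parameters, so a cross term of order $\varepsilon\eps$ is not $\ccalO(\eps^2)$. The issue is harmless for an upper bound (the paper simply uses $|\hat v_{jj}|\le 1$ on the diagonal part), but the absorption argument as stated is incorrect and would matter if you tried to track the error terms precisely.
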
 
\begin{proof}
    See Appendix \ref{proof:theorem2}.
\end{proof}

Theorem \ref{theorem:EI-EVfilterStability} states that ES-EdgeGFs are Lipschitz stable to graph perturbations, but with an increased stability constant $C_{\rm ES}$ compared with SI-EdgeGFs. Such a constant comprises three terms: (i) the Lipschitz constant $C_L$; (ii) the graph size $n$; and (iii) the eigenvector misalignment $\varepsilon$. The first two terms are inherited from SI-EdgeGFs [Thm. \ref{theorem:SIEVfilterStability}]. The third term is the consequence of graph perturbations propagating through the different eigenbases of the underlying graph $\bbS$ and edge weight matrices $\{\bbPhi^{(k)}\}_{k=0}^{K}$. This amplifies the frequency deviations caused by the graph perturbations and increases the perturbation effect on the output. The more aligned the eigenbases are, the smaller $\varepsilon$, and the more stable the ES-EdgeGF. Since $\varepsilon$ is smaller than one [Def. \ref{def:misalignment}], the stability constant $C_{\rm ES}$ is upper bounded by $2\sqrt{n}(1+n) C_L$, which is the worst-case analysis. The result demonstrates that while the increased flexibility of the edge weight matrices of ES-EdgeGFs expands the parameter space and improve the representational capacity, the filter loses in stability, yielding an explicit trade-off between two factors. 

If the eigenvectors $\bbU$ are aligned with $\bbV$, ES-EdgeGFs reduce to SI-EdgeGFs and the stability bound in \eqref{eq:EI-EVFilterStability} reduces to that in \eqref{eq:SIEVFilterStability}. The stability of SI-EdgeNets / ES-EdgeNets inherits from that of SI-EdgeGFs / ES-EdgeGFs -- see Theorem \ref{thm:stabilityEdgeNet} in Section \ref{sec:stabilityEdgeGF}.

\begin{remark}\label{remark:ESEdgeGFFrequency}
	For a fixed or given graph $\bbS$, integral Lipschitz ES-EdgeGFs require more restrictions on $\{h_j(\lambda)\}_{j=1}^n$ than SI-EdgeGFs [Remark \ref{remark:conditionFlexibility}]. Specifically, ES-EdgeGFs instantiate each $h_j(\lambda)$ on $n$ frequencies $\{\lambda_i\}_{i=1}^n$ because of the eigenvector misalignment between $\{\bbPhi^{(k)}\}_{k=0}^K$ and $\bbS$ [cf. \eqref{eq:spectralESEdgeGF}]. This requires each $h_j(\lambda)$ satisfying the condition \eqref{eq:LipschitzFrequencyResponse} w.r.t. all $\{\lambda_i\}_{i=1}^n$, while SI-EdgeGFs only require each $h_j(\lambda)$ satisfying \eqref{eq:LipschitzFrequencyResponse} w.r.t. a single $\lambda_j$. 
\end{remark}

\section{Stability For General EdgeNets}\label{sec:stabilityEdgeGF}

In this section, we discuss the stability of the general EdgeNets [cf. \eqref{eq:EdgeNet0}] with no restriction on the edge varying parameter matrices $\{\bbPhi^{(k)}\}_{k=0}^K$, which lead to the strongest representational capacity. 

\subsection{Edge Varying Filter Frequency Response}

To conduct the graph-universal stability analysis in this general case, we first generalize the concept of the filter frequency response. Specifically, the EdgeGF contains edge weight matrices $\{\bbPhi^{(k)}\}_{k=0}^K$ with different eigenvectors that cannot be handled by the conventional spectral approach. The following example illustrates this. 

\begin{defBox}{Example}
	Consider an EdgeGF of order $K=2$, i.e.,
	\begin{align}\label{eq:exampleFilter}
		\bbH_{{\rm Edge}}(\bbx, \bbS) &:= \bbPhi^{(0)}\bbx + \bbPhi^{(1)}\bbS\bbx + \bbPhi^{(2)}\bbS^2\bbx. 
	\end{align}
	Consider also the eigendecomposition $\bbPhi^{(k)} = \bbU^{(k)} \bbphi^{(k)} {\bbU^{(k)}}^\top$ with eigenvectors $\bbU^{(k)} = [\bbu_1^{(k)},\ldots,\bbu_n^{(k)}]$ and eigenvalues $\bbphi^{(k)} = \text{diag}(\phi_{1}^{(k)}, \ldots, \phi_{n}^{(k)})$ for $k=0, 1, 2$. We first expand the graph signal $\bbx$ over the eigenvectors of the graph shift operator $\bbV$ as $\bbx = \sum_{i=1}^n \hat{x}_{i} \bbv_i$, and then the eigenvector $\bbv_i$ over $\bbU^{(0)}$ as $\bbv_i = \sum_{j=1}^n \hat{v}_{ij}^{(0)} \bbu_j^{(0)}$. Likewise, we can also expand $\bbu_j^{(0)}$ over $\bbV$ as $\bbu_j^{(0)} = \sum_{\ell=1}^n \hat{u}_{j\ell}^{(0)} \bbv_\ell$. By substituting these expansions into the first term $\bbPhi^{(0)} \bbx$ in \eqref{eq:exampleFilter}, we have 
	\begin{align}\label{eq:GeneralizedFrequencyResponse1}
		\bbPhi^{(0)} \bbx & = \sum_{i=1}^n \sum_{j=1}^n \sum_{\ell=1}^n \hat{x}_{i} \phi^{(0)}_j \hat{v}_{ij}^{(0)} \hat{u}_{j\ell}^{(0)} \bbv_{\ell}.
	\end{align}
    Following a similar procedure, we can expand the one-shifted signal $\bbPhi^{(1)}\bbS\bbx$ and two-shifted signal $\bbPhi^{(2)}\bbS^2\bbx$ as
    \begin{align}\label{eq:GeneralizedFrequencyResponse2}
    	& \bbPhi^{(1)} \bbS \bbx = \sum_{i=1}^n \sum_{j=1}^n \sum_{\ell=1}^n \hat{x}_{i} \phi_j^{(1)} \hat{v}_{ij}^{(1)} \hat{u}_{j\ell}^{(1)} \lambda_i \bbv_{\ell},\\
    	&\bbPhi^{(2)} \bbS^2 \bbx = \sum_{i=1}^n \sum_{j=1}^n \sum_{\ell=1}^n \hat{x}_{i} \phi_j^{(2)} \hat{v}_{ij}^{(2)} \hat{u}_{j\ell}^{(2)} \lambda_i^2 \bbv_{\ell},
    \end{align}
    where $\{v_{ij}^{(1)}\}$, $\{u_{j\ell}^{(1)}\}$ and $\{v_{ij}^{(2)}\}$, $\{u_{j\ell}^{(2)}\}$ are expansion coefficients between $(\bbV$, $\bbU^{(1)})$ and $(\bbV$, $\bbU^{(2)})$. By aggregating these shifted signals, we can represent the filter output as
    \begin{align}\label{eq:GeneralizedFrequencyResponse4}
    	&\bbH_{{\rm Edge}}(\bbx, \bbS) \!=\! \sum_{i=1}^n\! \sum_{j=1}^n\! \sum_{\ell=1}^n\! \hat{x}_{i}  \!\sum_{k=0}^2 \phi_j^{(k)} \hat{v}_{ij}^{(k)} \hat{u}_{j\ell}^{(k)} \lambda_i^k \bbv_{\ell}.
    \end{align}
    By comparing \eqref{eq:GeneralizedFrequencyResponse4} with the ES-EdgeGF expression \eqref{eq:spectralESEdgeGF0}, we note that 
    the misalignment between eigenvectors $\bbV$ and $\{\bbU^{(k)}\}_{k=0}^2$ in \eqref{eq:GeneralizedFrequencyResponse4} scales the graph frequency terms by different factors $\{\hat{v}_{ij}^{(k)} \hat{u}_{j\ell}^{(k)}\}_{k=0}^2$ instead of the same one and thus, there is no common factor for extraction as in \eqref{eq:spectralESEdgeGF} of the ES-EdgeGF. By defining the scaling variables as 
    \begin{align}\label{eq:GeneralizedFrequencyResponse45}
    	\beta^{(k)} = \hat{v}_{ij}^{(k)} \hat{u}_{j\ell}^{(k)} / (\hat{v}_{ij}^{(k-1)} \hat{u}_{j\ell}^{(k-1)})
    \end{align}
    for $k=1,2$, we can re-writing \eqref{eq:GeneralizedFrequencyResponse4} as\footnote{Without loss of generality, we assume $\beta^{(k)}=\hat{v}_{ij}^{(k)} \hat{u}_{j\ell}^{(k)}$ for $k=1,2$ if $\hat{v}_{ij}^{(0)} \hat{u}_{j\ell}^{(0)}=0$ and $\prod_{a}^b = 1$ if $b < a$.}
    \begin{align}\label{eq:GeneralizedFrequencyResponse5}
    	&\bbH_{{\rm Edge}}(\bbx,\! \bbS) \!=\!\! \sum_{i=1}^n\! \sum_{j=1}^n\! \sum_{\ell=1}^n\! \hat{x}_{i} \hat{v}_{ij}^{(0)} \hat{u}_{j\ell}^{(0)}\!\! \sum_{k=0}^2\!\! \phi_j^{(k)}\!\! \prod_{\kappa=1}^k\! (\beta^{(\kappa)}\!\lambda_i\!) \bbv_{\ell}\!.
    \end{align}
    We refer to \eqref{eq:GeneralizedFrequencyResponse5} as the spectral input-output relation of \eqref{eq:exampleFilter}. It resembles \eqref{eq:spectralESEdgeGF}, but differs that the frequency $\lambda_i$ is scaled by different $\{\beta^{(k)}\}_{k=1}^2$ at different graph shifts, due to the misalignment between $\bbV$ and $\{\bbU^{(k)}\}_{k=0}^2$. The latter generates $K=2$ scaled frequencies $\{\lambda_i^{(k)} = \beta^{(k)}\lambda_i\}_{k=1}^2$ in the graph spectrum. Since the misalignment between $(\bbV$, $\bbU^{(1)})$ and $(\bbV$, $\bbU^{(2)})$ can be arbitrarily different, the scaling variables $\beta^{(1)}$, $\beta^{(2)}$ and the scaled frequencies $\lambda_i^{(1)}$, $\lambda_i^{(2)}$ can also be arbitrarily different. This indicates that a single frequency variable is not sufficient to represent the filter frequency response of \eqref{eq:exampleFilter}, and necessitates defining it as a $2$-D function
    \begin{align}\label{eq:exampleEVFrequencyResponse}
    	h_j(\bblambda) = \sum_{k=0}^2 \phi_j^{(k)} \prod_{\kappa=1}^k \lambda^{(\kappa)}~\text{for}~j=1,...,n.
    \end{align}
    The function parameters are determined by the eigenvalues of the edge weight matrices $\{\phi_j^{(k)}\}_{k=0}^2$, whereas the frequency variables $\bblambda = [\lambda^{(1)}, \lambda^{(2)}]^\top$ depend on the eigenvalues of $\bbS$ and the eigenvector misalignment between $\bbS$ and $\{\bbPhi^{(k)}\}_{k=0}^2$ -- see Fig. \ref{subfigfd} for an example. When the eigenvectors $\bbU^{(k)}$ and $\bbV$ are aligned, the scaling variables are $\beta^{(1)} \!=\! \beta^{(2)} \!=\! 1$ and the frequency variables are $\lambda_i^{(1)} \!=\! \lambda_i^{(2)}$. By imposing these constraints on \eqref{eq:exampleEVFrequencyResponse}, it reduces to a $1$-D frequency response function as that of the SI-EdgeGF [cf. \eqref{eq:SIfrequencyResponse}].
\end{defBox}

Motivated by the above example, we formally define the edge-varying filter frequency response for the general EdgeGF. 
\begin{definition}[Edge-varying filter frequency response]\label{def:generalizedFilterFrequencyResponse}
	Consider an EdgeGF with edge weight matrices $\{\bbPhi^{(k)}\}_{k=0}^K$. Let $\{\phi_i^{(k)}\}_{i=1}^n$ be the eigenvalues of $\bbPhi^{(k)}$ for $k=0,\ldots,K$. The edge-varying filter frequency response is a collection of $n$ index-specific $K$-dimensional analytic functions of the form 
	\begin{equation}\label{eq:GeneralizedFrequencyResponse}
		h_i(\bblambda) = \sum_{k=0}^K \phi_i^{(k)} \prod_{\kappa=1}^k \lambda^{(\kappa)}~\text{for}~i=1,\ldots,n,
	\end{equation}
    where $\bblambda = [\lambda^{(1)},...,\lambda^{(K)}]^\top \in \mathbb{R}^K$ is the $K$-dimensional function variable consisting of $K$ frequency variables $\{\lambda^{(k)}\}_{k=1}^K$. 
\end{definition}
\noindent The generalized filter frequency responses $\{h_i(\bblambda)\}_{i=1}^n$ are multivariate functions of a vector variable $\bblambda = [\lambda^{(1)},\ldots,\lambda^{(K)}]^\top$, where the $k$th entry $\lambda^{(k)}$ is the frequency variable resulting from the misalignment between $\bbV$ and $\bbU^{(k)}$ [cf. \eqref{eq:GeneralizedFrequencyResponse45}-\eqref{eq:GeneralizedFrequencyResponse5}]. The multi-dimensional shape of $h_i(\bblambda)$ is determined by the eigenvalues of the edge weight matrices $\{\phi_i^{(k)}\}_{k=0}^K$, while the eigenvalues of the graph shift operator $\{\lambda_i\}_{i=1}^n$ and the eigenvector misalignment between $\bbV$ and $\bbU^{(k)}$ specify the vector variable $\bblambda$. The edge-varying filter frequency responses $\{h_i(\bblambda)\}_{i=1}^n$ [cf. \eqref{eq:GeneralizedFrequencyResponse}] reduce to the eigenvector-sharing filter frequency responses $\{h_i(\lambda)\}_{i=1}^n$ [cf. \eqref{eq:ESFrequencyResponse}] when all edge weight matrices $\bbPhi^{(k)}$ share the same eigenvectors, i.e., $\bbU^{(1)} = \cdots = \bbU^{(K)} = \bbU$. In this case, the scaling factors and the frequency variables equal each other, i.e., $\beta^{(1)} = \cdots = \beta^{(K)}$ and $\lambda^{(1)} = \cdots = \lambda^{(K)}$ in \eqref{eq:GeneralizedFrequencyResponse5}. 

We now define the integral Lipschitz property for the edge-varying filter frequency response, analogous to Definitions \ref{def:LipschitzFilterFrequencyResponse}. For this purpose, we first define the Lipschitz gradient of $\{h_i(\bblambda)\}_{i=1}^n$. 
\begin{definition}[Lipschitz Gradient]\label{def:LipschitzGradient}
    Consider an edge-varying filter frequency response $h(\bblambda)$ of parameters $\{\phi^{(k)}\}_{k=0}^K$ and the variable $\bblambda = [\lambda^{(1)},\ldots,\lambda^{(K)}]^\top$ [cf. \eqref{eq:GeneralizedFrequencyResponse}]. For two specific instantiations $\bblambda_1 = [\lambda^{(1)}_1,\ldots,\lambda_1^{(K)}]^\top$ and $\bblambda_2 = [\lambda^{(1)}_2,\ldots,\lambda_2^{(K)}]^\top$ of $\bblambda$, let $\bblambda_{1:2,k} = [\lambda^{(1)}_1,\ldots,\lambda_1^{(k)}, \lambda_2^{(k+1)}, \ldots, \lambda^{(K)}_2]^\top$ be the vector that concatenates the first $k$ entries of $\bblambda_1$ and the last $K-k$ entries of $\bblambda_2$. The Lipschitz gradient of $h(\bblambda)$ over $\bblambda_1$ and $\bblambda_2$ is 
    \begin{align}
        \nabla_L h(\bblambda) |_{\bblambda_1, \bblambda_2} = \Big[ \frac{\partial h(\bblambda)}{\partial \lambda^{(1)}}\Big|_{\bblambda_{1:2, 1}}, \ldots, \frac{\partial h(\bblambda)}{\partial \lambda^{(K)}}\Big|_{\bblambda_{1:2, K}} \Big],
    \end{align}
    where $\partial h(\bblambda) / \partial \lambda^{(k)} |_{\bblambda_{1:2, k}}$ is the partial derivative of $h(\bblambda)$ w.r.t. the $k$-th entry variable $\lambda^{(k)}$ at the instantiation $\bblambda_{1:2, k}$.
\end{definition} 
\noindent The Lipschitz gradient $\nabla_L h(\bblambda) |_{\bblambda_1, \bblambda_2}$ characterizes the variability of the edge-varying filter frequency response $h(\bblambda)$ between two multivariate frequencies $\bblambda_1$ and $\bblambda_2$, i.e., $h(\bblambda_1) - h(\bblambda_2) = \nabla_L h(\bblambda_1, \bblambda_2) \cdot (\bblambda_1 - \bblambda_2)$, where $\cdot$ is the vector product. It is akin to the derivative of the shift-invariant / eigenvector-sharing filter frequency response $h'(\lambda)$ [cf. \eqref{eq:SIfrequencyResponse}], and allows to define the integral Lipschitz EdgeGF as follows.
\begin{definition}[Integral Lipschitz EdgeGF]\label{def:LipschitzEVGF}
	Consider an EdgeGF with the edge-varying filter frequency response $\{h_i(\bblambda)\}_{i=1}^n$ [cf. \eqref{eq:GeneralizedFrequencyResponse}]. The EdgeGF is integral Lipschitz if there exists a constant $C_L > 0$ such that for any $\bblambda_1$ and $\bblambda_2$, it holds that 
	\begin{align}\label{eq:LipschitzGeneralizedResponse}
		\Big|\frac{\bblambda_1 + \bblambda_2}{2} \cdot \nabla_L h_i(\bblambda) |_{\bblambda_1, \bblambda_2}\Big| \le C_L~\text{for}~i=1,\ldots,n
	\end{align}
    with $\nabla_L h_i(\bblambda) |_{\bblambda_1, \bblambda_2}$ the Lipschitz gradient over $\bblambda_1$ and $\bblambda_2$. 
\end{definition} 
\noindent Integral Lipschitz EdgeGFs restrict the variability of the edge-varying filter frequency response. Analogous to integral Lipschitz SI-EdgeGFs / ES-EdgeGFs [Def. \ref{def:LipschitzFilterFrequencyResponse}], the edge-varying filter frequency response may change fast when $\bblambda$ is instantiated at small values, but is required to change more slowly when $\bblambda$ is instantiated at large values. 

\begin{remark}\label{remark:EdgeGFFrequency}
	For a fixed or given graph $\bbS$, the EdgeGF instantiates each $h_j(\bblambda)$ on $n$ multivariate frequencies $\{\bblambda_i\}_{i=1}^n$ because of the eigenvector misalignment between $\{\bbPhi^{(k)}\}_{k=0}^K$ and $\bbS$ [cf. \eqref{eq:GeneralizedFrequencyResponse5}]. Thus, integral Lipschitz EdgeGFs require each $h_j(\bblambda)$ satisfying the condition \eqref{eq:LipschitzGeneralizedResponse} w.r.t. all instances $\{\bblambda_i\}_{i=1}^n$ rather than a single one $\bblambda_j$.
\end{remark}

\subsection{Stability of EdgeGFs}

We now show the stability of the general EdgeGFs. 
\begin{theorem}[Stability of EdgeGF]\label{theorem:EVfilterStability}
	Consider the EdgeGF $\bbH_{\rm Edge}(\bbx, \bbS)$ with the edge weight matrices $\{\bbPhi^{(k)}\}_{k=0}^K$ [cf. \eqref{eq:EdgeNet0}]. Let $h_i(\bblambda)$ be the edge-varying frequency response that is integral Lipschitz w.r.t. $C_L$ and satisfies $|h_i(\bblambda)| \le 1$ for $i=1,\ldots,n$ [cf. \eqref{eq:LipschitzGeneralizedResponse}], $\bbU^{(k)}$ and $\bbV$ be the eigenvectors of $\bbS$ and $\bbPhi^{(k)}$ that are $\varepsilon$-misaligned for $k=0,\ldots,K$ [Def. \ref{def:misalignment}], and $\widetilde{\bbS}$ be the perturbed graph with perturbation size $\eps$ [cf. \eqref{def:relativePerturbation}]. Then, for any graph signal $\bbx$, it holds that 
	\begin{align}\label{eq:EVFilterStability}
		&\| \bbH_{\rm Edge}(\bbx, \bbS) - \bbH_{\rm Edge}(\bbx, \widetilde{\bbS}) \|_2 \\
		&\le C_{\rm Edge} \| \bbx \|_2 \eps + \ccalO\big(\eps^2\big) + \ccalO(\varepsilon^2), \nonumber
	\end{align}
	where $C_{\rm Edge} = 2\sqrt{n}(1 + 2n\varepsilon) C_L$ is the stability constant.
\end{theorem}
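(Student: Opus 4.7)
The plan is to follow the spectral analysis template of Theorems \ref{theorem:SIEVfilterStability} and \ref{theorem:EI-EVfilterStability}, adapted to the multivariate edge-varying frequency response of Definition \ref{def:generalizedFilterFrequencyResponse}. I start by writing both $\bbH_{\rm Edge}(\bbx,\bbS)$ and $\bbH_{\rm Edge}(\bbx,\widetilde{\bbS})$ in the spectral form of equation \eqref{eq:GeneralizedFrequencyResponse5}, i.e., as a triple sum over the indices $(i,j,\ell)$ that couple the eigenbases $\bbV$ of $\bbS$ and $\bbU^{(k)}$ of the edge weight matrices, weighted by the generalized frequency response $h_j(\bblambda)$ evaluated at the corresponding multivariate frequency instantiations $\bblambda_{i,j,\ell}$, respectively $\widetilde{\bblambda}_{i,j,\ell}$ for the perturbed graph. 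The difference $\bbH_{\rm Edge}(\bbx,\bbS) - \bbH_{\rm Edge}(\bbx,\widetilde{\bbS})$ then decomposes naturally into (a) a frequency-response difference term $h_j(\bblambda_{i,j,\ell})-h_j(\widetilde{\bblambda}_{i,j,\ell})$, and (b) eigenvector-perturbation terms $\bbv_\ell - \widetilde{\bbv}_\ell$ together with perturbed expansion coefficients $\hat v_{ij}^{(k)},\hat u_{j\ell}^{(k)}$.

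Next I would invoke the first-order perturbation analysis for the relative model of Definition \ref{def:relativePerturbation}, as used in the proofs of Theorems \ref{theorem:SIEVfilterStability} and \ref{theorem:EI-EVfilterStability}: the perturbed eigenvalues obey $\widetilde{\lambda}_i = \lambda_i\bigl(1+2[\bbV^\top\bbE\bbV]_{ii}\bigr)+\ccalO(\eps^2)$ and the perturbed eigenvectors satisfy $\widetilde{\bbv}_i = \bbv_i + \bbp_i$ with $\|\bbp_i\|_2=\ccalO(\eps)$. Substituting these into the scaling variables $\beta^{(k)}$ of \eqref{eq:GeneralizedFrequencyResponse45} yields, to first order, frequency-variable perturbations $\widetilde{\lambda}^{(k)}-\lambda^{(k)} = \lambda^{(k)}\cdot 2[\bbV^\top\bbE\bbV]_{ii}+\ccalO(\eps^2)+\ccalO(\eps\varepsilon)$, where the extra $\varepsilon$-dependence arises because the eigenvector perturbations are expressed in the possibly misaligned bases $\bbU^{(k)}$ through the coefficients $\hat v_{ij}^{(k)},\hat u_{j\ell}^{(k)}$, which by Definition \ref{def:misalignment} are $\ccalO(\varepsilon)$ off-diagonally and $\ccalO(1)$ on-diagonally.

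The central step is to bound $|h_j(\bblambda_{i,j,\ell})-h_j(\widetilde{\bblambda}_{i,j,\ell})|$ via the Lipschitz gradient of Definition \ref{def:LipschitzGradient}. Telescoping along the $K$ coordinates of $\bblambda$ gives $h_j(\bblambda)-h_j(\widetilde{\bblambda}) = \nabla_L h_j(\bblambda)|_{\bblambda,\widetilde{\bblambda}}\cdot(\bblambda-\widetilde{\bblambda})$, and pairing this with the midpoint $(\bblambda+\widetilde{\bblambda})/2$ lets me factor out $\bigl|(\bblambda+\widetilde{\bblambda})/2\cdot \nabla_L h_j\bigr|\le C_L$ from the integral Lipschitz property \eqref{eq:LipschitzGeneralizedResponse}; the residual carries the $2[\bbV^\top\bbE\bbV]_{ii}$ factor contributing the leading $\eps$-dependence. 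Summing the triple $(i,j,\ell)$ and applying Cauchy--Schwarz together with the $\varepsilon$-misalignment bounds on $\hat v_{ij}^{(k)},\hat u_{j\ell}^{(k)}$ produces the $\sqrt{n}$ factor and decomposes the prefactor into: a base contribution $2\sqrt{n}C_L$ inherited from Theorem \ref{theorem:SIEVfilterStability}; a term $n\varepsilon$ from the $\bbV$-vs-$\bbU^{(k)}$ misalignment as in Theorem \ref{theorem:EI-EVfilterStability}; and an additional $n\varepsilon$ from the fact that now each of the $K$ shifts propagates through a potentially different $\bbU^{(k)}$, doubling the misalignment budget. Collecting everything yields $C_{\rm Edge}=2\sqrt{n}(1+2n\varepsilon)C_L$ and isolates the cross-terms into $\ccalO(\eps^2)+\ccalO(\varepsilon^2)$.

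The main obstacle will be the telescoping expansion of $h_j(\widetilde{\bblambda})-h_j(\bblambda)$ across all $K$ coordinates while keeping the $\eps$- and $\varepsilon$-contributions cleanly separated. In particular, the scaling variables $\beta^{(k)}$ in \eqref{eq:GeneralizedFrequencyResponse45} are ratios of products of expansion coefficients at consecutive $k$, and one must verify that the first-order expansion remains well defined in the degenerate case $\hat v_{ij}^{(k-1)}\hat u_{j\ell}^{(k-1)}=0$ covered by the convention in the footnote after \eqref{eq:GeneralizedFrequencyResponse45}. A careful bookkeeping of which cross-terms are $\ccalO(\eps\varepsilon)$ versus $\ccalO(\varepsilon^2)$ is needed to justify absorbing them into the stated higher-order residuals rather than inflating the linear-in-$\eps$ constant.
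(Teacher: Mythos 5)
Your high-level accounting of the constant (a base $2\sqrt{n}C_L$, one $n\varepsilon$ from the $\bbV$-versus-$\bbU^{(k)}$ misalignment, and a second $n\varepsilon$ from having to expand back into $\bbV$ because the $\bbU^{(k)}$ differ across shifts) matches the structure of the paper's argument, and you correctly flag the degenerate-denominator issue in \eqref{eq:GeneralizedFrequencyResponse45}. However, the central mechanism you propose is not the one the paper uses, and it has a genuine gap. You plan to diagonalize the perturbed graph $\widetilde{\bbS}$ and invoke first-order eigenstructure perturbation theory, writing $\widetilde{\lambda}_i = \lambda_i(1+2[\bbV^\top\bbE\bbV]_{ii})+\ccalO(\eps^2)$ and $\widetilde{\bbv}_i=\bbv_i+\bbp_i$ with $\|\bbp_i\|_2=\ccalO(\eps)$, and you attribute this step to the proofs of Theorems \ref{theorem:SIEVfilterStability} and \ref{theorem:EI-EVfilterStability}. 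Those proofs do no such thing: they never diagonalize $\widetilde{\bbS}$ at all. They expand $\widetilde{\bbS}^k=(\bbS+\bbE\bbS+\bbS\bbE)^k$ as a polynomial, isolate the terms linear in $\bbE$, namely $\sum_{r}(\bbS^r\bbE\bbS^{k-r}+\bbS^{r+1}\bbE\bbS^{k-r-1})$, and only ever use the eigendecomposition of the \emph{unperturbed} $\bbS$; the algebra then produces exactly the quantity $(\lambda_i+\lambda_j)\bigl(h_j(\lambda_i)-h_j(\lambda_j)\bigr)/(\lambda_i-\lambda_j)$ (and its multivariate analogue via the Lipschitz gradient of Definition \ref{def:LipschitzGradient}), which the integral Lipschitz condition bounds by $2C_L$ directly.

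The gap in your route is that first-order eigenvector perturbation is not gap-free: $\bbp_i$ has the form $\sum_{j\neq i}\bbv_j^\top(\bbE\bbS+\bbS\bbE)\bbv_i\,\bbv_j/(\lambda_i-\lambda_j)$, so the constant hidden in your $\ccalO(\eps)$ depends on $\min_{j\neq i}|\lambda_i-\lambda_j|$ and blows up for clustered or repeated eigenvalues (and the diagonal first-order eigenvalue formula likewise presumes simple eigenvalues). Since every ingredient of your spectral decomposition of $\bbH_{\rm Edge}(\bbx,\widetilde{\bbS})$ --- the GFT coefficients $\hat{x}_i$, the expansion coefficients $\hat{v}^{(k)}_{ij},\hat{u}^{(k)}_{j\ell}$, and the output basis vectors --- is built from the eigenvectors of $\widetilde{\bbS}$, the resulting bound would carry a spectral-gap dependence that is absent from the stated constant $C_{\rm Edge}=2\sqrt{n}(1+2n\varepsilon)C_L$, and the theorem's claim of uniformity over all graphs would be lost. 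To close the gap you should abandon the eigendecomposition of $\widetilde{\bbS}$ and instead follow the operator expansion of $\widetilde{\bbS}^k$ in powers of $\bbE$, expanding only the signal and the action of $\bbPhi^{(k)}$ in the fixed bases $\bbV$ and $\bbU^{(k)}$, which is precisely how the paper keeps the bound gap-free.
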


\begin{proof}
    See Appendix \ref{proof:theorem3}.
\end{proof}

Theorem \ref{theorem:EVfilterStability} shows that the stability constant increases from $2\sqrt{n}(1 + n\varepsilon) C_L$ of the ES-EdgeGFs [cf. \eqref{eq:EI-EVFilterStability}] to $2\sqrt{n}(1 + 2n\varepsilon) C_L$. This is because the edge weight matrices $\bbPhi^{(k)}$ no longer share the same eigenvectors but have different ones $\bbU^{(k)}$. The latter increases the eigenvector misalignment among the edge weight matrices $\bbPhi^{(k)}$ and the underlying graph $\bbS$, which, in turn, amplifies the impact of frequency deviations induced by graph perturbations when propagating through the filter. Moreover, the condition required for the stability analysis is stronger for the general EdgeGFs. Specifically, we now require the Lipschitz continuity of edge-varying filter frequency responses $\{h_i(\bblambda)\}_{i=1}^n$ over the multivariate frequency $\bblambda$, while the ES-EdgeGFs need only the Lipschitz continuity over the univariate frequency $\lambda$. Both aspects emphasize the trade-off between the DoFs of edge weight matrices and the stability of graph neural networks, i.e., arbitrary edge weight matrices without eigenvector restrictions increase the parameter space and improve the representational capacity, but degrade the architecture stability.


{\linespread{1}
	\begin{table*}[]
		
		\scriptsize
		
		\centering
		
		\renewcommand\tabcolsep{3pt}
		
		\caption{Stability of SI-EdgeNets, ES-EdgeNets and EdgeNets to graph perturbations. The constraint on the edge weight matrices reduces and the degrees of freedom increase from SI-EdgeNets, ES-EdgeNets to EdgeNets, indicating an increasing of representational capacity. The stability bound increases likewise, implying a degradation of architecture robustness. It shows an explicit trade-off between stability and representation capacity in GNNs.   
		}
		
		\label{Tab:stabilitySummary}
		
		\begin{tabular}{cccc}
			
			\toprule
			
			\multirow{1}{*}{Architecture} & \multicolumn{1}{c}{Edge weight matrices} & \multicolumn{1}{c}{Stability bound} & \multicolumn{1}{c}{Examples} 
			\\
			
			\midrule
			
			SI-EdgeNet [cf. \eqref{eq:SIEVGF}]    & $\bbPhi \!\in \! \big\{\bbV \bbLambda \bbV^{-1}\!, [{\rm vec}(\bbPhi)]_i \!=\! 0~\text{for}~i \!\in\! \ccalA \big\}$         & $2\sqrt{n}C_L \| \bbx \|_2 \eps \!+\! \ccalO\big(\eps^2\big)$        & GCNN \cite{gama2018convolutional}, ChebNet \cite{defferrard2016convolutional}, JKNet \cite{xu2018representation}, GIN \cite{xupowerful}, etc.  
			\\[5pt]
			
			ES-EdgeNet [cf. \eqref{eq:EIEVGF}]    & $\bbPhi \!\in\! \cup_\bbU \big\{\bbU \bbLambda \bbU^{-1}\!, [{\rm vec}(\bbPhi)]_i \!=\! 0~\text{for}~i \!\in\! \ccalA \big\}$                    & $2\sqrt{n}(1+n\varepsilon) C_L \| \bbx \|_2 \eps \!+\! \ccalO\big(\eps^2\big)$   & NV-GNN \cite{gama2022node}, GAT \cite{velivckovicgraph}, NGN \cite{de2020natural}, etc.      
			\\[5pt]
			
			EdgeNet [cf. \eqref{eq:EdgeNet0}]       & $\bbPhi \!\in \! \big\{[{\rm vec}(\bbPhi)]_i \!=\! 0~\text{for}~i \!\in\! \ccalA \big\}$                    &                   $2\sqrt{n}(1 \!\!+\! 2n\varepsilon\!) C_L \| \bbx \|_2 \eps \!+\! \ccalO\big(\eps^2\!\big) \!+\! \ccalO(\varepsilon^2\!)$     & GCAT \cite{isufi2021edgenets}, Hybrid GNN \cite{isufi2021edgenets}, etc. 
			\\[5pt]
			
			\bottomrule
			
		\end{tabular}
		
\end{table*}}


We see from Theorems \ref{theorem:SIEVfilterStability}-\ref{theorem:EVfilterStability} that not only the magnitude of eigenvector misalignment, i.e., the value of $\varepsilon$, but also the number of misaligned eigenvectors $\{\bbU^{(k)}\}_{k=0}^K$ degrade the stability of the architecture. Put differently, despite the strongest representational capacity, the general EdgeGFs do not necessarily perform better than the SI-EdgeGFs / ES-EdgeGFs under graph perturbations -- see Section \ref{sec:experiments} for experimental corroborations. 

\begin{remark}%
    When the eigenvectors of the edge weight matrices $\{\bbPhi^{(k)}\}_{k=0}^K$ align with those of the underlying graph $\bbS$, i.e., $\bbU^{(k)} = \bbV$ for $k=0,\ldots,K$, the expansion coefficients between $\{\bbU^{(k)}\}_{k=0}^K$ and $\bbV$ equal one. The EdgeGF reduces to the SI-EdgeGF, the edge-varying filter frequency response $h_i(\bblambda)$ in \eqref{eq:GeneralizedFrequencyResponse} reduces to the filter frequency response $h_i(\lambda)$ in \eqref{eq:SIfrequencyResponse}, and the stability result of the EdgeGF [cf. \eqref{eq:EVFilterStability}] recovers that of the SI-EdgeGF [cf. \eqref{eq:SIEVFilterStability}].
\end{remark}

\subsection{Stability of EdgeNets}\label{subsec:stabilityEdgeNet}

The stability of SI-EdgeNets, ES-EdgeNets and EdgeNets inherits from that of their respective filters, which needs an additional assumption on the nonlinearity. 
\begin{assumption}[Lipschitz nonlinearity]\label{asp:nonlinearity}
    The nonlinearity $\sigma(\cdot)$ satisfying $\sigma(0)=0$ is normalized Lipschitz, i.e., for any $a,b \in \mathbb{R}$, it holds that $|\sigma(a) - \sigma(b)| \le |a-b|$.
\end{assumption}
\noindent Assumption \ref{asp:nonlinearity} is commonly satisfied and examples include ReLU and absolute value. The following theorem formally 
analyzes the stability of EdgeNets. 
\begin{theorem}[Stability of EdgeNet]\label{thm:stabilityEdgeNet}
	Consider an EdgeNet $\bbPsi_{\rm Edge}(\bbx,\bbS,\ccalH)$ of $L$ layers and $F$ features. 
	[cf. \eqref{eq:EdgeNet2}-\eqref{eq:EdgeNet3}]. Assume the same setting as in Theorem \ref{theorem:EVfilterStability} and the nonlinearity satisfies Assumption \ref{asp:nonlinearity}. Let $\widetilde{\bbS}$ be the perturbed graph with perturbation size $\eps$ [cf. \eqref{def:relativePerturbation}]. Then, for any graph signal $\bbx$, it holds that
	\begin{align}\label{eq:stabilityEdgeNet}
		&\Big\|\bbPsi_{\rm Edge}(\bbx,\bbS,\ccalH) - \bbPsi_{\rm Edge}(\bbx,\widetilde{\bbS},\ccalH)\Big\|_2 \\
		&\le L F^{L-1} C_{\rm Edge} \|\bbx\|_2 \eps + \ccalO(\varepsilon^2) + \ccalO(\eps^2). \nonumber
	\end{align}
	The stability of SI-EdgeNets or ES-EdgeNets can be obtained by replacing $C_{\rm Edge}$ [cf. \eqref{eq:EVFilterStability}] with $C_{\rm SI}$ [cf. \eqref{eq:SIEVFilterStability}] or $C_{\rm ES}$ [cf. \eqref{eq:EI-EVFilterStability}]. 
\end{theorem}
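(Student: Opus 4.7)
The plan is to follow the standard layer-by-layer induction that lifts a single-filter stability bound to a multi-layer architecture, using Theorem \ref{theorem:EVfilterStability} as a black box for the per-filter deviation and Assumption \ref{asp:nonlinearity} to push the bound through the nonlinearities. Throughout I write $\bbx_\ell^f$ for the features produced on the nominal graph $\bbS$ and $\widetilde\bbx_\ell^f$ for those on $\widetilde\bbS$, with $\bbx_0 = \widetilde\bbx_0 = \bbx$. The goal is to control $\Delta_\ell := \max_f \|\bbx_\ell^f - \widetilde\bbx_\ell^f\|_2$ and the auxiliary feature norm $N_\ell := \max_f \|\bbx_\ell^f\|_2$ by induction on $\ell$, then set $\ell = L$.

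First I would establish the feature-norm recursion. Applying Assumption \ref{asp:nonlinearity} together with $\sigma(0)=0$ to \eqref{eq:EdgeNet2}, and using that $|h_i(\bblambda)|\le 1$ forces $\|\bbH_{{\rm Edge},\ell}^{fg}(\bbx,\bbS)\|_2\le\|\bbx\|_2$ (this is the non-expansive property of the integral-Lipschitz EdgeGF, which is why the same frequency-response bound appears in Theorems \ref{theorem:SIEVfilterStability}, \ref{theorem:EI-EVfilterStability}, \ref{theorem:EVfilterStability}), I get $N_\ell \le F\,N_{\ell-1}$, hence $N_\ell \le F^\ell\|\bbx\|_2$.

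Next I would set up the deviation recursion. Using normalized Lipschitz $\sigma$ and the triangle inequality,
\begin{align*}
\|\bbx_\ell^f - \widetilde\bbx_\ell^f\|_2 \le \sum_{g=1}^F \Big\| \bbH_{{\rm Edge},\ell}^{fg}(\bbx_{\ell-1}^g,\bbS) - \bbH_{{\rm Edge},\ell}^{fg}(\widetilde\bbx_{\ell-1}^g,\widetilde\bbS)\Big\|_2.
\end{align*}
I split each summand by inserting $\pm\,\bbH_{{\rm Edge},\ell}^{fg}(\bbx_{\ell-1}^g,\widetilde\bbS)$:
\begin{align*}
&\Big\|\bbH_{{\rm Edge},\ell}^{fg}(\bbx_{\ell-1}^g,\bbS) - \bbH_{{\rm Edge},\ell}^{fg}(\bbx_{\ell-1}^g,\widetilde\bbS)\Big\|_2 \\
&\quad + \Big\|\bbH_{{\rm Edge},\ell}^{fg}(\bbx_{\ell-1}^g - \widetilde\bbx_{\ell-1}^g,\widetilde\bbS)\Big\|_2.
\end{align*}
The first piece is controlled directly by Theorem \ref{theorem:EVfilterStability} at signal $\bbx_{\ell-1}^g$, giving $C_{\rm Edge}\,\|\bbx_{\ell-1}^g\|_2\,\eps + \ccalO(\eps^2)+\ccalO(\varepsilon^2) \le C_{\rm Edge} F^{\ell-1}\|\bbx\|_2\,\eps + \ccalO(\eps^2)+\ccalO(\varepsilon^2)$ via the $N_{\ell-1}$ bound; the second is bounded by $\|\bbx_{\ell-1}^g - \widetilde\bbx_{\ell-1}^g\|_2$ by the non-expansiveness of the filter on $\widetilde\bbS$ (which is also integral-Lipschitz up to perturbations of size $\eps$). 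Summing over $g$ yields
\begin{align*}
\Delta_\ell \le F\Delta_{\ell-1} + F^{\ell} C_{\rm Edge}\|\bbx\|_2\,\eps + \ccalO(\eps^2)+\ccalO(\varepsilon^2).
\end{align*}

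Finally I unroll this recursion from $\ell=L$ down to $\Delta_0=0$. The homogeneous factor $F$ accumulates multiplicatively while the forcing term $F^{\ell}C_{\rm Edge}\|\bbx\|_2\eps$ at level $\ell$ gets multiplied by $F^{L-\ell}$ on its way up, so each of the $L$ levels contributes $F^{L}C_{\rm Edge}\|\bbx\|_2\eps$; dividing by $F$ (a single output feature is compared) gives the announced $LF^{L-1}C_{\rm Edge}\|\bbx\|_2\eps$ factor, with all second-order terms absorbed into $\ccalO(\eps^2)+\ccalO(\varepsilon^2)$. The SI-EdgeNet and ES-EdgeNet statements follow verbatim by substituting $C_{\rm SI}$ or $C_{\rm ES}$ for $C_{\rm Edge}$, since the only filter-level fact used is the generic stability bound of the form $C\|\bbx\|_2\eps+\mathrm{h.o.t.}$ The step I expect to require the most care is the second-piece inequality above: strictly speaking I need that the filter on $\widetilde\bbS$ is also non-expansive, which requires showing that the integral-Lipschitz hypothesis on the edge-varying frequency response on $\bbS$ transfers to $\widetilde\bbS$ up to the $\ccalO(\eps)$ eigenvalue perturbation, so that the extra error it incurs is swept into the $\ccalO(\eps^2)$ remainder and does not contaminate the leading constant.
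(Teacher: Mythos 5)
Your proposal follows essentially the same route as the paper's proof: the same add-and-subtract decomposition of each filter difference, Theorem \ref{theorem:EVfilterStability} for the graph-perturbation piece, the $|h_i(\bblambda)|\le 1$ non-expansiveness for the signal-propagation piece, and an unrolled layer recursion driven by the $F^{\ell}$ feature-norm growth. The only bookkeeping difference is that the paper sums over features and exploits the single output feature of the last layer to land directly on the exponent $L-1$, which is exactly the adjustment you make informally at the end by ``dividing by $F$.''
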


\begin{proof}
    See Appendix \ref{proof:theorem4}.
\end{proof}
Theorem \ref{thm:stabilityEdgeNet} states the effects of EdgeNet hyperparameters on the architecture stability. Specifically, the stability bound increases with the number of layers $L$ and features $F$. This indicates an extra trade-off between representational capacity and perturbation stability, i.e., a deeper and wider EdgeNet may increase the expressive power but becomes also less stable.\footnote{We consider the filter frequency response is normalized bounded $|h(\lambda)|\le 1$ (or $|h(\bblambda)|\le 1$) and the nonlinearity is normalized Lipschitz [Def. \ref{asp:nonlinearity}] to keep the bounds light on notation. If these constants are not one, they can be easily included in the stability constants following the proofs.} This result holds for any GNNs in the framework of EdgeNets, e.g., GCNNs, NV-GNNs and GATs, where the only difference lies in the stability constant inherited from their respective filters.

\begin{figure*}%
	\centering
	\begin{subfigure}{0.63\columnwidth}
		\includegraphics[width=1.05\linewidth, height = 0.76\linewidth, trim = {0cm 0cm 0cm 1cm}, clip]{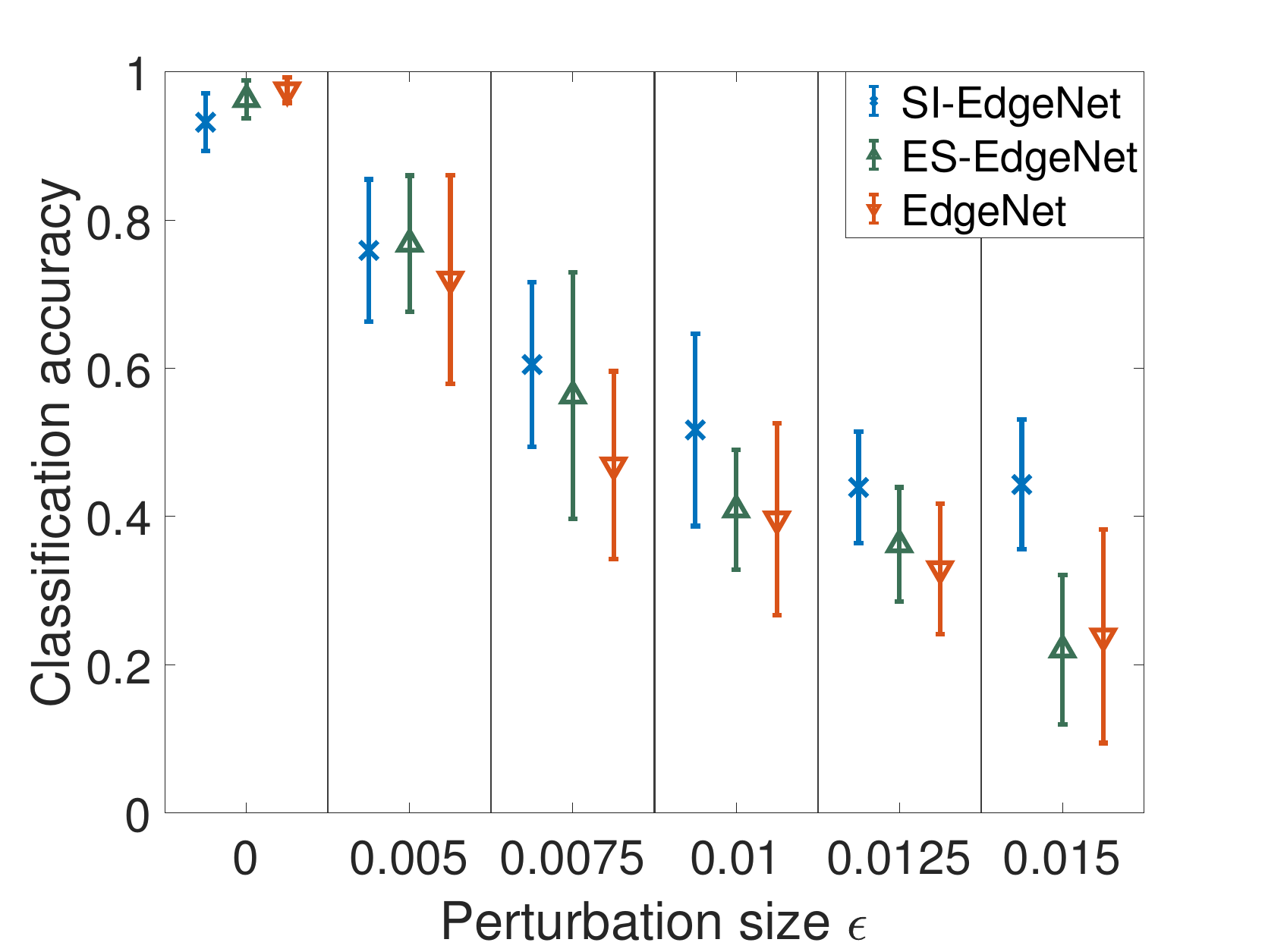}%
		\caption{}%
		\label{subfiga}%
	\end{subfigure}\hfill\hfill%
	\begin{subfigure}{0.63\columnwidth}
		\includegraphics[width=1.05\linewidth,height = 0.76\linewidth, trim = {0cm 0cm 0cm 1cm}, clip]{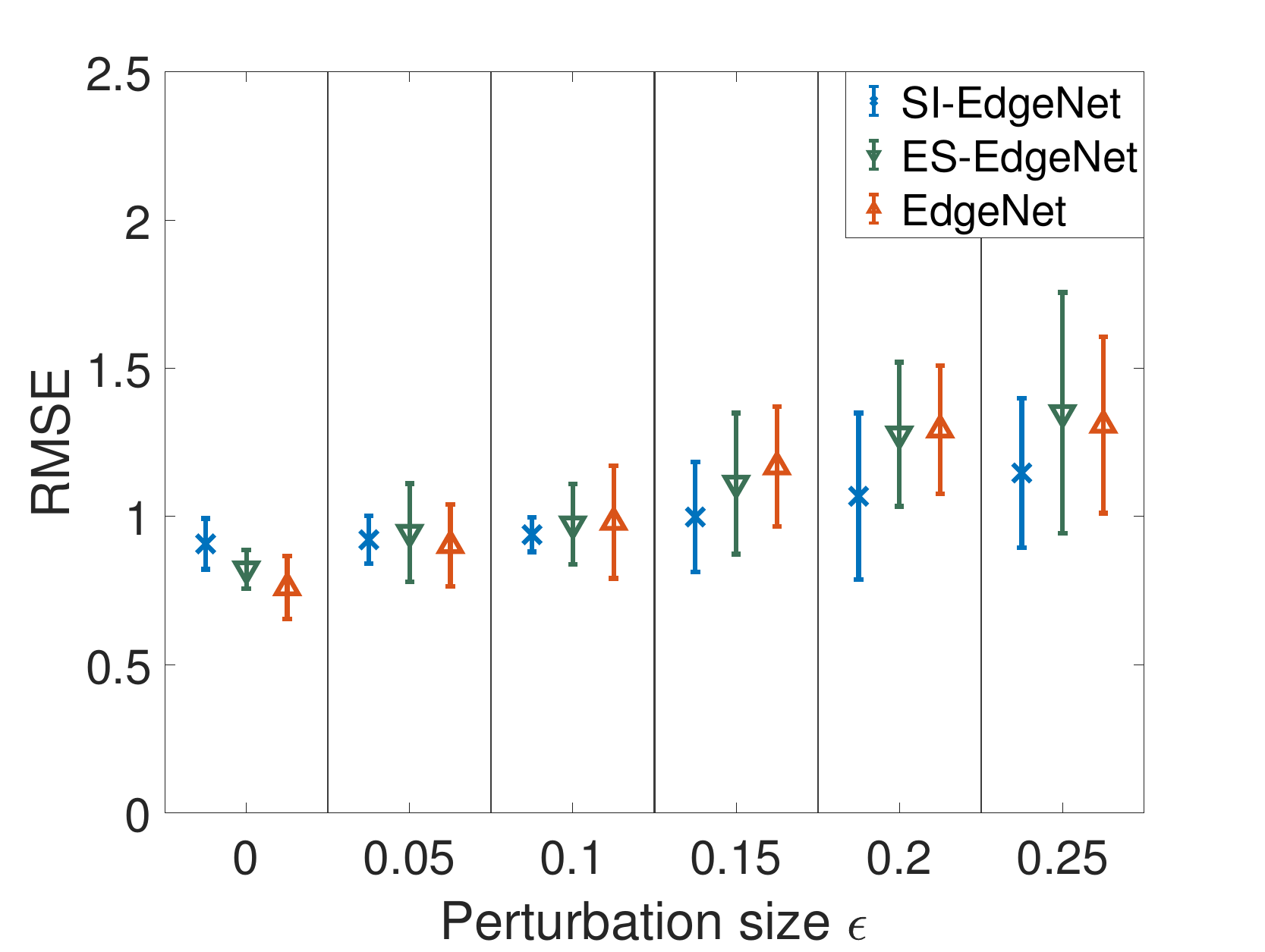}%
		\caption{}%
		\label{subfigb}%
	\end{subfigure}\hfill\hfill%
	\begin{subfigure}{0.63\columnwidth}
		\includegraphics[width=1.05\linewidth,height = 0.76\linewidth, trim = {0cm 0cm 0cm 1cm}, clip]{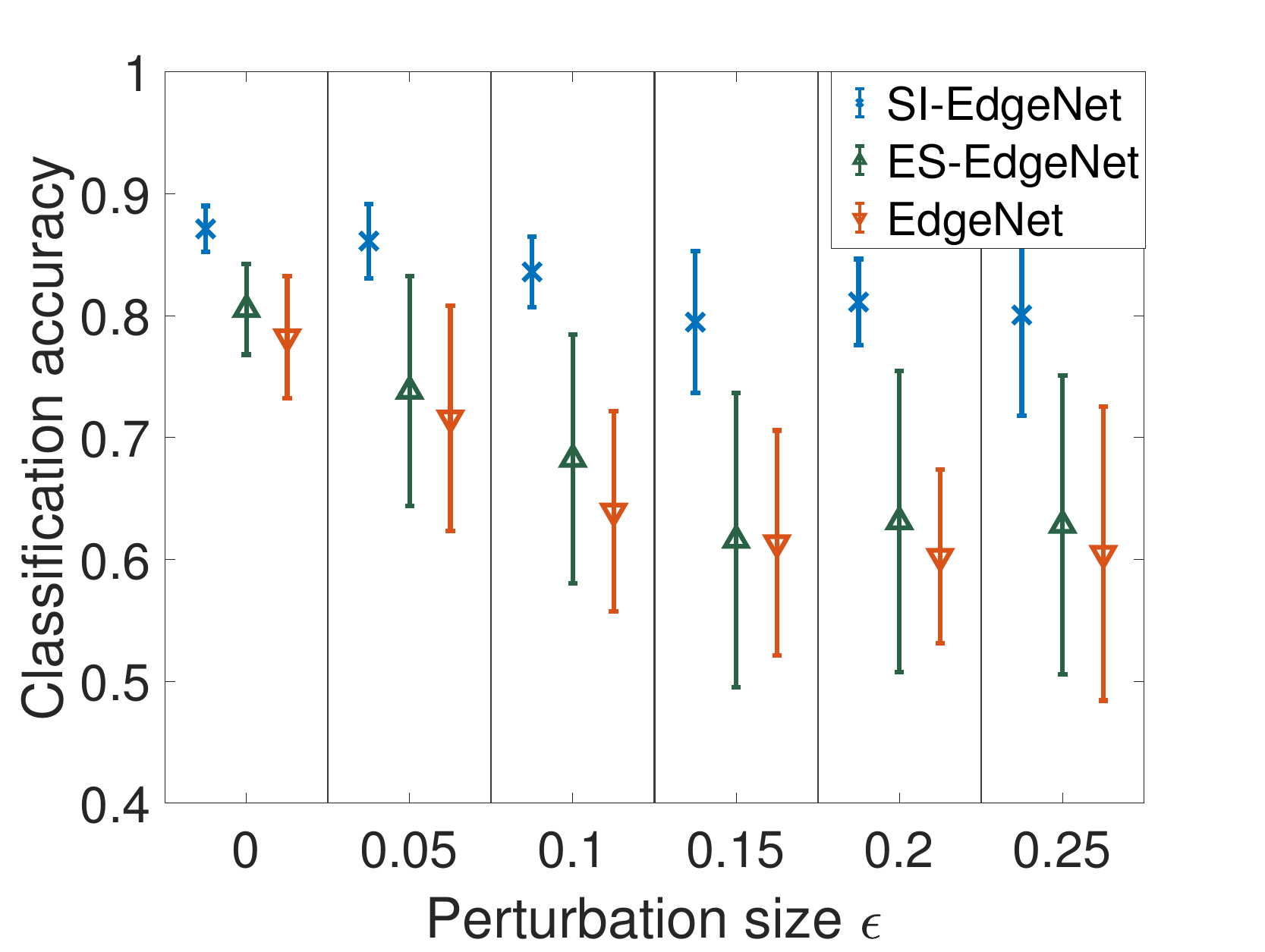}%
		\caption{}%
		\label{subfigc}%
	\end{subfigure}%
	\caption{Performance of SI-EdgeNet, ES-EdgeNet and EdgeNet under perturbations with 
		different sizes in (a) source localization, (b) movie recommendation and (c) authorship attribution.}\label{fig:performance}
\end{figure*} 

\smallskip
\noindent \textbf{Discussion.} SI-EdgeNets require the edge weight matrices $\{\bbPhi^{(k)}\}_{k=0}^K$ to share the same eigenvectors as the underlying graph $\bbS$. ES-EdgeNets relax the constraint by requiring the same eigenvectors among $\{\bbPhi^{(k)}\}_{k=0}^K$ but different from $\bbS$, while EdgeNets allows $\{\bbPhi^{(k)}\}_{k=0}^K$ with arbitrarily different eigenvectors. The DoFs in the architecture parameter space increase from SI-EdgeNets, ES-EdgeNets to EdgeNets, which improve the representational capacity. However, the relaxed constraint results in an eigenvector misalignment between $\{\bbPhi^{(k)}\}_{k=0}^K$ and $\bbS$, and the degree of eigenvector misalignment increases from ES-EdgeNets to the general EdgeNets. This amplifies the effect of spectral deviations caused by topological perturbations, when propagating through the filter, and degrades the architecture stability from two aspects: (i) the stability bound increases from \eqref{eq:SIEVFilterStability}, \eqref{eq:EI-EVFilterStability} to \eqref{eq:EVFilterStability}, and (ii) the sufficient conditions required for the stability analysis become stronger from the Lipschitz property of univariate frequency responses \eqref{eq:SIfrequencyResponse} to multivariate ones \eqref{eq:GeneralizedFrequencyResponse}. These theoretical findings show an explicit trade-off between the representational capacity and the stability against perturbations, and indicate that it is not always good to increase the parameter space, i.e., simpler GNNs may achieve a better trade-off in certain applications. Table \ref{Tab:stabilitySummary} summarizes our results.

\section{Experiments}\label{sec:experiments}

We corroborate our theoretical findings on synthetic and real-world classification and regression problems. In all experiments, we train with the ADAM optimizer of decaying factors $\beta_1\!=\!0.9$ and $\beta_2 \!=\! 0.999$ \cite{kingma2014adam}. 

\subsection{Experimental Setup}

We consider three experiments: source localization, movie recommendation, and authorship attribution. 

\smallskip
\noindent \textbf{Source localization.} The goal is to find the source community where a diffused signal is originated over a stochastic block model (SBM) graph. The graph consists of $100$ nodes equally divided into $10$ communities $\{c_1,\ldots,c_{10}\}$, where the inter- and intra-community edge probability is $0.2$ and $0.8$. The source node $s_i$ of the community $c_i$ is the node with the largest degree for $i=1,\ldots,10$. The source signal is a Kronecker delta originating at one of the sources and diffused at time $t$ as $\bbx_t = \bbS^t \bbx + \bbn_t$ where $\bbS$ is the normalized adjacency matrix and $\bbn_t$ is the zero-mean Gaussian noise. The dataset contains $5000$ samples for training, $250$ for validation and $250$ for testing, where each sample takes the form of $(\bbx_t, c_i)$ with random $t \in \mathbb{Z}_+$ and $i \in \{1,\ldots,10\}$. The classification accuracy is averaged over $10$ graph realizations with $10$ random data splits. 

\begin{figure*}%
	\centering
	\begin{subfigure}{0.63\columnwidth}
		\includegraphics[width=1.05\linewidth, height = 0.76\linewidth, trim = {0cm 0cm 0cm 1cm}, clip]{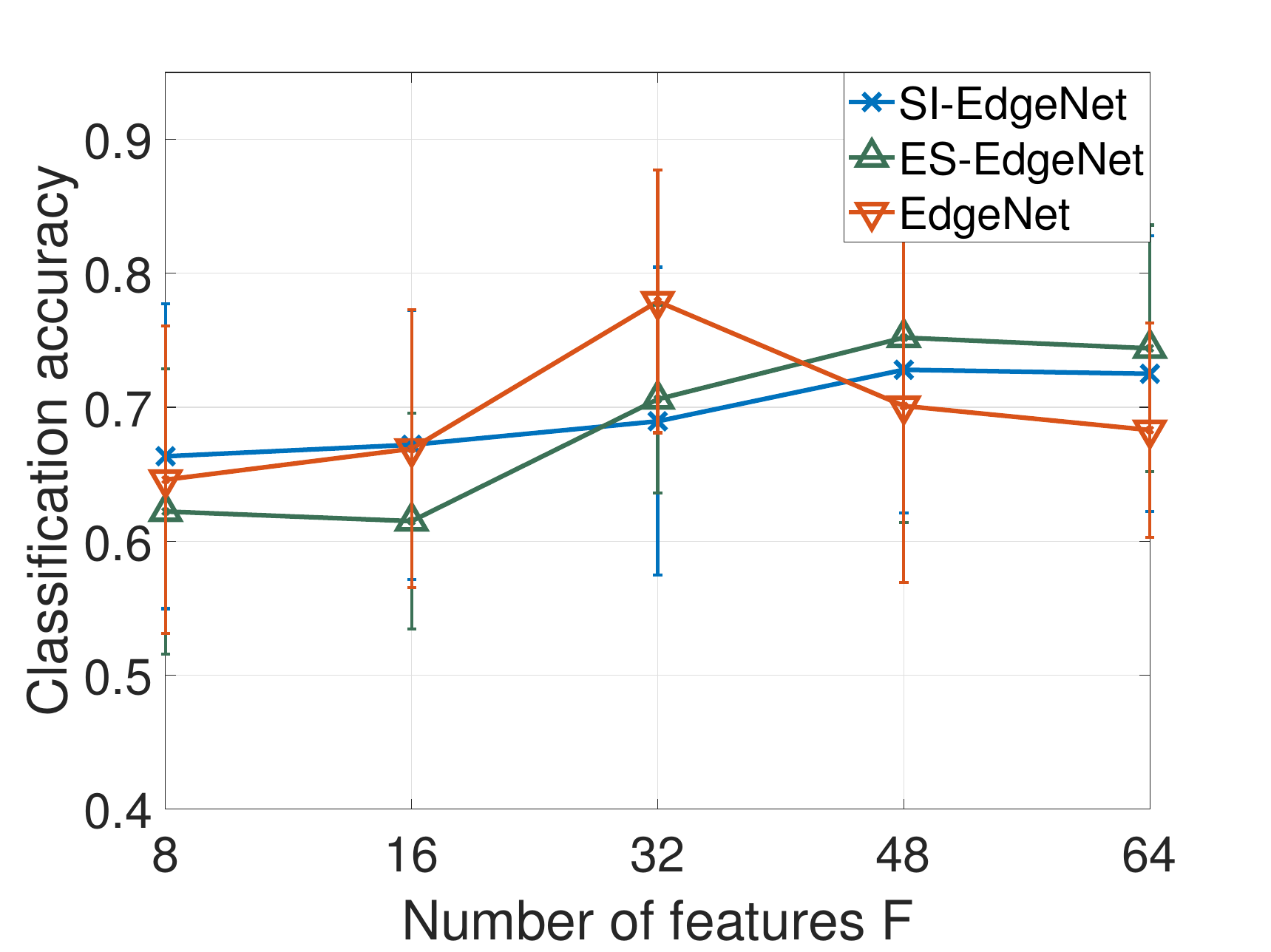}%
		\caption{}%
		\label{subfigF}%
	\end{subfigure}\hfill\hfill%
	\begin{subfigure}{0.63\columnwidth}
		\includegraphics[width=1.05\linewidth,height = 0.76\linewidth, trim = {0cm 0cm 0cm 1cm}, clip]{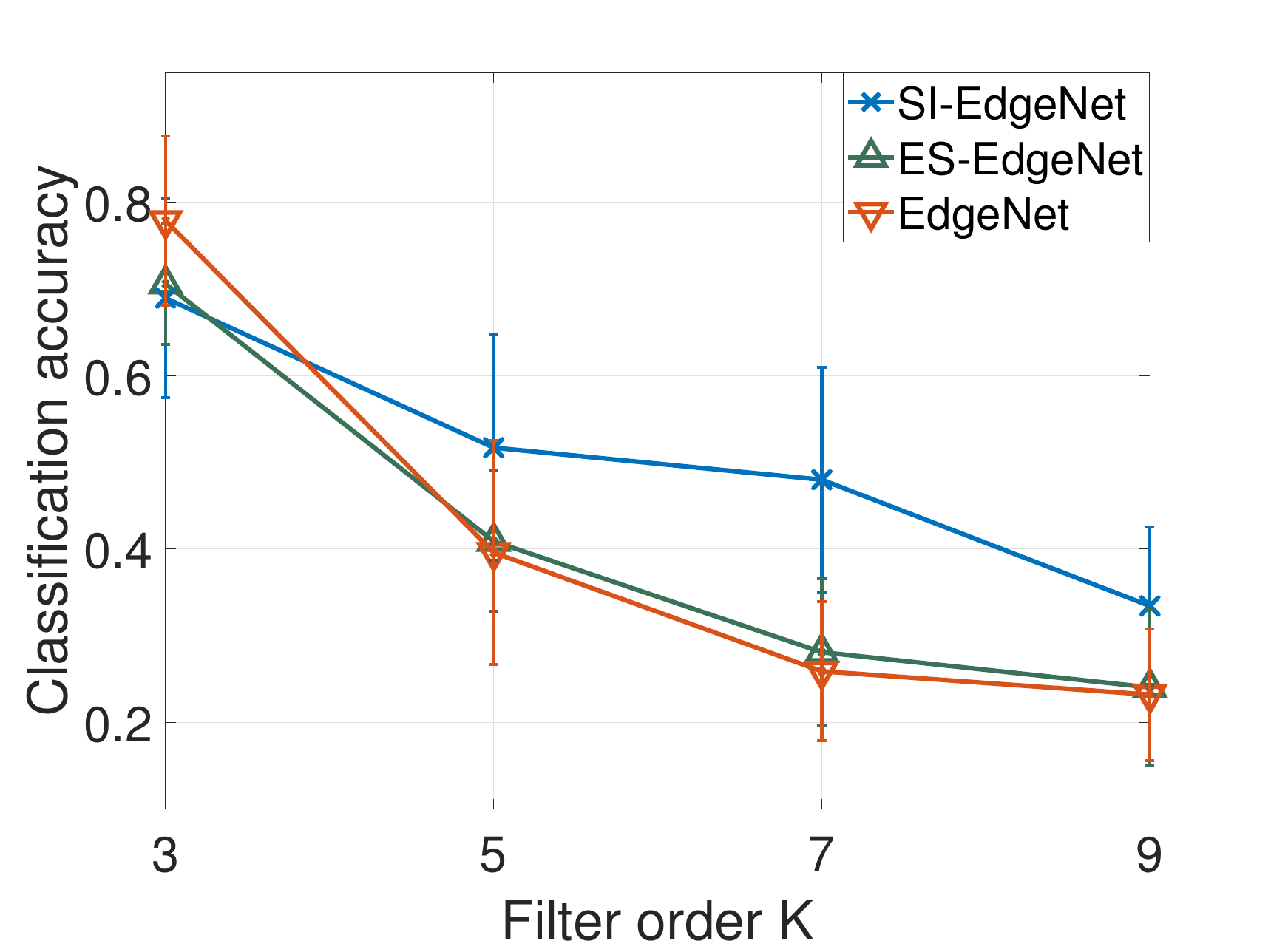}%
		\caption{}%
		\label{subfigK}%
	\end{subfigure}\hfill\hfill%
	\begin{subfigure}{0.63\columnwidth}
		\includegraphics[width=1.05\linewidth,height = 0.76\linewidth, trim = {0cm 0cm 0cm 1cm}, clip]{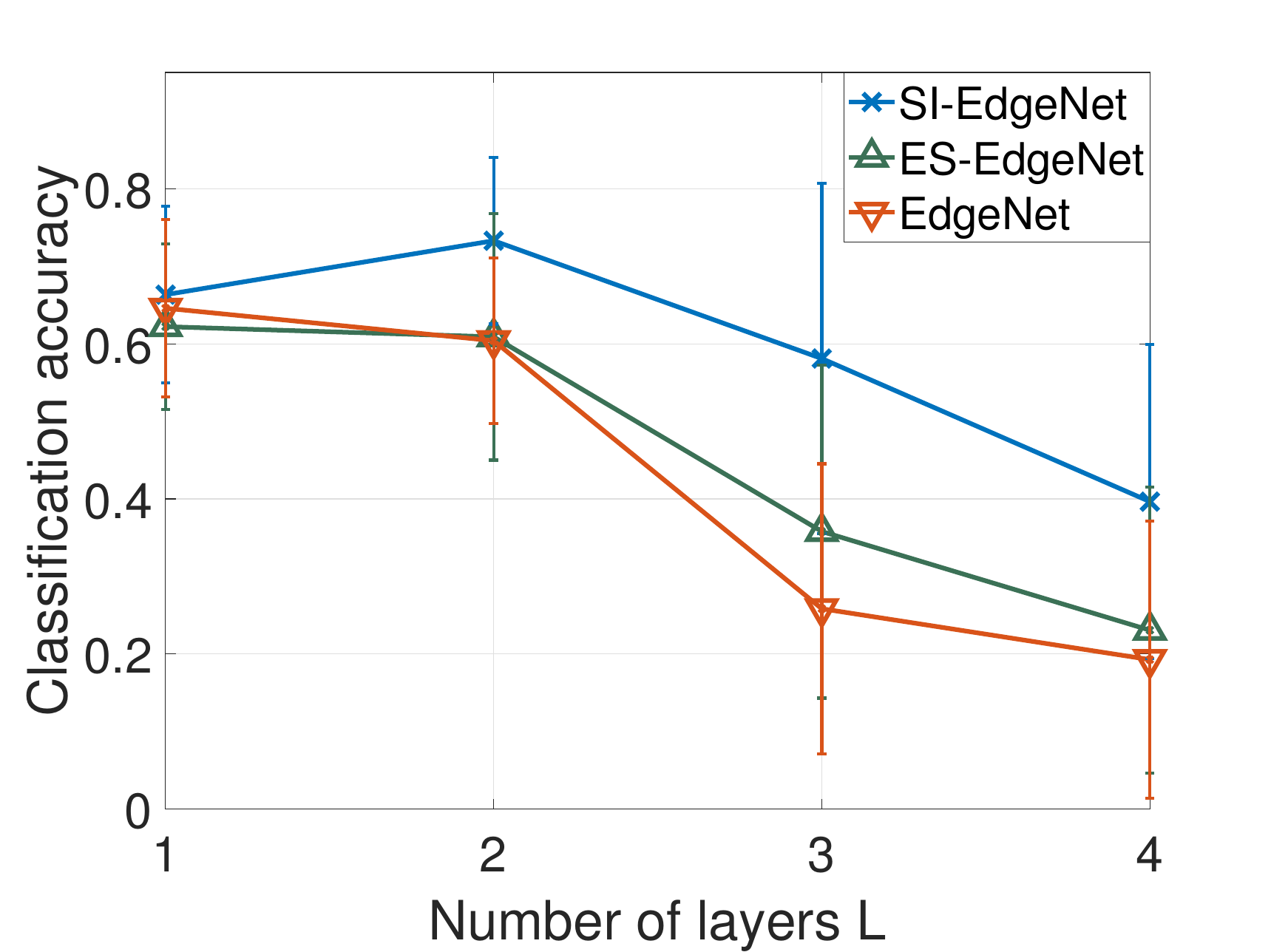}%
		\caption{}%
		\label{subfigL}%
	\end{subfigure}%
	\caption{Performance of SI-EdgeNet, ES-EdgeNet and EdgeNet with different hyperparameters in source localization. (a) Number of features $F$. (b) Filter order $K$. (c) Number of layers $L$.}\label{fig:performanceHyperparameter}
\end{figure*} 

\smallskip
\noindent \textbf{Movie recommendation.} The goal is to predict movie ratings in the MovieLens100K dataset, which consists of $943$ users, $1682$ movies, and $100$K out of the $1.5$M potential ratings \cite{harper2015movielens}. We consider the movie similarity graph with nodes as movies and edges as Pearson similarities, and prune the graph to keep the top-ten most similar connections per node. The graph signal is the movie ratings given by a user, yielding $943$ signals. We are interested in predicting the ratings of the movie \emph{Star Wars}, which is the one with the largest number of ratings given by users. It is equivalent to completing the corresponding column of the user-item rating matrix. We split the dataset as $90\%$ for training and $10\%$ for testing. The performance is measured with the RMSE and averaged over $10$ random splits.

\smallskip
\noindent \textbf{Authorship attribution.} The goal is to classify if a text excerpt belongs to a specific author or any of the other $20$ contemporary authors in the authorship dataset \cite{segarra2015authorship}. We build the graph based on Word Adjacency Networks (WANs), where each node is a function word without semantic meaning (e.g., prepositions, pronouns, etc.) and each edge weight represents the average co-occurence of function words, discounted by the relative distance (i.e. if two words are next to each other, the weight is higher than that if two words are further apart). The graph signal is the frequency count of function words in the text excerpt of $1000$ words. We are interested in classifying the author \emph{Edgar Allan Poe}, and the dataset consists of $740$ samples for training, $64$ for validation and $42$ for testing. The performance is measured with the classification accuracy and averaged over $10$ random splits.

As default, we consider an architecture comprising a cascade of a graph filtering layer with ReLU nonlinearity and a fully connected layer with softmax nonlinearity, where the number of EdgeGFs is $F=32$ and the filter order is $K=5$. Different architectures with varying hyperparameters are evaluated in Section \ref{subsec:hyperparameter}.

\subsection{Stability-Performance Trade-Off}

We evaluate the performance of SI-EdgeNet, ES-EdgeNet and EdgeNet in three experiments with different perturbation sizes $\eps$, and show the results in Fig. \ref{fig:performance}. For the source localization and movie recommendation in Figs. \ref{subfiga}-\ref{subfigb}, EdgeNet outperforms ES-EdgeNet and SI-EdgeNet when there is no graph perturbation, i.e., perturbation size $\eps=0$. We attribute this behavior to its strongest representational capacity. All architectures experience performance degradation under perturbations as expected, which is emphasized with the increase of the perturbation size. SI-EdgeNet performs best under large perturbations because of its strongest stability, while EdgeNet exhibits the worst performance and is most vulnerable to perturbations. ES-EdgeNet performs better than SI-EdgeNet without perturbations and suffers from less performance degradation than EdgeNet under perturbations. This follows our results in Theorems \ref{theorem:SIEVfilterStability}-\ref{theorem:EVfilterStability}, i.e., the stability improves with the alignment between the eigenvectors of edge weight matrices and the underlying graph, highlighting the trade-off between stability and representational capacity. 

For the authorship attribution in Fig. \ref{subfigc}, SI-EdgeNet achieves the best performance without graph perturbation. This is because the representational capacity of SI-EdgeNet is sufficient to solve the binary classification problem of authorship attribution and the increased architecture parameters of ES-EdgeNet or EdgeNet may require more data for training to obtain a better performance; hence, showing their incurred parameters may be redundant for the task. SI-EdgeNet exhibits also the strongest stability under graph perturbations, which aligns with the source localization and movie recommendation results in Figs. \ref{subfiga}-\ref{subfigb}. The result indicates that redundant representational capacity may be unnecessary or even harmful, and that (relatively) simple GNNs can achieve both a good performance and a strong stability in certain applications.

\subsection{Hyperparameter Sensitivity}\label{subsec:hyperparameter}

\begin{figure*}%
	\centering
	\begin{subfigure}{0.63\columnwidth}
		\includegraphics[width=1.05\linewidth, height = 0.76\linewidth, trim = {0cm 0cm 0cm 1cm}, clip]{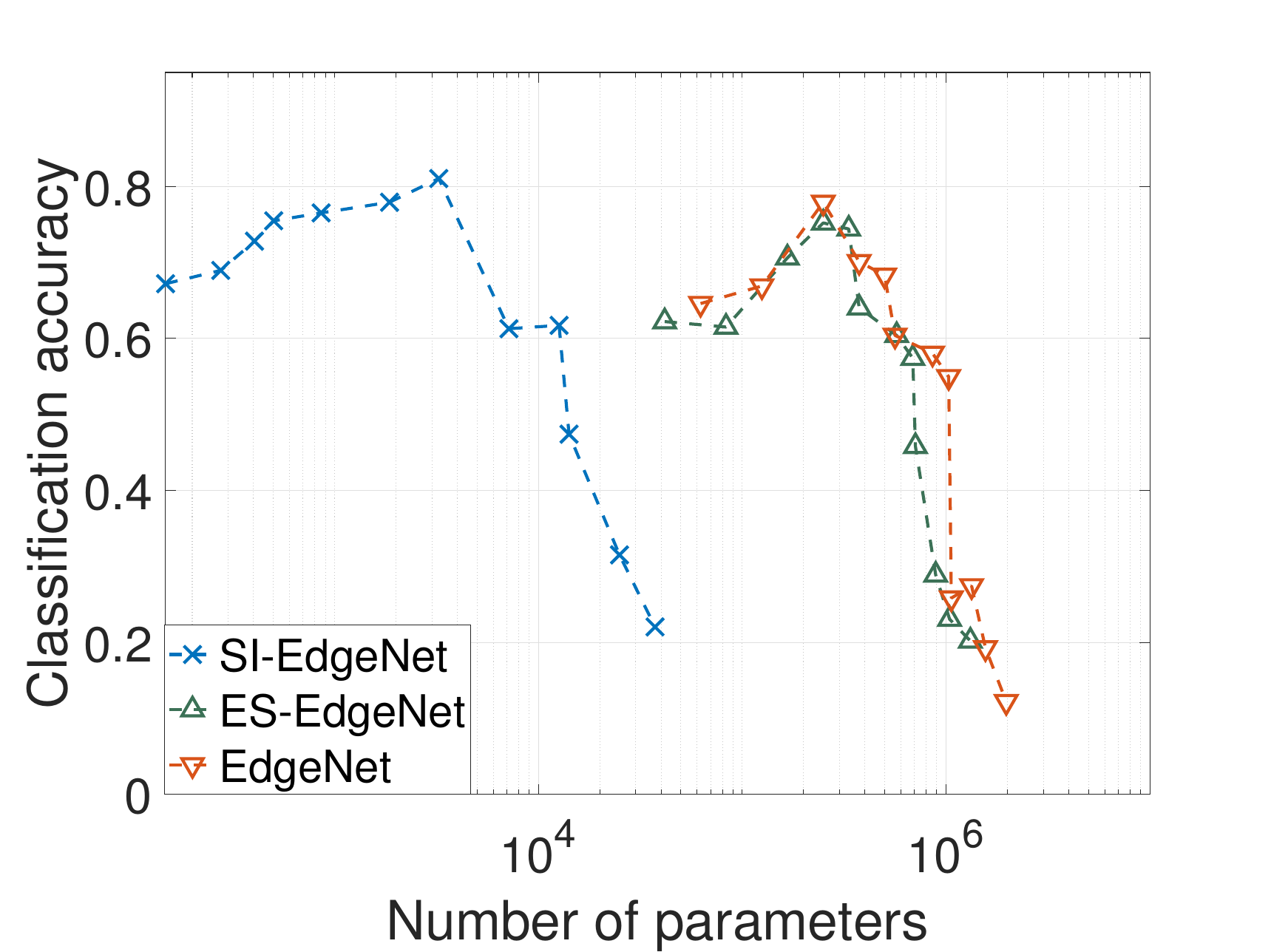}%
		\caption{}%
		\label{subfig4a}%
	\end{subfigure}\hfill\hfill%
	\begin{subfigure}{0.63\columnwidth}
		\includegraphics[width=1.05\linewidth,height = 0.76\linewidth, trim = {0cm 0cm 0cm 1cm}, clip]{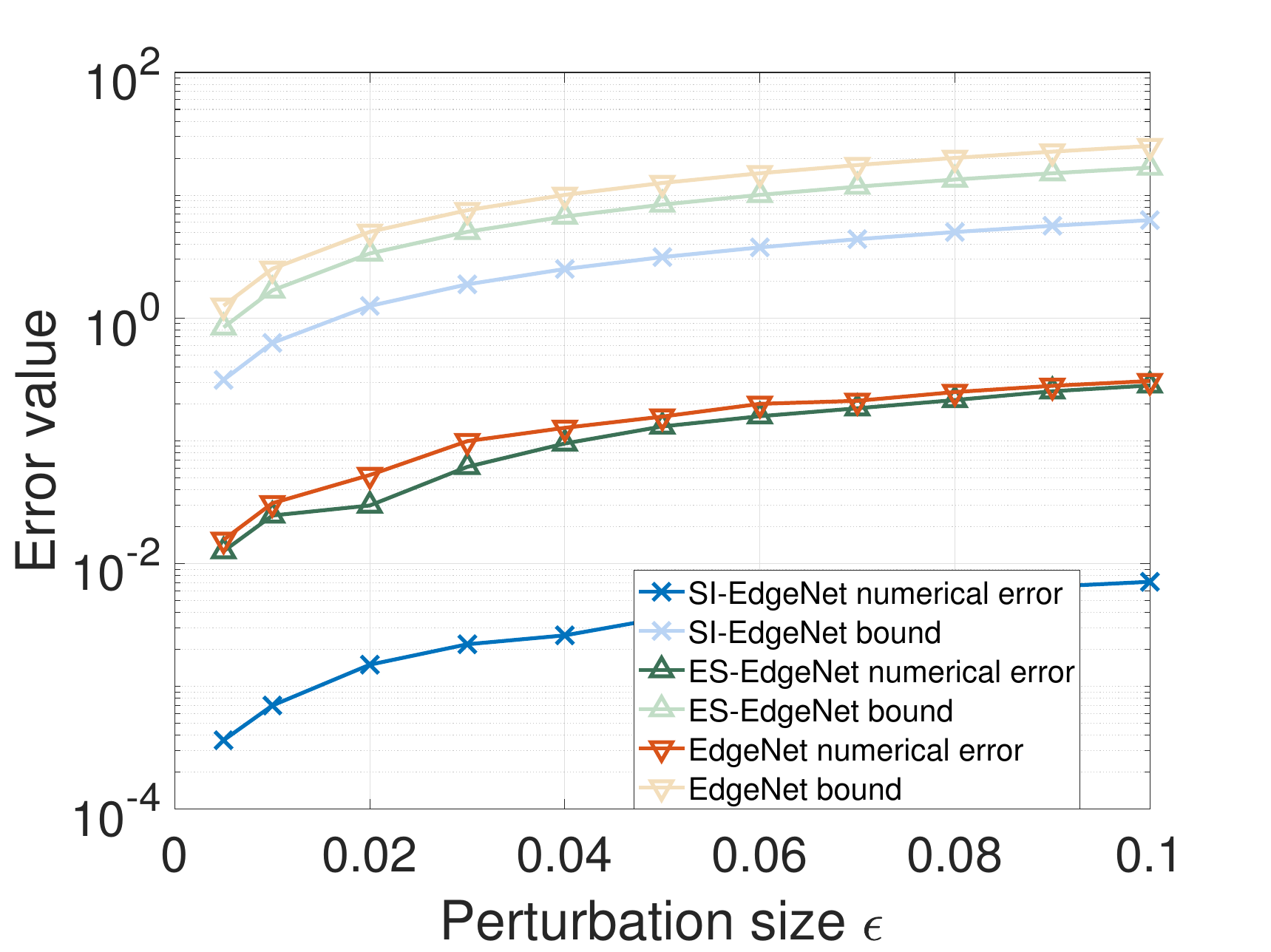}%
		\caption{}%
		\label{subfig4b}%
	\end{subfigure}\hfill\hfill%
	\begin{subfigure}{0.63\columnwidth}
		\includegraphics[width=1.05\linewidth,height = 0.76\linewidth, trim = {0cm 0cm 0cm 1cm}, clip]{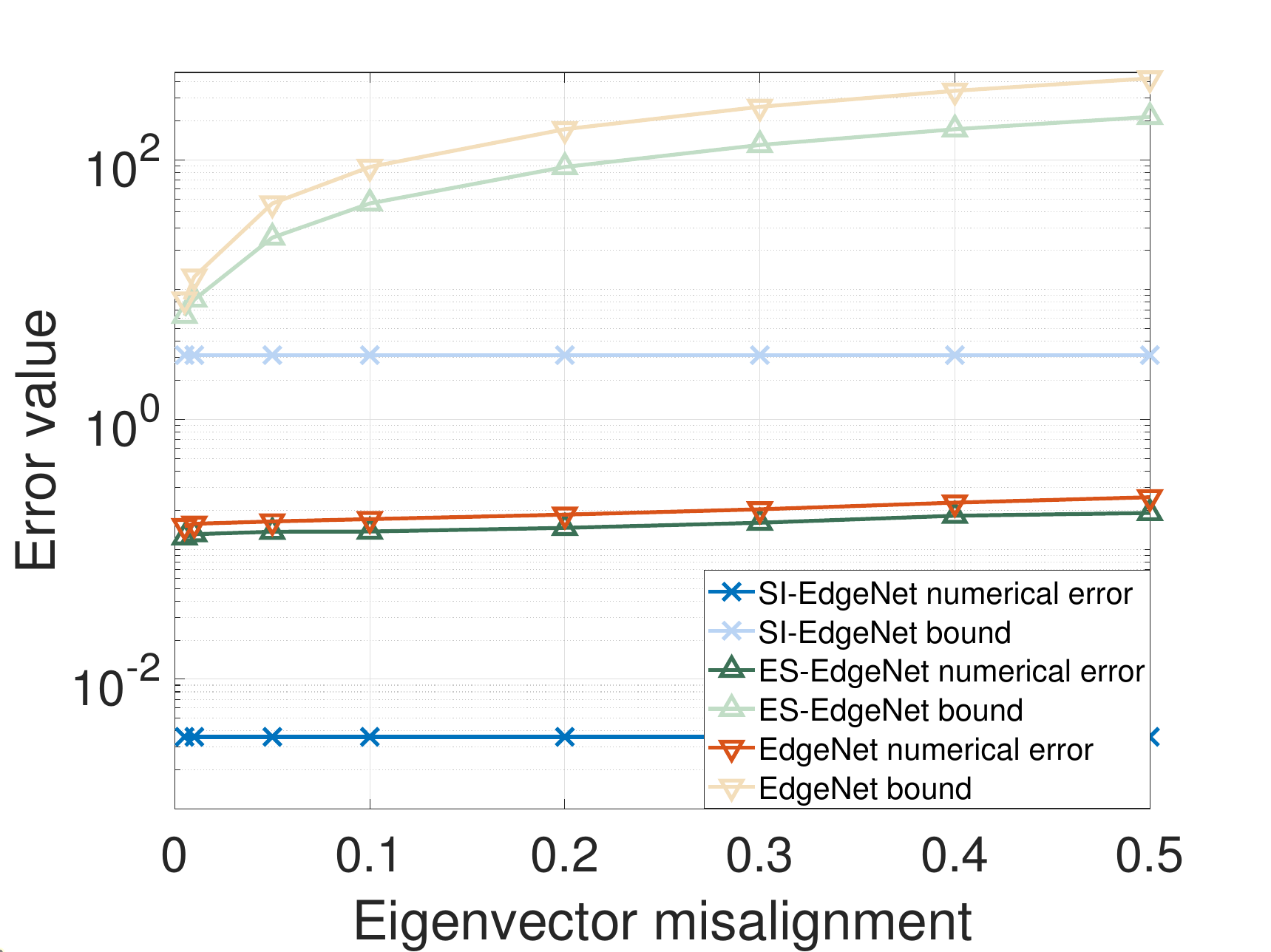}%
		\caption{}%
		\label{subfig4c}%
	\end{subfigure}%
	\caption{(a) Performance of SI-EdgeNet, ES-EdgeNet and EdgeNet with different numbers of parameters. (b) Comparison between numerical error and theoretical bound of SI-EdgeNet, ES-EdgeNet and EdgeNet with different perturbation sizes $\epsilon$. (c) Comparison between numerical error and theoretical bound of SI-EdgeNet, ES-EdgeNet and EdgeNet with different eigenvector misalignment $\varepsilon$.}\label{fig:bound}
\end{figure*} 

Then, we 
study the effect of different hyperparameters on the stability-performance trade-off. To change the representation capacity of the architectures, we vary the number of features $F$ in Fig. \ref{subfigF}, the filter order $K$ in Fig. \ref{subfigK}, and the number of layers $L$ in Fig. \ref{subfigL}. The perturbation size is $0.01$. Fig. \ref{subfigF} shows that the performance of all architectures first increases with the number of features $F$, and then decreases for large $F$. This is because an initial increase in $F$ improves the representational capacity that contributes to a better performance, while increasing it further starts inducing more perturbations than improving the representational capacity and it leads to a less stable architecture as indicated in Theorem \ref{thm:stabilityEdgeNet}. This demonstrates how the stability trades with the representational capacity, i.e., the capacity improvement first dominates the stability degradation and then the domination opposites as $F$ increases. Moreover, the performance of EdgeNet decreases earlier than SI-EdgeNet and ES-EdgeNet since the EdgeNet reaches the sufficient representational capacity with a smaller $F$, and the larger but redundant representational capacity does not improve the performance but degrades the stability. 

Figs. \ref{subfigK}-\ref{subfigL} show that the performance decreases with the filter order $K$ and the number of layers $L$. This can be explained by the fact that a small $K$ or $L$ has achieved a sufficient representational capacity. Further increasing it involves more perturbation effects and results in a larger output difference. This corroborates the findings in Theorem \ref{thm:stabilityEdgeNet}, where more layers include more filters that amplify the perturbation impact and larger filter orders involve higher order terms that increase the Lipschitz constant. From the results in Fig. \ref{fig:performanceHyperparameter}, we see that SI-EdgeNet tends to achieve the best performance with more features and layers, e.g., $F=48$ and $L=2$, while ES-EdgeNet and EdgeNet tend to perform best with less ones, e.g., $F=32$ and $L=1$. This is in line with our expectation that SI-EdgeNet requires more features / layers for sufficient expressive power, while ES-EdgeNet and EdgeNet can achieve sufficient expressive power with less features / layers and are more stable in that circumstance.

Fig. \ref{subfig4a} shows the classification accuracy as a function of the number of trainable parameters. For all architectures, as the number of parameters increases, the performance first increases and then decreases. This is because the representational capacity may not be sufficient for small architectures and the capacity improvement first dominates the stability degradation. However, when the representational capacity reaches a saturated level, the opposite happens and the performance then decreases. These results further corroborate our theoretical findings regarding the trade-off between stability and representational capacity. 

\subsection{Stability Bound Corroboration}\label{subsec:author}

Lastly, we corroborate the stability analysis in Theorems \ref{theorem:SIEVfilterStability}-\ref{thm:stabilityEdgeNet} by comparing the empirical output difference of the EdgeNet and the theoretical bound in \eqref{eq:stabilityEdgeNet} with different perturbation sizes $\eps$ in Fig. \ref{subfig4b} and different eigenvector misalignment $\varepsilon$ in Fig. \ref{subfig4c}. We consider the SBM graph and the EdgeNet of $L=2$ layers, each layer containing $F=2$ filters of order $K=3$ followed by the ReLU nonlinearity. The graph signal is with unitary energy. 

We see that the theoretical analysis yields fair bounds for the numerical output difference, and the results evidence the increase of the output difference, i.e., the degradation of the stability, from SI-EdgeNet, ES-EdgeNet to EdgeNet as shown in Theorems \ref{theorem:SIEVfilterStability}-\ref{theorem:EVfilterStability}. Moreover, it corroborates the impact of the perturbation size $\eps$ and the eigenvector misalignment $\varepsilon$, i.e., the output difference of all architectures increases with $\eps$ and $\varepsilon$. The increasing rate of the output difference with $\eps$ is larger than that with $\varepsilon$, which implies that the perturbation size has a greater effect than the eigenvector misalignment. We also note that the bounds are not tight, essentially because these bounds hold uniformly for all graphs and signals. However, they represent quite well the degradation rate. Tighter bounds could be found if one tailors the stability analysis to a specific graph rather than providing a universal one as we do here. 

\section{Discussion}\label{sec:conclusion}

This paper studied the stability of EdgeNets, a general framework that unifies state-of-the-art GNNs including in particular graph convolutional and attentional networks, among others. We proved that the general EdgeNets are stable to relative perturbations in the support, which confirmed that stability to graph perturbations is a property that holds for a broad class of existing GNN architectures and even for EdgeNet subcategories not explicitly studied before, rather than being limited to some particular cases.

By comparing the conditions required to guarantee stability and the resulting stability bounds, we showed that architectures with more flexibility in their parameter space are less stable. Specifically, we taxonomized the subclasses of EdgeNets with increasing representational capacity as:
\begin{enumerate}
	\item SI-EdgeNets: shift-invariant solutions whose all edge varying parameter matrices share the eigenvectors with the graph shift operator. Graph convolutional neural network is a particular case in this category.
	\item ES-EdgeNets: eigenvector-sharing solutions whose all edge varying parameter matrices share the eigenvectors among themselves but the latter can be different from those of the graph shift operator. Graph attention networks and natural graph networks are particular cases in this category.
	\item EdgeNets: general solutions whose edge varying parameter matrices can have arbitrary eigenvectors.
\end{enumerate}
This taxonomy links to the degrees of freedom of the trainable edge weight matrices and thus, affects their representational capacity. Hinging on the spectral behavior of these categories, we showed that stability could be guaranteed for the SI-EdgeNets with fewer restrictions on the variability of the filter frequency response, and with increasing restrictions on the ES-EdgeNets and the general EdgeNets. The resulting stability bound also increases from SI-EdgeNets, ES-EdgeNets to the general EdgeNets. Consequently, we established a stability-representational capacity trade-off that we corroborated numerically. Our theoretical results and numerical experiments showed that simpler solutions should be preferred over more expressive ones if the graph is uncertain (perturbed) or when the task is simple.

Our study has limitations. First, it is conclusive only for small relative perturbations on the topology but not for more aggressive ones such as link or vertex removal. While extending to random link changes could be feasible following \cite{gao2021stochastic}, the stability of GNNs to vertex removals is challenging to study. Second, we used the degrees of freedom of edge weight matrices as a proxy of the representational capacity to study different GNNs. While this rationale is valid, it is possible to improve the representational capacity with other mechanisms. Nevertheless, the findings of this paper are relevant to understanding the advantages and limitations of GNNs, especially in light of their stability.

Future work could be focused on overcoming the above limitations to study the stability under aggressive vertex removal, where the key but challenging step is to provide the conditions that are needed for stability analysis. Another future work arising from this study is to design GNN solutions that strike a balance between stability and representational capacity. For example, one may penalize undesired properties (e.g., the eigenvector misalignment) when training the general EdgeNets, rather than imposing hard constraints on the edge weight matrices (e.g., shift invariance of GCNNs). All in all, our findings shed light on the inner-working mechanisms of the GNN architectures and could lead to different perspectives when designing new solutions.

\appendices 

\section{Shift Invariant Edge Weight Matrices}\label{appendix:shiftInvariance}

The SI-EdgeGF is invariant to graph shift operations [cf. (5)], i.e., 
\begin{align}\label{eq:SI}
	\bbH_{\rm Edge}(\bbx, \bbS)\bbS = \bbS \bbH_{\rm Edge}(\bbx, \bbS)
\end{align}
for any graph signal $\bbx$. While the shift invariance \eqref{eq:SI} does not hold in general, we show there exists a subset of EdgeGFs that satisfy this property. Specifically, by considering the filter as a matrix [cf. (1)], the condition \eqref{eq:SI} is equivalent to that the eigenvectors of $\bbS$ and $\bbH_{\rm Edge}(\cdot, \bbS)$ coincide. Following \cite{Coutino2019}, we consider the fixed-support matrices that are diagonalizable w.r.t. the eigenvectors $\bbV$ of $\bbS$, i.e., 
\begin{align}\label{eq:fixedSupport}
	\Omega_\bbV^{\ccalA} \!=\! \{ \bbPhi: \bbPhi \!=\! \bbV \bbphi \bbV^{-1}, [{\rm vec}(\bbPhi)]_i \!=\! 0,~\text{for}~i \!\in\! \ccalA \},
\end{align}
where $\ccalA$ is the index set defining the zero entries of $\bbS + \bbI$, $\bbphi$ is diagonal, and ${\rm vec(\cdot)}$ is the vectorization operation. Let $\bbA \in \{0,1\}^{|\ccalA|\times N^2}$ be a selection matrix, whose rows are the rows of an $N^2 \times N^2$ identity matrix indexed by $\ccalA$. The condition in $\Omega_\bbV^{\ccalA}$ is equivalent to 
\begin{align}
	\bbA {\rm vec}(\bbPhi) = \bbA (\bbV^{-\rm T} * \bbV ) \bbomega = \bb0,
\end{align}
where $*$ is the Kathri-Rao product and $\bbomega = [[\bbphi]_{11},...[\bbphi]_{nn}]^\top$ is the eigenvalue vector. This indicates that $\bbomega$ lies in the nullspace of 
$\bbA (\bbV^{-\rm T} * \bbV )$, i.e., $\bbomega \in {\rm null}\{\bbA (\bbV^{-\rm T} * \bbV )\}$. Let $\bbphi_{\bbV}^\ccalA$ be a basis of $\bbA (\bbV^{-\rm T} * \bbV )$. We can represent the matrix set in \eqref{eq:fixedSupport} as
\begin{align}\label{eq:fixedSupport2}
	\Omega_\bbV^{\ccalA} \!=\! \{ \bbPhi\!:\! \bbPhi \!=\! \bbV \text{diag}(\bbphi_{\bbV}^\ccalA \bbalpha) \bbV^{-1}\!,~\text{for}~\bbalpha \!\in\! \mathbb{R}^p\},
\end{align}
where $p$ is the dimension of ${\rm null}\{\bbA (\bbV^{-\rm T} * \bbV )\}$ and $\bbalpha$ is the expansion coefficients of $\bbphi_{\bbV}^\ccalA$. Therefore, $\Omega_\bbV^{\ccalA}$ is the set of matrices that share the same support as $\bbS + \bbI$ and the same eigenvectors as $\bbS$, and the SI-EdgeGF can be difined by using $\Omega_\bbV^{\ccalA}$. The dimension $p$ determines the DoFs of the parameter set $\Omega_\bbV^{\ccalA}$ and thus, the size of the subclass SI-EdgeGF w.r.t. $\bbV$.

\section{Proof of Theorem \ref{theorem:SIEVfilterStability}} \label{proof:thm1}

The output difference of the SI-EdgeGF over the underlying graph $\bbS$ and the perturbed graph $\widetilde{\bbS}$ is given by\footnote{Throughout the proofs, we assume $\sum_{a}^b = 0$ if $b < a$.}
\begin{align}\label{eq:proofThm11}
	\bbH_{\rm SI}(\bbx, \widetilde{\bbS}) - \bbH_{\rm SI}(\bbx, \bbS) = \sum_{k=0}^K \bbPhi^{(k)} (\widetilde{\bbS}^k \bbx - \bbS^k\bbx).
\end{align}
By representing $\widetilde{\bbS} = \bbS + \bbE\bbS + \bbS\bbE$ with $\bbE$ the relative perturbation matrix, we have
\begin{align}\label{eq:proofThm12}
	\widetilde{\bbS}^k &= (\bbS + \bbE\bbS + \bbS\bbE)^k\\
	&= \bbS^k + \sum_{r=0}^{k-1} \bbS^r \bbE \bbS^{k-r} + \sum_{r=0}^{k-1} \bbS^{r+1} \bbE \bbS^{k-r-1} + \bbD, \nonumber
\end{align}
where $\bbD$ is the sum of the other expanding terms. Substituting \eqref{eq:proofThm12} into \eqref{eq:proofThm11} yields
\begin{align}\label{eq:proofThm13}
	&\bbH_{\rm SI}(\bbx, \widetilde{\bbS}) - \bbH_{\rm SI}(\bbx, \bbS) \\
	&= \sum_{k=1}^K \bbPhi^{(k)} \!\sum_{r=0}^{k-1} (\bbS^r \bbE \bbS^{k-r} \!+\! \bbS^{r+1} \bbE \bbS^{k-r-1}) \bbx \!+\! \sum_{k=1}^K \bbPhi^{(k)} \bbD \bbx. \nonumber
\end{align}    
We consider three terms in \eqref{eq:proofThm13} separately.

\textbf{First two terms.} We expand the graph signal $\bbx = \sum_{i=1}^n \hat{x}_i \bbv_i$ with the eigenvectors $\{\bbv_i\}_{i=1}^n$ of $\bbS$. By substituting this expansion into the first two terms in \eqref{eq:proofThm13}, we get
\begin{align}\label{eq:proofThm14}
	&\sum_{k=1}^K \bbPhi^{(k)} \sum_{r=0}^{k-1} (\bbS^r \bbE \bbS^{k-r} \bbx + \bbS^{r+1} \bbE \bbS^{k-r-1}) \bbx \nonumber \\
	&= \sum_{i=1}^n \hat{x}_i \sum_{k=1}^K \bbPhi^{(k)} \sum_{r=0}^{k-1} \big(\lambda_i^{k-r} \bbS^r + \lambda_i^{k-r-1} \bbS^{r+1}\big) \bbE \bbv_i,
\end{align}
where the fact $\bbS \bbv_i = \lambda_i \bbv_i$ is used. For the term $\sum_{k=1}^K \bbPhi^{(k)} \sum_{r=0}^{k-1} \big(\lambda_i^{k-r} \bbS^r + \lambda_i^{k-r-1} \bbS^{r+1}\big)$ in \eqref{eq:proofThm14} and any graph signal $\bba = \sum_{j=1}^n \hat{a}_j \bbv_j$, we have
\begin{align}\label{eq:proofThm15}
	&\sum_{k=1}^K \bbPhi^{(k)} \sum_{r=0}^{k-1} \big(\lambda_i^{k-r} \bbS^r + \lambda_i^{k-r-1} \bbS^{r+1}\big) \bba \\ 
	& = \sum_{j=1}^n \hat{a}_j \sum_{k=1}^K \phi_j^{(k)} \sum_{r=0}^{k-1} \big(\lambda_i^{k-r} \lambda_j^r + \lambda_i^{k-r-1} \lambda_j^{r+1}\big) \bbv_j \nonumber
\end{align}
where the fact $\bbPhi^{(k)} \bbv_j = \phi^{(k)}_j\bbv_j$ is used because $\bbPhi^{(k)}$ shares the same eigenvectors as $\bbS$ and $\phi_{j}^{(k)}$ is the $j$th eigenvalue of the edge-weight matrix $\bbPhi^{(k)}$. We now consider the term $\sum_{k=1}^K \phi_{j}^{(k)} \sum_{r=0}^{k-1} \big(\lambda_i^{k-r} \lambda_j^r + \lambda_i^{k-r-1} \lambda_j^{r+1}\big) \bbv_{j}$ in \eqref{eq:proofThm15}. Given the filter frequency response function $h_j(\lambda)$ as 
\begin{align}\label{eq:proofThm16}
	h_{j}(\lambda) = \sum_{k=0}^K \phi_{j}^{(k)} \lambda^k,~\forall~ j=1,\ldots,n,
\end{align}
we have
\begin{align}\label{eq:proofThm17}
	&\sum_{k=1}^K \phi_{j}^{(k)} \sum_{r=0}^{k-1} \big(\lambda_i^{k-r} \lambda_j^r + \lambda_i^{k-r-1} \lambda_j^{r+1}\big) \\
	&=
	\begin{cases}
		2 \lambda_j h_{j}'(\lambda_{j}), & \mbox{if}~i = j, \\
		(\lambda_i + \lambda_j)\frac{h_{j}(\lambda_i) - h_{j}(\lambda_{j})}{\lambda_i - \lambda_{j}}, & \mbox{if}~i \ne j, \nonumber
	\end{cases}
\end{align}
where $h_{j}'(\cdot)$ is the derivative of $h_j(\cdot)$. By using the integral Lipschitz condition in \eqref{eq:proofThm17}, we get 
\begin{align}\label{eq:proofThm18}
	\Big| \sum_{k=1}^K \phi_{j}^{(k)} \sum_{r=0}^{k-1} \big(\lambda_i^{k-r} \lambda_j^r + \lambda_i^{k-r-1} \lambda_j^{r+1}\big) \Big| \le 2C_{L}.
\end{align}    
By substituting this result into \eqref{eq:proofThm15}, we have
\begin{align}\label{eq:proofThm19}
	&\Big\|\sum_{k=1}^K \bbPhi^{(k)} \sum_{r=0}^{k-1} \big(\lambda_i^{k-r} \bbS^r + \lambda_i^{k-r-1} \bbS^{r+1}\big) \bba\Big\|_2^2 \\
	&=\! \sum_{j=1}^n\! \hat{a}_j^2 \Big|\sum_{k=1}^K \phi_{j}^{(k)} \sum_{r=0}^{k-1} \big(\lambda_i^{k-r} \lambda_j^r + \lambda_i^{k-r-1} \lambda_j^{r+1}\big)\Big|^2 \nonumber \\
	&\le 4 C_L^2 \sum_{j=1}^n \hat{a}_i^2 = 4 C_L^2 \| \bba \|_2^2, \nonumber
\end{align}
where $\bbv^\top \bbv = 1$ is used in the first equality and $\| \bba \|_2^2 = \sum_{j=1}^n \hat{a}_j^2$ are used in the last equality. 
Therefore, we have
\begin{align}\label{eq:proofThm120}
	\Big\|\sum_{k=0}^K \bbPhi^{(k)} \sum_{r=0}^{k-1} \big(\lambda_i^{k-r} \bbS^r + \lambda_i^{k-r-1} \bbS^{r+1}\big) \Big\|_2 \le 2 C_L.
\end{align}
With the triangular inequality, we bound the first two terms in \eqref{eq:proofThm13} as
\begin{align}\label{eq:proofThm121}
	&\Big\|\sum_{k=1}^K \bbPhi^{(k)} \sum_{r=0}^{k-1} \big(\bbS^r \bbE \bbS^{k-r} + \bbS^{r+1} \bbE \bbS^{k-r-1}\big) \bbx\Big\|_2 \\
	&\le\! \sum_{i=1}^n \!|\hat{x}_i| \Big\|\!\sum_{k=1}^K\!\! \bbPhi^{(k)}\! \sum_{r=0}^{k-1}\!\! \big(\lambda_i^{k-r} \bbS^r \!+\! \lambda_i^{k\!-\!r\!-\!1} \bbS^{r\!+\!1}\big)\!\Big\|_2 \|\bbE\|_2 \|\bbv_i\|_2.\nonumber
\end{align}
By substituting \eqref{eq:proofThm120} and the facts $\|\bbE \|_2 \le \eps$, $\|\bbv_i\|_2 = 1$, $\sum_{i=1}^n |\hat{x}_i| = \|\hat{\bbx}\|_1 \le \sqrt{n}\|\hat{\bbx}\|_2 = \sqrt{n}\|\bbx\|_2$ into \eqref{eq:proofThm121}, we obtain
\begin{align}\label{eq:proofThm122}
	&\Big\|\!\sum_{k=1}^K\!\! \bbPhi^{(k)}\! \sum_{r=0}^{k-1}\!(\bbS^r \bbE \bbS^{k\!-\!r} \!\!+\! \bbS^{r\!+\!1} \bbE \bbS^{k\!-\!r\!-\!1}) \bbx\Big\|_2 \!\!\le\! 2 \sqrt{n} C_L \eps \|\bbx\|_2. 
\end{align}

\textbf{Third term.} Since $\bbD$ consists of the expanding terms that contain at least two error matrices $\bbE$ in the multiplication, it holds that
\begin{align}\label{eq:proofThm123}
	\| \bbD \bbx \|_2 \le \| \bbD \|_2 \|\bbx\|_2 \le \ccalO(\eps^2) \|\bbx\|_2.
\end{align}

By substituting \eqref{eq:proofThm122} and \eqref{eq:proofThm123} into \eqref{eq:proofThm13}, we complete the proof
\begin{align}\label{eq:proofThm235}
	\|\bbH_{\rm SI}(\bbx, \widetilde{\bbS}) \!-\! \bbH_{\rm SI}(\bbx, \bbS)\|_2 \!\le\! 2 \sqrt{n} C_L \| \bbx\|_2 \eps \!+\! \ccalO(\eps^2).
\end{align}


\section{Proof of Theorem \ref{theorem:EI-EVfilterStability}} \label{proof:theorem2}

Following \eqref{eq:proofThm11}-\eqref{eq:proofThm13} in the proof of Theorem 1, we have
\begin{align}\label{eq:proofThm21}
	&\bbH_{\rm ES}(\bbx, \widetilde{\bbS}) - \bbH_{\rm ES}(\bbx, \bbS) \\
	&= \sum_{k=1}^K \bbPhi^{(k)}\! \sum_{r=0}^{k-1} \big(\bbS^r \bbE \bbS^{k-r} \!\!+\! \bbS^{r+1} \!\bbE \bbS^{k-r-1}\big) \bbx \!+\! \sum_{k=1}^K \bbPhi^{(k)} \bbD \bbx, \nonumber
\end{align}    
and we consider three terms in \eqref{eq:proofThm21} separately.

\textbf{First two terms.} We can similarly represent the first two terms in \eqref{eq:proofThm21} as [cf. \eqref{eq:proofThm14}] 
\begin{align}\label{eq:proofThm22}
	&\sum_{k=1}^K \bbPhi^{(k)} \sum_{r=0}^{k-1} \big(\bbS^r \bbE \bbS^{k-r} \!+ \bbS^{r+1} \bbE \bbS^{k-r-1}\big) \bbx \\
	&= \sum_{i=1}^n \hat{x}_i \sum_{k=1}^K \bbPhi^{(k)} \sum_{r=0}^{k-1} \big(\lambda_i^{k-r} \bbS^r \!+\! \lambda_i^{k-r-1} \bbS^{r+1}\big) \bbE \bbv_i. \nonumber
\end{align}
For the term $\sum_{k=1}^K \bbPhi^{(k)} \sum_{r=0}^{k-1} \big(\lambda_i^{k-r} \bbS^r \!+\! \lambda_i^{k-r-1} \bbS^{r+1}\big)$ in \eqref{eq:proofThm22} and any graph signal $\bba = \sum_{j=1}^n \hat{a}_j \bbv_j$, we have
\begin{align}\label{eq:proofThm23}
	&\sum_{k=1}^K\! \bbPhi^{(k)}\! \sum_{r=0}^{k-1}\! \big(\lambda_i^{k-r} \bbS^r \!+\! \lambda_i^{k-r-1} \bbS^{r+1}\big) \bba \\
	&=\! \sum_{j=1}^n \hat{a}_j \sum_{k=1}^K \bbPhi^{(k)} \sum_{r=0}^{k-1} \big(\lambda_i^{k-r} \lambda_j^r + \lambda_i^{k-r-1} \lambda_j^{r+1}\big) \bbv_j, \nonumber
\end{align}
where the fact $\bbS^r \bbv_j = \lambda_j^r \bbv_j$ is used. Let $\bbPhi^{(k)} = \bbU \bbphi^{(k)} \bbU^\top$ be the eigendecomposition of the edge weight matrix with orthogonal eigenvectors $\bbU = [\bbu_1,\ldots,\bbu_n]$ and eigenvalues $\bbphi^{(k)} = {\rm diag}(\phi_1^{(k)},\ldots,\phi_n^{(k)})$. By regarding each eigenvector $\bbv_j$ of $\bbS$ as a graph signal and expanding the latter over $\bbU$ as $\bbv_j = \sum_{j_1=1}^n \hat{v}_{jj_1}\bbu_{j_1}$, we have
\begin{align}\label{eq:proofThm24}
	\bbPhi^{(k)}\bbv_j = \sum_{j_1=1}^n \hat{v}_{jj_1} \bbPhi^{(k)}\bbu_{j_1} = \sum_{j_1=1}^n \hat{v}_{jj_1}\phi^{(k)}_{j_1} \bbu_{j_1},
\end{align}
where the fact $\bbPhi^{(k)}\bbu_{j_1} = \phi_{j_1}^{(k)} \bbu_{j_1}$ is used. 
By leveraging \eqref{eq:proofThm24} in \eqref{eq:proofThm23}, we get
\begin{align}\label{eq:proofThm25}
	&\hat{a}_j \sum_{k=1}^K \bbPhi^{(k)} \sum_{r=0}^{k-1}\big(\lambda_i^{k-r} \lambda_j^r + \lambda_i^{k-r-1} \lambda_j^{r+1}\big) \bbv_j \\
	&= \hat{a}_j  \sum_{j_1=1}^n \hat{v}_{jj_1}\sum_{k=1}^K\phi^{(k)}_{j_1} \sum_{r=0}^{k-1} \big(\lambda_i^{k-r} \lambda_j^r \!+\! \lambda_i^{k-r-1} \lambda_j^{r+1}\big) \bbu_{j_1}.  \nonumber
\end{align}   
For the term $\sum_{k=1}^K \phi_{j_1}^{(k)} \sum_{r=0}^{k-1} \big(\lambda_i^{k-r} \lambda_j^r + \lambda_i^{k-r-1} \lambda_j^{r+1}\big)$ in \eqref{eq:proofThm25}, by using \eqref{eq:proofThm17} in the proof of Theorem 1 with the integral Lipschitz condition, we have
\begin{align}\label{eq:proofThm28}
	\Big| \sum_{k=1}^K \phi_{j_1}^{(k)} \sum_{r=0}^{k-1} \big(\lambda_i^{k-r} \lambda_j^r + \lambda_i^{k-r-1} \lambda_j^{r+1}\big) \Big| \le 2 C_L.
\end{align}    
Leveraging \eqref{eq:proofThm25} and the triangular inequality in \eqref{eq:proofThm23} yields
\begin{align}\label{eq:proofThm29}
	&\Big\|\sum_{j=1}^n\hat{a}_j \sum_{k=1}^K \bbPhi^{(k)} \sum_{r=0}^{k-1} \big(\lambda_i^{k-r} \lambda_j^r + \lambda_i^{k-r-1} \lambda_j^{r+1}\big) \bbv_j \Big\|_2 \\
	&\le \Big\|\sum_{j=1}^n \hat{a}_j \hat{v}_{jj}\sum_{k=1}^K\phi^{(k)}_{j} \sum_{r=0}^{k-1} \big(\lambda_i^{k-r} \lambda_j^r + \lambda_i^{k-r-1} \lambda_j^{r+1}\big) \bbu_{j}\Big\|_2 \nonumber \\
	& +\! \Big\|\!\sum_{j=1}^n\! \hat{a}_j  \!\!\sum_{j_1\ne j}\! \hat{v}_{jj_1}\!\sum_{k=1}^K\!\phi^{(k)}_{j_1}\! \sum_{r=0}^{k-1}\! \big(\lambda_i^{k-r} \lambda_j^r \!+\! \lambda_i^{k-r-1} \lambda_j^{r+1}\big) \bbu_{j_1}\!\Big\|_2\!.  \nonumber
\end{align}  
We handle two norms in \eqref{eq:proofThm29} separately. For the first norm, we have 
\begin{align}\label{eq:proofThm210}
	&\Big\|\sum_{j=1}^n \hat{a}_j \hat{v}_{jj}\!\sum_{k=1}^K\phi^{(k)}_{j} \!\sum_{r=0}^{k-1} \big(\lambda_i^{k-r} \lambda_j^r \!+\! \lambda_i^{k-r-1} \lambda_j^{r+1}\big) \bbu_{j}\Big\|_2^2 \\
	& = \sum_{j=1}^n\!\Big|\hat{a}_j \hat{v}_{jj} \sum_{k=1}^K\phi^{(k)}_{j} \sum_{r=0}^{k-1} \big(\lambda_i^{k-r} \lambda_j^r \!+\! \lambda_i^{k-r-1} \lambda_j^{r+1}\big)\Big|^2 \| \bbu_{j} \|_2^2 \nonumber \\
	& \le\! \sum_{j=1}^n\!|\hat{a}_j |^2 \Big|\sum_{k=1}^K\phi^{(k)}_{j} \sum_{r=0}^{k-1} \big(\lambda_i^{k-r} \lambda_j^r + \lambda_i^{k-r-1} \lambda_j^{r+1}\big)\Big|^2 \| \bbu_{j} \|_2^2 \nonumber \\
	& \le 4 C_L^2 \| \bba \|_2^2, \nonumber
\end{align}    
where the facts $|\hat{v}_{jj}| \le 1$ is used in the first inequality and $\| \bbu_{j} \|_2^2 = 1$, \eqref{eq:proofThm28} are used in the second inequality. For the second norm in \eqref{eq:proofThm29}, we have 
\begin{align}\label{eq:proofThm211}
	&\Big\|\sum_{j=1}^n \hat{a}_j \sum_{j_1\ne j} \hat{v}_{jj_1}\sum_{k=1}^K\phi^{(k)}_{j_1} \sum_{r=0}^{k-1} \big(\lambda_i^{k-r} \lambda_j^r \!+\! \lambda_i^{k-r-1} \lambda_j^{r+1}\big) \bbu_{j_1}\Big\|_2^2 \nonumber \\
	& = \sum_{j_1=1}^n\!\! \Big|\!\sum_{j \ne j_1}\! \hat{a}_j \hat{v}_{jj_1}\! \sum_{k=1}^K\!\phi^{(k)}_{j_1}\! \sum_{r=0}^{k-1}\! \big(\lambda_i^{k-r} \lambda_j^r \!+\! \lambda_i^{k\!-\!r-1} \lambda_j^{r+1}\big) \!\Big|^2 \!\!\|\bbu_{j_1}\|_2^2 \nonumber \\
	&\le 4 n (n-1) C_L^2 \varepsilon^2 \| \bba \|_2^2,
\end{align}  
where the $\varepsilon$-misalignment that $|\hat{v}_{jj_1}|\le \varepsilon$, the triangular inequality, $\sum_{j \ne j_1} \hat{a}_j^2 \le \|\bba\|_2^2$ and \eqref{eq:proofThm28} are used in the last inequality. By substituting \eqref{eq:proofThm210} and \eqref{eq:proofThm211} into \eqref{eq:proofThm29} and the latter into \eqref{eq:proofThm23}, we have
\begin{align}\label{eq:proofThm212}
	&\Big\|\!\sum_{k=1}^K\!\! \bbPhi^{(k)}\! \sum_{r=0}^{k-1}\!\! \big(\!\lambda_i^{k-r} \bbS^r \!+\! \lambda_i^{k-r-1} \bbS^{r+1}\!\big) \bba \Big\|_2 \!\!\le\! 2 (1 \!+\! n \varepsilon) C_L \|\bba\|_2. 
\end{align}  
With the triangular inequality, we bound the first two terms in \eqref{eq:proofThm21} as
\begin{align}\label{eq:proofThm213}
	&\big\|\sum_{k=1}^K \bbPhi^{(k)} \sum_{r=0}^{k-1} \big(\bbS^r \bbE \bbS^{k-r} + \bbS^{r+1} \bbE \bbS^{k-r-1}\big) \bbx\big\|_2 \\
	&\le\! 
	\sum_{i=1}^n\! |\hat{x}_i| \Big\|\!\sum_{k=1}^K\! \bbPhi^{(k)}\!\! \sum_{r=0}^{k-1}\! \big(\lambda_i^{k-r} \bbS^r \!+\! \lambda_i^{k-r-1} \bbS^{r+1}\big)\!\Big\|_2 \|\bbE\|_2 \|\bbv_i\|_2. \nonumber
\end{align}
By using \eqref{eq:proofThm212} and the facts $\|\bbE \|_2 \le \eps$, $\|\bbv_i\|_2 = 1$, $\sum_{i=1}^n |\hat{x}_i| = \|\hat{\bbx}\|_1 \le \sqrt{n}\|\hat{\bbx}\|_2 = \sqrt{n}\|\bbx\|_2$ in \eqref{eq:proofThm213}, we get 
\begin{align}\label{eq:proofThm214}
	&\big\|\sum_{k=1}^K \!\bbPhi^{(k)}\! \sum_{r=0}^{k-1}\! \big(\bbS^r \bbE \bbS^{k-r} + \bbS^{r+1} \bbE \bbS^{k-r-1}\big) \bbx\big\|_2 \\
	&\le 2\sqrt{n} (1 + n \varepsilon) C_L \|\bbx\|_2 \eps. \nonumber
\end{align}

\textbf{Third term.} Following \eqref{eq:proofThm123}, it holds that 
\begin{align}\label{eq:proofThm215}
	\| \bbD \bbx \|_2 \le \| \bbD \|_2 \|\bbx\|_2 \le \ccalO(\eps^2) \|\bbx\|_2.
\end{align}

By using \eqref{eq:proofThm214} and \eqref{eq:proofThm215} in \eqref{eq:proofThm21}, we complete the proof 
\begin{align}
	\|\bbH_{\rm ES}(\bbx, \widetilde{\bbS}) \!-\! \bbH_{\rm ES}(\bbx, \bbS)\|_2 \!\le\! 2\sqrt{n}(1\!+\!n\varepsilon) C_L \|\bbx\|_2\eps \!+\! \ccalO(\eps^2). 
\end{align}   


\section{Proof of Theorem \ref{theorem:EVfilterStability}}\label{proof:theorem3}

We follow \eqref{eq:proofThm11}-\eqref{eq:proofThm13} in the proof of Theorem 1 and represent the output difference of the EdgeGF as
\begin{align}\label{eq:proofThm31}
	&\bbH_{\rm Edge}(\bbx, \widetilde{\bbS}) - \bbH_{\rm Edge}(\bbx, \bbS) \\
	&= \!\sum_{k=1}^K \bbPhi^{(k)} \sum_{r=0}^{k-1} (\bbS^r \bbE \bbS^{k-r} \!+\! \bbS^{r+1} \bbE \bbS^{k-r-1}) \bbx \!+\! \sum_{k=1}^K \bbPhi^{(k)} \bbD \bbx. \nonumber
\end{align} 
We consider three terms in \eqref{eq:proofThm31} separately.

\textbf{First two terms.} We can similarly represent the first two terms in \eqref{eq:proofThm31} as [cf. \eqref{eq:proofThm14}] 
\begin{align}\label{eq:proofThm32}
	&\sum_{k=1}^K \bbPhi^{(k)} \sum_{r=0}^{k-1} (\bbS^r \bbE \bbS^{k-r} + \bbS^{r+1} \bbE \bbS^{k-r-1}) \bbx \\
	&=\! \sum_{i=1}^n \hat{x}_i \sum_{k=1}^K \bbPhi^{(k)} \sum_{r=0}^{k-1} \big(\lambda_i^{k-r} \bbS^r + \lambda_i^{k-r-1} \bbS^{r+1}\big) \bbE \bbv_i. \nonumber
\end{align}
For the term $\sum_{k=1}^K \bbPhi^{(k)} \sum_{r=0}^{k-1} \big(\lambda_i^{k-r} \bbS^r + \lambda_i^{k-r-1} \bbS^{r+1}\big)$ in \eqref{eq:proofThm32} and any graph signal $\bba = \sum_{j=1}^n \hat{a}_j \bbv_j$, we have
\begin{align}\label{eq:proofThm33}
	&\sum_{k=1}^K \bbPhi^{(k)} \sum_{r=0}^{k-1} \big(\lambda_i^{k-r} \bbS^r + \lambda_i^{k-r-1} \bbS^{r+1}\big) \bba \\ 
	& = \sum_{j=1}^n \hat{a}_j \sum_{k=1}^K \bbPhi^{(k)} \sum_{r=0}^{k-1} \big(\lambda_i^{k-r} \lambda_j^r + \lambda_i^{k-r-1} \lambda_j^{r+1}\big) \bbv_j. \nonumber
\end{align}
Let $\bbPhi^{(k)} = \bbU^{(k)} \bbphi^{(k)} (\bbU^{(k)})^\top$ be the eigendecomposition of the edge weight matrix with eigenvectors $\bbU^{(k)} = [\bbu^{(k)}_1,\ldots,\bbu^{(k)}_n]$ and eigenvalues $\bbphi^{(k)} = {\rm diag}(\phi_1^{(k)},\ldots,\phi_n^{(k)})$. By regarding each eigenvector $\bbv_j$ of $\bbS$ as a new graph signal, we can expand it over $\bbU^{(k)}$ as $\bbPhi^{(k)}\bbv_j = \sum_{j_1=1}^n \hat{v}_{jj_1}^{(k)} \bbPhi^{(k)}\bbu^{(k)}_{j_1}$ where $\{\hat{v}_{jj_1}^{(k)}\}_{j_1=1}^n$ are expanding coefficients. By further expanding each $\bbu^{(k)}_{j_1}$ of $\bbPhi^{(k)}$ over $\bbV$ as $\bbPhi^{(k)}\bbv_j = \sum_{j_2=1}^n\sum_{j_1=1}^n \hat{v}_{jj_1}^{(k)}\hat{u}^{(k)}_{j_1 j_2} \phi^{(k)}_{j_1} \bbv_{j_2}$ with expanding coefficients $\{\hat{u}^{(k)}_{j_1 j_2}\}_{j_2=1}^n$ and using these expansions, we can represent \eqref{eq:proofThm33} as
\begin{align}\label{eq:proofThm35}
	&\hat{a}_j \sum_{k=1}^K \bbPhi^{(k)} \sum_{r=0}^{k-1} \big(\lambda_i^{k-r} \lambda_j^r + \lambda_i^{k-r-1} \lambda_j^{r+1}\big) \bbv_j \\ 
	& = \hat{a}_j \sum_{j_2\!=\!1}^n\sum_{j_1\!=\!1}^n\sum_{k\!=\!1}^K \hat{v}_{jj_1}^{(k)}\hat{u}^{(k)}_{j_1 j_2}\phi^{(k)}_{j_1} \sum_{r=0}^{k-1}\! \big(\lambda_i^{k\!-\!r} \lambda_j^r \!+\! \lambda_i^{k\!-\!r\!-\!1} \lambda_j^{r\!+\!1}\big) \bbv_{j_2} \nonumber \\
	&=\! \hat{a}_j\!\!  \sum_{j_2\!=\!1}^n\!\sum_{j_1\!=\!1}^n\!\! \hat{v}_{jj_1}^{(1)}\hat{u}^{(1)}_{j_1 j_2}\! \sum_{k\!=\!1}^K \!\phi^{(k)}_{j_1}\!c^{(k)}_{jj_1j_2}\! \sum_{r\!=\!0}^{k\!-\!1}\!\! \big(\lambda_i^{k\!-\!r}\! \lambda_j^r \!\!+\!\! \lambda_i^{k\!-\!r\!-\!1}\!\! \lambda_j^{r\!+\!1}\big) \bbv_{j_2}\!,\nonumber
\end{align} 
where $c^{(k)}_{jj_1j_2} = \hat{v}_{jj_1}^{(k)}\hat{u}^{(k)}_{j_1 j_2}/\hat{v}_{jj_1}^{(1)}\hat{u}^{(1)}_{j_1 j_2}$ is the concise notation and $c^{(0)}_{jj_1j_2} = 1$ by default. We consider the term 
\begin{align}\label{eq:proofThm36}
	\sum_{k=1}^K \phi^{(k)}_{j_1}c^{(k)}_{jj_1j_2} \sum_{r=0}^{k-1} \big(\lambda_i^{k-r} \lambda_j^r + \lambda_i^{k-r-1} \lambda_j^{r+1}\big).
\end{align}
For $K$ terms $\big\{c_{jj_1j_2}^{(k)} \sum_{r=0}^{k-1} \big(\lambda_i^{k-r} \lambda_j^r + \lambda_i^{k-r-1} \lambda_j^{r+1}\big)\big\}_{k=1}^K$ in \eqref{eq:proofThm36}, define two multivariate variables $\bblambda_{1, jj_1j_2} = [\lambda_{1,jj_1j_2}^{(1)}, \ldots, \lambda_{1,jj_1j_2}^{(K)}]^\top$ and $\bblambda_{2, jj_1j_2} = [\lambda_{2,jj_1j_2}^{(1)}, \ldots, \lambda_{2,jj_1j_2}^{(K)}]^\top$, where the $k$th entries are scaled versions of the graph frequency
\begin{align}\label{eq:proofThm37}
	\lambda_{1,jj_1j_2}^{(k)} = \frac{c^{(k)}_{jj_1j_2} \lambda_i}{c^{(k-1)}_{jj_1j_2}},~ \lambda_{2,jj_1j_2}^{(k)} = \frac{c^{(k)}_{jj_1j_2} \lambda_j}{c^{(k-1)}_{jj_1j_2}}
\end{align}
for $k=1,...,K$ with $\lambda_{1,jj_1j_2}^{(0)} \!=\! c^{(0)}_{jj_1j_2} \lambda_i$ and $\lambda_{2,jj_1j_2}^{(0)} \!=\! c^{(0)}_{jj_1j_2} \lambda_j$. We can represent \eqref{eq:proofThm36} with the Lipschitz gradient of the edge-varying filter frequency response 
$\nabla h_{j_1}(\bblambda)|_{\bblambda_{1, jj_1j_2}, \bblambda_{2,jj_1j_2}}$, i.e.,
\begin{align}\label{eq:proofThm38}
	&\nabla h_{j_1, jj_1j_2}\!:=\! (\bblambda_{1, jj_1j_2} + \bblambda_{2, jj_1j_2}) \cdot \nabla h_{j_1}(\bblambda)|_{\bblambda_{1, jj_1j_2}, \bblambda_{2,jj_1j_2}} \nonumber \\
	&= \sum_{k=1}^K \phi^{(k)}_{j_1}c^{(k)}_{jj_1j_2} \sum_{r=0}^{k-1} \big(\lambda_i^{k-r} \lambda_j^r + \lambda_i^{k-r-1} \lambda_j^{r+1}\big),
\end{align}
where $\nabla h_{j_1, jj_1j_2}$ is the concise notation and $\cdot$ is the inner product of vectors [Def. 6]. By substituting \eqref{eq:proofThm38} into \eqref{eq:proofThm35}, we have
\begin{align}\label{eq:proofThm39}
	&\sum_{j=1}^n \hat{a}_j \sum_{k=1}^K \bbPhi^{(k)} \sum_{r=0}^{k-1} \big(\lambda_i^{k-r} \lambda_j^r + \lambda_i^{k-r-1} \lambda_j^{r+1}\big) \bbv_j \\
	&= \sum_{j=1}^n\sum_{j_2=1}^n\sum_{j_1=1}^n \hat{a}_j \hat{v}_{jj_1}^{(1)}\hat{u}^{(1)}_{j_1 j_2} \nabla h_{j_1, jj_1j_2} \bbv_{j_2}. \nonumber
\end{align}   
From the $\varepsilon$-misalignment that $|\hat{v}_{j_1 j_2}| \le \varepsilon$, $|\hat{u}_{j_1 j_2}| \le \varepsilon$ for all $j_1 \ne j_2$ and the triangular inequality, we have
\begin{align}\label{eq:proofThm310}
	&\Big\|\sum_{j=1}^n\hat{a}_j \sum_{k=0}^K \bbPhi^{(k)} \sum_{r=0}^{k-1} \!\big(\lambda_i^{k-r} \lambda_j^r \!+\! \lambda_i^{k-r-1} \lambda_j^{r+1}\big) \bbv_j\Big\|_2 \\
	&\le \Big\|\sum_{j=1}^n \hat{a}_j \hat{v}_{jj}^{(1)}\hat{u}^{(1)}_{j j} \nabla h_{j, jjj} \bbv_{j}\Big\|_2 \nonumber \\
	&+ \Big\|\sum_{j_1=1}^n \sum_{j=1}^n \hat{a}_j \hat{v}_{jj_1}^{(1)}\hat{u}^{(1)}_{j_1 j_1}\nabla h_{j_1, jj_1j_1} \bbv_{j_1}\Big\|_2\nonumber \\
	&+ \Big\|\sum_{j_1=1}^n \sum_{j=1}^n \hat{a}_j \hat{v}_{jj}^{(1)}\hat{u}^{(1)}_{j j_1}\nabla h_{j, jjj_1} \bbv_{j_1} \Big\|_2 + \ccalO(\varepsilon^2),  \nonumber
\end{align} 
where the last term $\ccalO(\varepsilon^2)$ is the sum of the other expanding terms, which contain two misalignment coefficients $\hat{v}_{jj_1}^{(1)}$, $\hat{u}^{(1)}_{j_1 j_2}$ with $j \ne j_1$, $j_1 \ne j_2$ and are therefore on the order of $\varepsilon^2$ for eigenvectors $\{\bbU^{(k)}\}_{k=1}^K$ close to $\bbV$ with a small $\varepsilon$. We then consider three norms in \eqref{eq:proofThm310} separately. 
For the first norm, since $|\hat{v}_{jj}^{(1)}| \le 1$ and $|\hat{u}^{(1)}_{j j}| \le 1$ with the orthonormal eigendecomposition, we have 
\begin{align}\label{eq:proofThm311}
	&\Big\|\sum_{j=1}^n \hat{a}_j \hat{v}_{jj}^{(1)}\hat{u}^{(1)}_{j j}\nabla h_{j, jjj} \bbv_{j}\Big\|_2^2 \\ 
	&=\sum_{j=1}^n\Big|\hat{a}_j \hat{v}_{jj}^{(1)}\hat{u}^{(1)}_{j j} \nabla h_{j, jjj}\Big|^2 \| \bbv_{j} \|_2^2 \nonumber\\
	&\le \sum_{j=1}^n|\hat{a}_j |^2 \big|\nabla h_{j, jjj}\big|^2 \| \bbv_{j} \|_2^2 
	\le 4 C_L^2 \| \bba \|_2^2, \nonumber 
\end{align}    
where the integral Lipschitz property of $h_j(\bblambda)$ [cf. (33)], i.e., $|\nabla h_{j, jjj}| \le 2 C_L$, and $\|\bbv_j\|_2 = 1$ are used in the last inequality. For the second norm, since $|\hat{v}_{jj_1}^{(1)}| \le \varepsilon$ and $|\hat{u}^{(1)}_{j_1 j_1}| \le 1$, we have 
\begin{align}\label{eq:proofThm312}
	&\Big\|\sum_{j_1=1}^n \sum_{j=1}^n \hat{a}_j \hat{v}_{jj_1}^{(1)}\hat{u}^{(1)}_{j_1 j_1}\nabla h_{j_1, jj_1j_1} \bbv_{j_1}\!\Big\|_2^2 \\
	&=\sum_{j_1\!=\!1}^n\! \Big|\!\sum_{j=1}^n\! \hat{a}_j\! \hat{v}_{jj_1}^{(1)}\!\hat{u}^{(1)}_{j_1 j_1}\nabla h_{j_1, jj_1j_1} \!\Big|^2\! \|\bbv_{j_1}\|_2^2 \!\le\! 4 n^2 C_L^2 \varepsilon^2 \| \bba \|_2^2. \nonumber
\end{align}      
A similar result holds for the third term 
\begin{align}\label{eq:proofThm313}
	&\Big\|\!\sum_{j_1=1}^n \sum_{j=1}^n \hat{a}_j \hat{v}_{jj}^{(1)}\hat{u}^{(1)}_{j j_1} \nabla h_{j, jjj_1} \bbv_{j_1} \Big\|_2^2 \!\le\! 4 n^2 C_L^2 \varepsilon^2 \| \bba \|_2^2.
\end{align}        
By using \eqref{eq:proofThm311}, \eqref{eq:proofThm312} and \eqref{eq:proofThm313} in \eqref{eq:proofThm310}, 
we have
\begin{align}\label{eq:proofThm314}
	&\Big\|\sum_{k=1}^K \bbPhi^{(k)} \sum_{r=0}^{k-1} (\lambda_i^{k-r} \bbS^r + \lambda_i^{k-r-1} \bbS^{r+1})\Big\|_2 \\
	&\le 2 (1 + 2 n \varepsilon) C_L + \ccalO(\varepsilon^2). \nonumber
\end{align}
With the triangular inequality, we bound the first two terms as
\begin{align}\label{eq:proofThm315}
	&\Big\|\sum_{k=1}^K \bbPhi^{(k)} \sum_{r=0}^{k-1} \big(\bbS^r \bbE \bbS^{k-r} + \bbS^{r+1} \bbE \bbS^{k-r-1} \big) \bbx \Big\|_2 \\
	&\le \sum_{i=1}^n\! |\hat{x}_i| \Big\|\!\sum_{k=1}^K\! \bbPhi^{(k)}\! \sum_{r=0}^{k-1}\!\! \big(\!\lambda_i^{k-r} \bbS^r \!\!+\!\! \lambda_i^{k-r-1} \bbS^{r+1}\big)\!\Big\|_2 \|\bbE\|_2 \|\bbv_i\|_2. \nonumber
\end{align}
By using \eqref{eq:proofThm314} and the facts $\|\bbE \|_2 \le \eps$, $\|\bbv_i\|_2 = 1$, $\sum_{i=1}^n |\hat{x}_i| = \|\hat{\bbx}\|_1 \le \sqrt{n}\|\hat{\bbx}\|_2 = \sqrt{n}\|\bbx\|_2$ in \eqref{eq:proofThm315}, we have
\begin{align}\label{eq:proofThm316}
	&\Big\|\sum_{k=1}^K \bbPhi^{(k)} \sum_{r=0}^{k-1} \big(\bbS^r \bbE \bbS^{k-r} + \bbS^{r+1} \bbE \bbS^{k-r-1} \big) \bbx \Big\|_2 \\
	& \le 2\sqrt{n}(1 + 2 n \varepsilon) C_L \|\bbx\|_2 \eps + \ccalO(\varepsilon^2). \nonumber
\end{align}

\textbf{Third term.} Following \eqref{eq:proofThm123}, we have
\begin{align}\label{eq:proofThm317}
	\| \bbD \bbx \|_2 \le \| \bbD \|_2 \|\bbx\|_2 \le \ccalO(\eps^2) \|\bbx\|_2.
\end{align}

By using \eqref{eq:proofThm316} and \eqref{eq:proofThm317} in \eqref{eq:proofThm31}, we complete the proof 
\begin{align}\label{eq:proofThm318}
	&\|\bbH_{\rm Edge}(\bbx, \widetilde{\bbS}) - \bbH_{\rm Edge}(\bbx, \bbS)\|_2 \\
	&\le 2 \sqrt{n}(1 + 2 n \varepsilon) C_L \|\bbx\|_2 \eps + \ccalO(\varepsilon^2) + \ccalO(\eps^2). \nonumber
\end{align}


\section{Proof of Theorem \ref{thm:stabilityEdgeNet}} \label{proof:theorem4}

With the underlying graph $\bbS$ and the perturbed graph $\widetilde{\bbS}$, the output difference of the EdgeNet is given by\footnote{We remove the subscript $\rm Edge$ for convenience of expression.}
\begin{align} \label{eqn:thm41}
	&\|\bbPsi(\!\bbx, \bbS, \ccalH)-\bbPsi(\bbx, \widetilde{\bbS}, \ccalH)\|_2 \\
	&= \Big\| \sigma\Big( \sum_{f=1}^{F} \bbH_{L}^{1f}(\bbx^f_{L-1}, \bbS) \Big) - \sigma\Big( \sum_{f=1}^{F} \bbH_{ L}^{1f}(\widetilde{\bbx}^f_{L-1}, \widetilde{\bbS}) \Big) \Big\|_2 \nonumber \\
	&\le \sum_{f=1}^{F}  \big\| \bbH_{L}^{1f}\!(\bbx^f_{L-1}, \bbS) - \bbH_{L}^{1f}(\widetilde{\bbx}^f_{L-1}, \widetilde{\bbS}) \big\|_2, \nonumber
\end{align}
where $\widetilde{\bbx}_{L-1}^f$ is the $f$th feature of the $(L-1)$th layer output over $\widetilde{\bbS}$ instead of $\bbS$, 
and where the last inequality uses the normalized Lipschitz nonlinearity $\sigma(\cdot)$ and the triangle inequality. By adding and substracting $\bbH_{L}^{1f}(\bbx^f_{L-1}, \widetilde{\bbS})$, the term $\big\| \bbH_{L}^{1f}\!(\bbx^f_{L-1}, \bbS) - \bbH_{L}^{1f}(\widetilde{\bbx}^f_{L-1}, \widetilde{\bbS}) \big\|_2$ in \eqref{eqn:thm41} is bounded by
\begin{align} \label{eqn:thm42}
	&\big\| \bbH_{L}^{1f}\!\!(\bbx^f_{L\!-\!1}\!,\! \bbS) \!-\!\! \bbH_{L}^{1f}\!\!(\bbx^f_{L\!-\!1}\!, \!\widetilde{\bbS}) \big\|_2\!\!+\!\!\big\|\bbH_L^{1f}\!\!(\bbx^f_{L\!-\!1}\!,\! \widetilde{\bbS}) \!\!-\!\! \bbH_L^{1f}\!\!(\widetilde{\bbx}^f_{L\!-\!1}\!, \widetilde{\bbS}) \big\|_2  \nonumber\\
	& \le\! \big\| \bbH_L^{1f}\!(\bbx^f_{L\!-\!1}\!,\! \bbS\!) \!\!-\!\!\bbH_L^{1f}\!(\bbx^f_{L\!-\!1}\!,\! \widetilde{\bbS}) \big\|_2 \!\!+\!\! \| \bbH_L^{1f}\!\!(\cdot,\! \widetilde{\bbS}) \|_2 \big\|\bbx^f_{L\!-\!1} \!\!-\!\! \widetilde{\bbx}^f_{L\!-\!1}\! \big\|_2,
\end{align}
where $\bbH_L^{1f}(\cdot, \widetilde{\bbS})$ is the filter matrix over $\widetilde{\bbS}$. 

For the first term in \eqref{eqn:thm42}, by using Theorem 3, we can bound it as 
\begin{align} \label{eqn:thm43}
	&\big\| \bbH_L^{1f}(\bbx^f_{L-1}, \bbS) -\bbH_L^{1f}(\bbx^f_{L-1}, \widetilde{\bbS}) \big\|_2 \\
	&\le C_{\rm Edge} \|\bbx^f_{L-1}\|_2 \eps + \ccalO(\varepsilon^2) + \ccalO(\eps^2), \nonumber
\end{align}
where $C_{\rm Edge} = 2\sqrt{n}(1 + 2 n \varepsilon) C_L$ is the stability constant. For the term $\|\bbx^f_{L-1}\|_2$, we observe that 
\begin{align} \label{eqn:thm44}
	&\|\bbx^f_{L-1}\|_2 = \Big\| \sigma\Big( \sum_{g=1}^{F} \bbH_{L-1}^{fg}(\bbx^g_{L-2}, \bbS)\Big) \Big\|_2 \\
	&\le \sum_{g=1}^{F} \big\| \bbH_{L-1}^{fg}(\bbx^g_{L-2}, \bbS) \big\|_2 
	\le \sum_{g=1}^{F}\! \big\| \bbx^g_{L-2} \big\|_2, \nonumber 
\end{align}
where we use the triangle inequality, followed by the bound on filters, i.e., $|h^{fg}_{L-1}(\bblambda)| \le 1$. By continuing this recursion, we get 
\begin{align} \label{eqn:thm45}
	\|\bbx^f_{L-1}\|_2 \le F^{L-2} \big\| \bbx^1_0 \big\|_2 = F^{L-2} \| \bbx \|_2
\end{align}
with $\| \bbx^1_0 \|_2 = \| \bbx \|_2$ by definition. By substituting this result into \eqref{eqn:thm43}, we have 
\begin{align} \label{eqn:thm46}
	&\big\| \bbH_L^{1f}(\bbx^f_{L-1}, \bbS) -\bbH_L^{1f}(\bbx^f_{L-1}, \widetilde{\bbS}) \big\|_2 \\
	&\le F^{L-2} C_{\rm Edge} \| \bbx \|_2 \eps + \ccalO(\varepsilon^2) + \ccalO(\eps^2). \nonumber
\end{align}

For the second term in \eqref{eqn:thm42}, using the filter bound again yields
\begin{equation} \label{eqn:thm47}
	\| \bbH_L^{1f}(\cdot, \widetilde{\bbS}) \|_2 \big\|\bbx^f_{L-1} - \widetilde{\bbx}^f_{L-1} \big\|_2 \le \big\|\bbx^f_{L-1} - \widetilde{\bbx}^f_{L-1} \big\|_2.
\end{equation}
By substituting \eqref{eqn:thm46} and \eqref{eqn:thm47} into \eqref{eqn:thm42} and the latter into \eqref{eqn:thm41}, we get
\begin{align} \label{eqn:thm48}
	&\|\bbPsi(\bbx, \bbS, \ccalH)\!-\!\bbPsi(\bbx, \widetilde{\bbS}, \ccalH)\|_2 \!\le\! \sum_{f=1}^{F}\big\|\bbx^f_{L-1} \!-\! \widetilde{\bbx}^f_{L-1} \big\|_2 \nonumber \\
	&+ F^{L-1} C_{\rm Edge} \| \bbx \|_2 \eps + \ccalO(\varepsilon^2) + \ccalO(\eps^2).
\end{align}
We see the recursion that the output difference of the $L$th layer depends on that of the $(L-1)$th layer. By repeating this recursion until the input layer and using the initial condition $\| \bbx_0^1 - \widetilde{\bbx}_0^1 \| = \| \bbx - \bbx \|=0$, we complete the proof
\begin{align} \label{eqn:thm49}
	&\|\bbPsi(\bbx, \bbS, \ccalH)\!-\!\bbPsi(\bbx, \widetilde{\bbS}, \ccalH)\|_2 \\
	&\le L F^{L-1} C_{\rm Edge} \| \bbx \|_2 \eps + \ccalO(\varepsilon^2) + \ccalO(\eps^2). \nonumber
\end{align}


\bibliographystyle{IEEEbib}
\bibliography{myIEEEabrv,biblioOp}

\end{document}